\documentclass[letterpaper]{article}
\usepackage{arxiv}
\usepackage[utf8]{inputenc}
\usepackage[T1]{fontenc}
\usepackage{hyperref}
\usepackage{url}
\usepackage{booktabs}
\usepackage{amsfonts}
\usepackage{nicefrac}
\usepackage{microtype}
\usepackage{lipsum}
\usepackage{graphicx}
\usepackage{natbib}
\usepackage{doi}
\usepackage{subcaption}
\usepackage{pdflscape}
\usepackage{amsmath,amssymb,amsthm}
\usepackage{algorithm}
\usepackage{multirow}
\usepackage{algpseudocode}
\usepackage{epstopdf}
\usepackage{tikz}
\usepackage{pgfplots}
\usepackage{booktabs}
\usepackage{multirow}
\usepackage{amsmath}
\usepackage{mathtools}
\usepackage{floatpag}
\mathtoolsset{showonlyrefs}
\usepgfplotslibrary{fillbetween}
\usetikzlibrary{arrows.meta}
\DeclareMathOperator*{\argmin}{arg\,min}

\newcommand{\x}{\mathbf{x}}

\newcommand{\A}{\mathbf{A}}

\newcommand{\y}{\mathbf{y}}

\newcommand{\R}{\mathbb{R}}

\def\tsc#1{\csdef{#1}{\textsc{\lowercase{#1}}\xspace}}
\tsc{WGM}
\tsc{QE}

\newtheorem{theorem}{Theorem}
\newtheorem{proposition}{Proposition}
\newtheorem{definition}{Definition}

\newtheorem{remark}{Remark}
\newtheorem{assumption}{Assumption}

\title{Self-Tuning Regularization for Image Scanning Microscopy}

\date{} 					

\author{
	{Sofia Agostoni} \\
	MaLGa Center, DIBRIS, University of Genoa\\
	Genoa, Italy \\
	\texttt{sofia.agostoni@edu.unige.it} \\
	\And
	{Lisa Cuneo} \\
	MMS, Istituto Italiano di Tecnologia (IIT)\\
    CSML, Istituto Italiano di Tecnologia (IIT)\\
	Genoa, Italy \\
    \texttt{lisa.cuneo@iit.it}
	\And
	{Christian Daniele} \\
	MaLGa Center, DIBRIS, University of Genoa\\
	Genoa, Italy \\
    \texttt{christian.daniele@edu.unige.it}
	\And
	{Giacomo Garré} \\
	MMS, Istituto Italiano di Tecnologia (IIT)\\
	Genoa, Italy \\
    \texttt{giacomo.garrè@iit.it}
	\And
	{Laurent Le} \\
	MMS, Istituto Italiano di Tecnologia (IIT)\\
	Genoa, Italy \\
    \texttt{laurent.le@iit.it}
	\And
	{Alessandro Zunino} \\
	MMS, Istituto Italiano di Tecnologia (IIT)\\
	Genoa, Italy \\
    \texttt{alessandro.zunino@iit.it}
	\And
	{Giuseppe Vicidomini} \\
	MMS, Istituto Italiano di Tecnologia (IIT)\\
	Genoa, Italy \\
    \texttt{giuseppe.vicidomini@iit.it}
	\And
	{Luca Calatroni} \\
	MaLGa Center, DIBRIS, University of Genoa\\
	MMS, Istituto Italiano di Tecnologia (IIT)\\
	Genoa, Italy \\
    \texttt{luca.calatroni@unige.it}
}

 \renewcommand{\shorttitle}{\textit{arXiv} preprint}

\begin{document}

\let\WriteBookmarks\relax
\def\floatpagepagefraction{1}
\def\textpagefraction{.001}





\maketitle


\begin{abstract}
Image Scanning Microscopy (ISM) and its super-resolution sectioning extension, s$^2$ISM, enable photon-efficient super-resolved fluorescence imaging by combining detector-array acquisition with computational reconstruction. Their standard reconstruction methods, Multi-Image Deconvolution (MID) and s$^2$ISM inversion, rely on Richardson–Lucy-type iterations and therefore suffer from semi-convergence: while early iterations recover fine spatial details, later ones amplify noise and artifacts. In practice, reconstruction quality depends on empirical early stopping, which is difficult to tune and unreliable in the absence of ground truth. We propose a self-tuning explicit regularization framework for MID and s$^2$ISM that replaces implicit early stopping with a Bayesian maximum a posteriori (MAP) formulation. The resulting variational models combine a multi-frame Poisson data fidelity term with sparsity-promoting priors, illustrated through $\ell_1$ and smoothed total variation regularization. To make the approach fully automatic, we adapt the Residual Whiteness Principle to the multi-frame Poisson setting, yielding a ground-truth-free rule for selecting the regularization weight. The framework is compatible with different regularizers and first-order solvers, including proximal-gradient and mirror-descent schemes with adaptive backtracking. Numerical experiments on simulated and real ISM data show that the proposed self-tuning framework stabilizes MID and s$^2$ISM reconstruction, mitigates Richardson–Lucy semi-convergence, and improves super-resolution and optical sectioning performance, particularly in low-photon regimes.
\end{abstract}




\keywords{Image Scanning Microscopy  \and  Poisson inverse problems \and Bayesian reconstruction \and Automatic parameter selection}

\section{Introduction}

Understanding the structure and dynamics of biological systems at the subcellular scale demands imaging techniques capable of resolving features at or below the optical diffraction limit, while minimizing illumination intensity to preserve sample viability. Laser Scanning Microscopy (LSM) forms the architectural backbone of numerous fluorescence imaging techniques widely deployed in biology \cite{pawley2006handbook}, thanks to its high spatiotemporal resolution and quantitative imaging capabilities. Among LSM techniques, Confocal Laser Scanning Microscopy (CLSM) became especially popular for its ability to reject out-of-focus light and achieve superior lateral resolution through the use of a pinhole spatial filter placed in front of the detector. While CLSM can in principle achieve a lateral resolution twice better than the optical diffraction limit in the limiting case of a point-like pinhole \cite{sheppard2006signal}, fully closing the pinhole drastically reduces the number of detected photons and compromises the signal-to-noise ratio (SNR), making this theoretical resolution gain inaccessible in practice.

Image Scanning Microscopy (ISM) \cite{bertero1982resolution,Sheppard1988Super-resolutionImaging,muller2010image} resolves this fundamental trade-off by replacing the single-element detector with an array of detectors — for example, a 5$\times$5 square single-photon avalanche diode (SPAD) array — each element acting as an independent small-pinhole confocal channel \cite{castello2019robust}. Following a complete bidimensional raster scan of the sample, the detector array generates a four-dimensional dataset consisting of two spatial dimensions corresponding to the scanning coordinates and two spatial dimensions corresponding to the detector reference system. Equivalently, the dataset can be interpreted as a collection of 25 confocal-like scanned images, one for each detector element, corresponding to spatially shifted views of the same specimen. Because every detector element contributes to photon collection and no pinhole is used to reject photons coming from the out-of-focus planes, the full SNR is preserved. At the same time the geometric diversity across elements encodes complementary spatial information that enables super-resolution reconstruction. ISM thus achieves sub-diffraction lateral resolution and enhanced SNR simultaneously, without increasing illumination intensity or requiring substantial hardware modifications beyond replacing the conventional single-point detector with a detector array. These features make ISM particularly well suited for the live-cell, low-phototoxicity imaging demanded by modern biomedical research.

Extracting a single, high-quality super-resolved image from the ISM dataset is the central computational challenge of the technique. A simple approach, known as pixel reassignment (PR), theoretically estimates the shift-vectors between detector images and sums the registered channels \cite{
sheppard2013superresolution}. Adaptive pixel reassignment (APR) improves upon this by estimating shift-vectors directly from the data via cross-correlation \cite{castello2019robust}, providing robustness to optical aberrations and misalignments \cite{sheppard2019pixel, fersini2025wavefront}. However, both methods build on the assumption that the detector images are shifted and intensity-rescaled replicas of one another i.e., that they share the same point spread function (PSF). This approximation degrades in realistic imaging conditions, where the PSF varies across detector elements. Moreover, neither PR nor APR explicitly exploits the underlying image formation model or the statistical properties of the measurement noise.

A more rigorous approach, introduced in \cite{castello2019robust,Zunino_2023}, frames ISM reconstruction as a multi-channel (or multi-image) inverse problem. Each detector image is modeled as the convolution of the unknown object with a detector-specific PSF, corrupted by Poisson photon-counting noise. Because SPAD array detectors are essentially free from readout noise and are predominantly limited by photon shot noise, a principled statistical likelihood model can be formulated. Maximum-likelihood estimation via minimization of the Kullback–Leibler (KL) divergence leads to a multiplicative iterative algorithm — Multi-Image Deconvolution (MID) — which generalizes the classical Richardson–Lucy algorithm to the multi-detector ISM setting \cite{Zunino_2023}. By fully accounting for the image formation process and the Poissonian noise statistics, MID outperforms APR in both resolution and SNR recovery. In \cite{Zunino_2023} it was further demonstrated that the redundancy intrinsic to the ISM dataset can be exploited to relax the Nyquist–Shannon sampling criterion by a factor of two, enabling a fourfold improvement in acquisition speed without information loss.

A further significant advance was introduced in \cite{s2ISM} with the simultaneous super-resolution and optical sectioning ISM framework (s$^2$ISM). The key observation underpinning s$^2$ISM is that imaging with a detector array inherently embeds axial information \cite{tortarolo2022focus}: the fingerprint function, which encodes the brightness distribution of each PSF across the detector elements, varies with the axial position of the emitter. Specifically, in-focus and out-of-focus fluorescence distribute differently across the detector array, with central elements collecting primarily in-focus light and peripheral elements increasingly collecting out-of-focus contributions. By modeling the ISM dataset as the superposition of contributions from two discrete axial planes — an in-focus plane of interest and an out-of-focus background plane — and inverting the resulting forward model under a Poisson likelihood, s$^2$ISM simultaneously achieves super-resolution, enhanced SNR, and optical sectioning from a single-plane acquisition, i.e., a single two-dimensional raster scan, without any modification to the optical system. 
Notably, MID can also be generalized to three-dimensional imaging by acquiring an additional axial scan and reconstructing a volumetric image stack. In this case, each reconstructed plane can achieve spatial resolution and optical sectioning performance comparable to s$^2$ISM. However, this approach requires a full three-dimensional acquisition, resulting in increased acquisition time and phototoxicity, which substantially limits its applicability to fast live-cell imaging.

Despite these advances, a fundamental limitation of both s$^2$ISM and MID remains unresolved. These methods inherit the well-known semi-convergent behavior of Richardson–Lucy deconvolution: while early iterations progressively recover high-frequency spatial details and improve image sharpness, continued iteration amplifies noise and leads to overfitted, artifact-laden reconstructions that progressively diverge from the true object. This behavior was explicitly demonstrated in \cite{Zunino_2023} where the KL divergence between the reconstruction and the ground truth was shown to initially decrease, reach a minimum, and then rise monotonically. In practice, this instability was managed by adopting a conservative early stopping criterion, chosen empirically to prevent artifact generation. Crucially, the authors therein explicitly acknowledged that the inclusion of regularization rules had not been considered, and identified the development of explicit regularization — including sparsity and continuity priors — as a needed future development, noting that careful evaluation of iteration stopping is essential and that integration of regularization approaches into the reconstruction pipeline represents a natural and important extension of the method for improving the robustness and usability of the MID and s$^2$ISM Richardson-Lucy based algorithms.

To overcome these limitations in a principled and mathematically grounded way, we develop in this paper a comprehensive explicit regularization framework for both MID and s$^2$ISM reconstructions. We incorporate prior knowledge about the unknown object directly into the objective function within the Maximum A Posteriori (MAP) estimation framework, combining the multi-frame KL data fidelity term with a regularization functional weighted by a parameter $\lambda>0$ that controls the trade-off between data fit and the prior. We consider two representative and widely used regularization functionals: a smoothed Total Variation (TV) penalty \cite{RUDIN1992259}, which promotes piecewise-smooth reconstructions while preserving sharp edges, and an $\ell_1$ norm penalty \cite{Candes}, which enforces sparsity in the solution. 
Such regularization strategies have been extensively employed in Poisson inverse problems and computational imaging to stabilize reconstruction while preserving relevant image structures, both in variational Total Variation formulations and in sparse-representation approaches, see, e.g., ~\cite{Dupe2009,Figureido2010,carl_blanc_f}. A critical practical challenge associated with explicit regularization is the selection of the regularization parameter, which controls the balance between data fidelity and prior information. In fluorescence microscopy applications, no reference reconstruction is available in practice to directly assess the reconstruction error, making parameter selection a fully unsupervised problem. Poorly chosen values of such parameter may lead either to over-regularization, resulting in the loss of biologically relevant fine structures, or to under-regularization, reintroducing noise amplification and reconstruction artifacts.
To address this issue, we adapt the whiteness-based parameter selection strategies proposed in~\cite{BEVILACQUA2023197,Bevilacqua_2023,bevilacqua2021nearly} to both the ISM and $s^2$ISM settings. Furthermore, in order to make the procedure robust in the dual-plane $s^2$ISM scenario and decouple parameter selection from structured low-frequency artifacts arising from the unregularized background component, we introduce a high-pass spectral filtering strategy together with a robust knee-point detection procedure aimed at improving the numerical stability of the parameter-selection process.

\paragraph{Contributions.}
We formulate MID and s$^2$ISM reconstruction within a Bayesian MAP framework and derive explicitly regularized variational models that overcome the need for empirical early stopping in Richardson–Lucy-type schemes. The framework combines the multi-frame Poisson data fidelity naturally associated with detector-array photon-counting measurements with sparsity-promoting priors, illustrated here through $\ell_1$ and smoothed total variation regularization. We further introduce a fully automatic, ground-truth-free strategy for selecting the regularization parameter by adapting the Residual Whiteness Principle to multi-frame Poisson ISM data. For s$^2$ISM, we complement this criterion with a high-pass spectral residual analysis that reduces the influence of low-frequency correlations induced by the unregularized out-of-focus component. The resulting self-tuning framework is compatible with different regularizers and first-order solvers, and is validated on simulated and real fluorescence microscopy datasets, where it stabilizes MID and s$^2$ISM reconstructions, mitigates noise amplification, and improves super-resolution and optical sectioning in low-photon regimes.

\section{Image Scanning Microscopy with Pinhole: the MID Case} \label{sec1}
In this section, we address the reconstruction of single-plane (two-dimensional) ISM datasets using multi-image deconvolution (MID). In the original ISM formulation, the confocal pinhole is completely removed and replaced by a detector array, thereby maximizing the photon collection efficiency from the focal plane. Under these conditions, when imaging a specific section of a three-dimensional specimen, the detected signal contains not only the in-focus fluorescence originating from the focal plane, but also out-of-focus contributions arising from planes above and below it.
When a full volumetric dataset is acquired, i.e., when scanning is also performed along the optical axis, out-of-focus light does not represent a fundamental limitation for the three-dimensional implementation of MID. Indeed, fluorescence contributions that appear out-of-focus in one section correspond to in-focus signal in neighboring axial planes and can therefore be reassigned to their correct spatial position during the volumetric reconstruction process. As a consequence, three-dimensional MID can simultaneously exploit the full photon budget while preserving optical sectioning. In many practical applications, however, acquiring a complete volumetric dataset is not feasible because of acquisition-time constraints, photobleaching, and phototoxicity, as is often the case in fast live-cell imaging. In these scenarios, only a single focal plane is recorded. Under such conditions, the axial information required to reassign out-of-focus photons is not directly available and, when a two-dimensional implementation of MID is employed, fluorescence originating outside the focal plane accumulates as background in the reconstructed image. For this reason, practical ISM implementations typically still employ a relatively large confocal pinhole, commonly around 1 Airy Unit (A.U.), often effectively defined by the detector array itself. This configuration provides sufficient optical sectioning to suppress most out-of-focus fluorescence contributions while still preserving a large fraction of the in-focus light, allowing two-dimensional MID to effectively improve both the resolution and the signal-to-noise ratio of the reconstructed image. 
\subsection{Image formation and mathematical modeling}
\label{sec:math_mod_oneplane}
The ISM image formation is modeled as a multi-frame Poisson inverse problem. We follow \cite{Bertero} and model the 2D ISM image formation process as
\begin{equation}
\label{for_problem}
    \mathbf{y}_d \sim \operatorname{Poiss}(\A_d \x + \mathbf{b}_d),
    \qquad d \in \{1, \dots, 25\}.
\end{equation}

Here, $\x \in \mathbb{R}^N$ denotes the vectorized image of the unknown object to be recovered, and $\mathbf{y}_d \in \mathbb{R}^N$ represents the measurements collected by the $d$-th detector element, for $d = 1, \ldots, 25$. The operator $\A_d \in \mathbb{R}^{N \times N}$ is the convolution matrix associated with the $d$-th effective point spread function (PSF) $h_d$. Crucially, the PSF is defined as the product of the excitation PSF $h_{\text{exc}}$ and the detection PSF $h_{\text{det}}$ (which accounts for the specific spatial shift of each element relative to the optical axis), such that $h_d(\mathbf{r}) = h_{\text{exc}}(\mathbf{r}) \cdot h_{\text{det}}(\mathbf{r})$. 
For $d = 1, \ldots, 25$, the vector $\mathbf{b}_d = b_d \mathbf{1}_{\mathbb{R}^N}$ denotes the background term, which is assumed to be spatially constant for each detector, such as the dark-count background typical of SPAD.
Physically, they model constant background signals present in the measurements. From a mathematical perspective, the presence of such terms guarantees strict positivity of the forward model, which will be required to establish smoothness properties of the objective functional. Notably, the forward model described above applies to both single-plane and volumetric imaging, provided that the corresponding two-dimensional or three-dimensional PSFs are considered for each detector element. In the case of volumetric imaging, the ISM dataset becomes five-dimensional, comprising three spatial dimensions associated with the scanning coordinates and two dimensions associated with the detector array.

Assuming statistical independence across pixels and detector elements \cite{Bertero}, the likelihood function associated with \eqref{for_problem} is given by
\begin{equation} \label{eq:Pois_likelihood}
\mathcal{P}_{\mathcal{Y}|\mathcal{X}}(\mathbf{y} \vert \x)
=
\prod_{d=1}^{25}
\prod_{i=1}^{N}
\frac{
[\A_d \x + \mathbf{b}_d]_i^{\,y_{d,i}}
e^{-[\A_d \x + \mathbf{b}_d]_i}
}{
y_{d,i}!
}.
\end{equation}

Here, $y_{d,i} \in \mathbb{N}$ denotes the photon count measured at pixel $i$ of detector $d$. The corresponding negative log-likelihood reads
\begin{equation}
\label{eq:likelihood}
-\log \mathcal{P}_{\mathcal{Y}|\mathcal{X}}(\mathbf{y} \vert \x)
=
\sum_{d=1}^{25}
\sum_{i=1}^{N}
\left\{
[\A_d \x + \mathbf{b}_d]_i
-
y_{d,i}\ln\bigl([\A_d \x + \mathbf{b}_d]_i\bigr)
+
\ln(y_{d,i}!)
\right\}.
\end{equation}

Since the factorial terms $\ln(y_{d,i}!)$ do not depend on $\x$, they can be discarded without affecting the minimizer. The maximum likelihood estimate of $\x$ is therefore obtained by solving:
\begin{equation}
\label{ml_min}
\x^*
\in
\operatorname*{argmin}_{\x}
~
\Phi^{\text{ISM}}(\x;\underline{\mathbf{A}}, \underline{\mathbf{y}}, \underline{\mathbf{b}})
:=
\sum_{d=1}^{25}
\operatorname{KL}(\mathbf{y}_d,\A_d\x+\mathbf{b}_d),
\end{equation}
where
$\underline{\A} = [\A_1,\dots,\A_{25}]$,
$\underline{\y} = [\y_1,\dots,\y_{25}]$, $\underline{\mathbf{b}} = [ \mathbf{b}_1, \dots, \mathbf{b}_{25}]$
and $\operatorname{KL}(\cdot,\cdot)$ denotes the generalized Kullback--Leibler divergence:
\[
\operatorname{KL}(\y,\x)
=
\sum_{i=1}^{N}
x_i
-
y_i \ln(x_i)
+
\ln(y_i!).
\]

The resulting multi-frame $\operatorname{KL}$ data fidelity term is convex and, thanks to the presence of the background term, has Lipschitz-continuous gradient, i.e. it is $L$-smooth (Definition~\ref{L-smothness}). Its gradient is given by:
\begin{equation}
\label{eq:grad_ISM}
\nabla \Phi^{\text{ISM}}(\x;\underline{\A},\underline{\y},\mathbf{b})
=
\sum_{d=1}^{25}
\A_d^\top
\left(
\mathbf{1}
-
\frac{\y_d}{\A_d\x+\mathbf{b}_d}
\right),
\end{equation}
where the division is understood entrywise. For the optimization algorithms considered in the following sections, we will require an upper bound on the Lipschitz constant of \eqref{eq:grad_ISM}. As shown in Proposition~\ref{KL_L}, one has
\begin{equation}
L_{\Phi^{\text{ISM}}}
\leq
\sum_{d=1}^{25}
\frac{\max(\mathbf{y}_d)}{\mathbf{b}_d^2}
\cdot
\max(\A_d \mathbf{1})
\cdot
\max(\A_d^\top \mathbf{1}).
\end{equation}

The $L$-smoothness of the data fidelity term $\Phi^{\text{ISM}}$ will be instrumental in establishing convergence guarantees for the optimization algorithms introduced in the following sections.

\section{Image Scanning Microscopy without Pinhole: the s$^2$ISM Case}

In this section, we introduce s$^2$ISM and we show how it overcomes the residual trade-off between photon collection efficiency and optical sectioning that still characterizes traditional single-plane ISM acquisitions. Unlike standard ISM, where the pinhole must remain partially closed to suppress out-of-focus background, s$^2$ISM enables efficient optical sectioning even when the pinhole is fully opened, or equivalently when detector arrays extending well beyond 1 Airy Unit are employed.
The key advantage of s$^2$ISM is that optical sectioning is achieved computationally through higher-order spatial correlations rather than through physical rejection of out-of-focus photons. As a consequence, fluorescence contributions originating outside the focal plane are enables computational suppression without sacrificing photon collection efficiency. This allows the simultaneous recovery of enhanced lateral resolution, improved signal-to-noise ratio, and effective background rejection.
Importantly, these benefits are obtained while still operating in a single-plane acquisition modality, without requiring the registration of a complete three-dimensional stack. Therefore, s$^2$ISM alleviates the compromise that characterizes conventional 2D ISM implementations and enables high-resolution, optically sectioned imaging even in thick specimens under fully open-pinhole detection conditions. 

\subsection{Image formation and mathematical modeling} \label{sec: mathematical modeling for $s^2ISM$}
In the $s^2$ISM framework, the forward model is extended to account 
for axial information by treating the fluorescence signal as a superposition of 
in-focus and out-of-focus contributions. The measurement for each detector $d$ 
is given by
\begin{equation}
\label{eq:s2_forward}
\mathbf{y}_d
\sim
\operatorname{Poisson}
\left(
\A_{d,1}\x_1
+
\A_{d,2}\x_2
+
\mathbf{b}_d
\right),
\qquad
\forall\, d \in \{1,\dots,25\}.
\end{equation}
The variable $\x_1 \in \R^N$ represents the in-focus signal of interest, while 
$\x_2 \in \R^N$ models the out-of-focus contribution, treated as a background 
component to be separated from the reconstruction. Here, $\A_{d,1}$ is the 
convolution matrix defined by the PSF at the focal plane ($z = 0$), denoted 
$h_{d,1}$, and $\A_{d,2}$ is the convolution matrix defined by the PSF at a 
displaced axial plane ($z = \Delta z$), denoted $h_{d,2}$. By incorporating 
these depth-dependent PSFs into the forward operators, the model simultaneously 
achieves super-resolved lateral reconstruction in $\x_1$ and enhanced axial 
sectioning through the separation of the out-of-focus background $\x_2$.

Similarly to Section~\ref{sec:math_mod_oneplane}, the corresponding maximum likelihood estimation problem is formulated as
\begin{equation}
\label{s2ISM}
\underline{\x}^*
=
[\x_1^*,\x_2^*]
\in
\operatorname*{argmin}_{\underline{\x}\in\mathbb{R}^{N\times 2}}
~
\Phi^{^{\text{$s^2$ISM}}}
(
\underline{\x};
[\underline{\mathbf{A}}_1,\underline{\mathbf{A}}_2],
\underline{\mathbf{y}},
\underline{\mathbf{b}}
)
:=
\sum_{d=1}^{25}
\operatorname{KL}
\bigl(
\mathbf{y}_d,
\mathbf{A}_{d,1}\mathbf{x}_1
+
\mathbf{A}_{d,2}\mathbf{x}_2
+
\mathbf{b}_d
\bigr).
\end{equation}

The resulting data fidelity term is convex and $L$-smooth. As shown in Remark~\ref{rmk: lipschitz constant of KL in 2 planes ism}, the Lipschitz constant of its gradient admits the upper bound
\begin{equation}
L_{\Phi{^{\text{$s^2$ISM}}}}
\leq
\sum_{d=1}^{25}
\frac{\max(\mathbf{y}_d)}{\mathbf{b}_d^2}
\cdot
\max\bigl([\mathbf{A}_1,\mathbf{A}_2]_d\mathbf{1}\bigr)
\cdot
\max\bigl([\mathbf{A}_1,\mathbf{A}_2]_d^\top\mathbf{1}\bigr).
\end{equation}

Here,
$
[\mathbf{A}_1,\mathbf{A}_2]_d :
\mathbb{R}^{N\times 2}
\to
\mathbb{R}^N
$
denotes the linear operator defined by
\[
[\mathbf{A}_1,\mathbf{A}_2]_d \underline{\mathbf{x}}
=
\mathbf{A}_{d,1}\mathbf{x}_1
+
\mathbf{A}_{d,2}\mathbf{x}_2,
\qquad
\forall d=1,\dots,25.
\]

\section{Implicit regularization and early stopping: drawbacks}
\label{sec: implicit reg and early stopping}

Problems~\eqref{ml_min} and~\eqref{s2ISM} were previously addressed in~\cite{Zunino_2023} and~\cite{s2ISM}, respectively. In these works, the reconstructed image was obtained through tailored variants of the Richardson--Lucy (RL) algorithm~\cite{Richardson72,Lucy1974}, a classical multiplicative iterative scheme originally introduced to minimize the Kullback--Leibler divergence within a Maximum Likelihood Expectation-Maximization framework~\cite{Shepp1982}.
 In such approaches, reconstruction stability is not enforced through an explicit regularization term added to the optimization problem. Instead, regularization is induced implicitly by stopping the iterative process before convergence. This phenomenon, commonly referred to as \emph{implicit regularization via early stopping}, relies on the observation that the first RL iterations recover meaningful large-scale structures, while later iterations progressively amplify noise and reconstruction artifacts, see also \cite{Liu} for a frequency analysis in the context of microscopy.

Taking as example the forward model~\eqref{for_problem}, the RL iteration associated with the ISM model reads for $k\geq 0$
\begin{equation}
\label{eq: Richardson-Lucy}
\x^{k+1}
=
\x^k
\cdot
\underline{\mathbf{A}}^\top
\left(
\frac{\underline{\mathbf{y}}}
{\underline{\mathbf{A}}\x^k+\underline{\mathbf{b}}}
\right),
\end{equation}
where
\[
\underline{\mathbf{A}}
=
[\mathbf{A}_1,\dots,\mathbf{A}_{25}]
:
\mathbb{R}^N
\to
\mathbb{R}^{25\times N},\quad
\underline{\mathbf{y}}
=
[\mathbf{y}_1,\dots,\mathbf{y}_{25}],
\qquad
\underline{\mathbf{b}}
=
[\mathbf{b}_1,\dots,\mathbf{b}_{25}],
\]
and all products and divisions are understood entrywise.

While RL typically produces reasonable reconstructions during the first iterations, it is inherently affected by the phenomenon of \emph{semi-convergence}: as the iterations progress, the algorithm progressively overfits the noise in the measurements, causing the iterates to drift away from the true solution and yielding increasingly unstable reconstructions. Consequently, the performance of RL critically depends on identifying a suitable stopping iteration --- early enough to preserve meaningful structures, yet before noise amplification becomes dominant. In practice, however, determining such an optimal stopping point is highly nontrivial, particularly in real-world microscopy applications where the ground truth is unavailable and reliable stopping criteria are difficult to define.

This behavior is illustrated in the simulated example of Fig.~\ref{fig:no_reg_conv}. While the multi-frame KL objective function decreases monotonically throughout the iterations --- as guaranteed by the construction of RL --- the $\ell_2$ reconstruction error with respect to the ground truth follows a markedly different trend: it initially decreases, reaches a minimum at an intermediate iteration, and subsequently increases as noise amplification sets in. This discrepancy between objective minimization and reconstruction quality highlights the fundamental limitation of implicitly regularized iterative schemes and motivates the explicit variational regularization framework developed in this work.
\begin{figure}
    \centering
        \begin{subfigure}{0.48\linewidth}
        \centering
        \includegraphics[width=\linewidth]{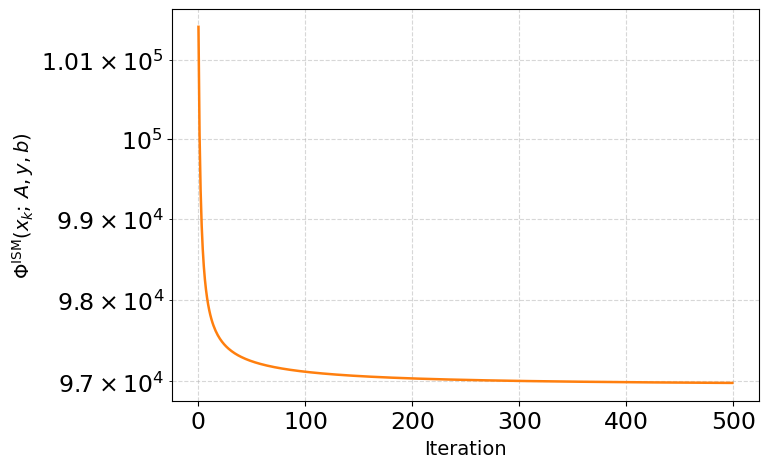}
        \caption{Decrease of $\Phi^{\text{ISM}}$  \eqref{ml_min} along RL iterations.}
    \end{subfigure}
    \hfill
    \begin{subfigure}{0.48\linewidth}
        \centering
        \includegraphics[width=\linewidth]{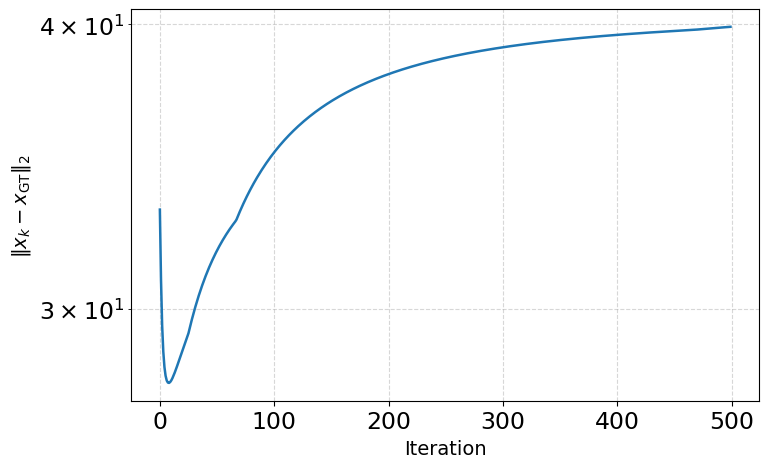}
        \caption{Evolution of the reconstruction error $\| \mathbf{x}^k-\mathbf{x}_{\text{GT}}\|_2$.}
    \end{subfigure}
    \caption{Semi-convergence of the Richardson--Lucy algorithm on a simulated example. 
    While the $\Phi^{\text{ISM}}$ objective (left) decreases monotonically, the $\ell_2$-reconstruction error 
    between RL iterates and the ground truth  reaches a minimum and subsequently grows (right), 
    reflecting progressive noise amplification.}
    \label{fig:no_reg_conv}
\end{figure}

\section{Bayesian formulation and regularization models}
\label{sec2}

To address the limitations discussed above, we develop an explicit regularization framework for robust and stable ISM and s$^2$ISM reconstruction.

\subsection{Maximum A Posteriori (MAP) approach} \label{sec:MAP}

To overcome the limitations of implicitly regularized reconstruction schemes, we adopt a Bayesian Maximum A Posteriori (MAP) formulation. In this framework, prior information about the unknown object is incorporated through a prior probability distribution $\mathcal{P}_{\mathcal{X}}(\x)$, leading naturally to an explicitly regularized variational problem. By Bayes' theorem, the posterior distribution of the unknown image satisfies
\[
\mathcal{P}_{\mathcal{X}|\mathcal{Y}}(\x \mid \mathbf{y})
\propto
\mathcal{P}_{\mathcal{Y}|\mathcal{X}}(\mathbf{y}\mid\x)
\,
\mathcal{P}_{\mathcal{X}}(\x),
\]
where $\mathcal{P}_{\mathcal{Y}|\mathcal{X}}$ denotes the likelihood introduced in~\eqref{eq:likelihood}. Following a standard variational Bayesian approach, we consider priors of the form
\begin{equation}
\label{MAP_eq}
\mathcal{P}_{\mathcal{X}}(\x)
\propto
\exp\bigl(-\lambda R(\x)\bigr),
\end{equation}
where
$
R:\mathbb{R}^N \to \mathbb{R}_{+}
$
is a convex regularization functional and $\lambda>0$ is the regularization parameter controlling the trade-off between data fidelity and prior information.

Taking the negative logarithm of the posterior distribution and neglecting additive constants independent of $\x$, the MAP estimate is obtained by solving the composite optimization problem
\begin{equation}
\label{Map_min}
\x^*
\in
\operatorname*{argmin}_{\x}
\;
\Phi^{\text{ISM}}
(\x;\underline{\A},\underline{\y},\underline{\mathbf{b}})
+
\lambda R(\x)
+
\iota_{\geq0}(\x).
\end{equation}

Here,
$\Phi^{\text{ISM}}$
denotes the multi-frame Poisson data fidelity term introduced in~\eqref{ml_min}, while
$\iota_{\geq0}$
is the indicator function of the non-negative orthant, enforcing the physical non-negativity constraint on the reconstructed fluorescence signal.

In the $s^2$ISM setting, the same variational formulation applies with the data fidelity term $\Phi^{s^2\text{ISM}}$ defined in~\eqref{s2ISM}. Since only the in-focus component is the reconstruction target, regularization is applied only to $\x_1$. The resulting optimization problem thus reads in this case:
\begin{equation} \label{eq:prob_s2ISM}
\underline{\x}^*
=
[\x_1^*,\x_2^*]
\in
\operatorname*{argmin}_{\underline{\x}\in\mathbb{R}^{N\times2}}
\;
\Phi^{s^2\text{ISM}}
(
\underline{\x};
[\underline{\mathbf{A}}_1,\underline{\mathbf{A}}_2],
\underline{\mathbf{y}},
\underline{\mathbf{b}}
)
+
\lambda R(\x_1)
+
\iota_{\geq0}(\underline{\x}).
\end{equation}

\subsection{On the choice of the  regularization function $R$} \label{sec:reg}

We now introduce the representative regularization functionals considered in this work. As discussed above, these choices are intended to illustrate how classical sparsity-promoting explicit regularization strategies can be naturally incorporated into the multi-frame $\operatorname{KL}$ minimization framework arising in ISM and $s^2$ISM reconstruction, but more advanced choices based on higher-order regularization \cite{TGV}, spatial adaptivity \cite{TV_space_variant} and/or data-driven extensions \cite{hertrich2025learning} can be considered too. In the following, we consider
exemplar regularization functionals of the form
$
R : \mathbb{R}^N \to \mathbb{R}_{+},
$
where $R$ is assumed to be convex, proper, lower semi-continuous, and possibly non-smooth. 

\subsubsection{$\ell_1$ regularization}
\label{sec:l1_reg}

A widely used explicit regularization strategy is the $\ell_1$ norm, which promotes sparsity in the reconstructed signal by penalizing the sum of the absolute values of its entries. Originally popularized in the context of compressed sensing~\cite{Candes} and sparse regression~\cite{tibshirani1996regression}, it has since become a standard regularization tool in inverse problems where the underlying solution is expected to exhibit sparse structures. The $\ell_1$ regularizer is defined as
\begin{equation}
\label{eq:l1_norm}
    R(\mathbf{x})
    =
    \|\mathbf{x}\|_1
    =
    \sum_{i=1}^{N} |x_i|.
\end{equation}

\subsubsection{Smoothed Total Variation}
\label{sec:tv_reg}

Another widely used regularization strategy in image reconstruction is \emph{Total Variation} (TV) regularization~\cite{RUDIN1992259}, which is particularly effective at preserving edges while suppressing noise by promoting sparsity in the image gradient. In its classical form, TV is non-differentiable, and its optimization typically requires primal--dual splitting techniques; see, e.g.,~\cite{ChambollePock}.  In this work, we adopt for simplicity a \emph{smoothed} variant of TV in order to make the regularization functional compatible with the gradient-based optimization methods discussed in Sections~\ref{sec:PGD} and~\ref{sec:MD}. The smoothing is introduced through a small parameter $\epsilon > 0$, ensuring differentiability everywhere. The resulting smoothed TV regularizer is defined as
\begin{equation}
\label{eq:TV_smooth}
\operatorname{TV}_{\epsilon}(\mathbf{x})
:=
\sum_{i=1}^{N}
\sqrt{
\left(\nabla_h \mathbf{x}\right)_i^2
+
\left(\nabla_v \mathbf{x}\right)_i^2
+
\epsilon^2
},
\end{equation}
where
$
\nabla
=
\begin{bmatrix}
\nabla_h \\
\nabla_v
\end{bmatrix}
\in
\mathbb{R}^{2N\times N}
$
denotes the discrete finite-difference spatial gradient operator, with $\nabla_h$ and $\nabla_v$ representing its horizontal and vertical components, respectively. The parameter $\epsilon > 0$ controls the degree of smoothing near zero.
The gradient of $\operatorname{TV}_{\epsilon}$ is given by:
\begin{equation}
\label{eq:TV_grad}
    \nabla \operatorname{TV}_{\epsilon}(\mathbf{x}) 
    = -\nabla^{\top} \frac{\nabla \mathbf{x}}{\sqrt{\|\nabla \mathbf{x}\|^2 + \epsilon^2}},
\end{equation}
where the division is understood entrywise. An upper 
bound for the Lipschitz constant of the resulting composite gradient is derived in Proposition \ref{prop:lip_F}.

\section{Automatic parameter selection via Residual Whiteness Principle}
\label{secRPW}

Selecting an appropriate value of the regularization parameter $\lambda$ in~\eqref{MAP_eq}, balancing data fidelity and regularization, is a long-standing problem in inverse problems and variational imaging. Classical approaches include discrepancy-principle-based strategies, originally introduced for additive Gaussian noise~\cite{morozov2012methods} and later adapted to Poisson inverse problems in~\cite{Bertero_2010,bevilacqua2021nearly}. These methods select $\lambda$ so that the residual matches the expected noise level.

Other widely used parameter selection techniques include the L-curve criterion~\cite{hansen1992analysis}, which identifies a compromise between data fidelity and regularization by analyzing the curvature of the corresponding trade-off curve, and unbiased risk estimation approaches such as SURE~\cite{Stein1981} (Stein's Unbiased Risk Estimator), originally developed for Gaussian noise and subsequently extended to Poisson statistics in the form of PURE~\cite{PURE} (see also \cite{Massa_2021}).

Despite their effectiveness, these approaches may be difficult to apply in practice, either because they require accurate knowledge of the noise statistics or because they involve computationally demanding parameter-search procedures. This challenge is particularly critical in fluorescence microscopy, where the ground truth is unavailable and the noise statistics may significantly vary across detector channels and acquisition settings.

To address this issue, we focus on a different class of parameter selection strategies based on the statistical analysis of the reconstruction residuals. Rather than enforcing consistency with a prescribed noise level, these approaches assess whether the residual behaves as an uncorrelated realization of the underlying noise process. In particular, we consider the family of residual whiteness principles, firstly applied in a number of papers for Gaussian statistics \cite{Lanza_RWP1,Lanza_RWP2,PragliolaRWP} and later used in \cite{BEVILACQUA2023197,Bevilacqua_2023,bevilacqua2021nearly} in the case of Poisson statistics.

\smallskip

Such an approach provides a fully automatic parameter selection strategy by exploiting the statistical structure of the reconstruction residuals, rather than relying on prior knowledge of the noise variance or on ground-truth information. More precisely, the underlying idea is that, for a properly regularized reconstruction, the residual should behave as an uncorrelated realization of the underlying Poisson noise process. 

In the ISM setting, each acquisition consists of a collection of detector-dependent images rather than a single observation. Since the forward model assumes conditional independence of photon counts across both spatial pixels and detector elements, the likelihood naturally factorizes over the complete measurement index set \eqref{eq:Pois_likelihood}. Motivated by this multi-frame statistical model, we extend the residual whiteness criterion to the full observation space by evaluating residual correlations jointly across the spatial and detector dimensions. Although detector images share common object structures through the forward model, the proposed criterion is applied to the standardized residuals, where the common structural content is expected to be largely explained by the detector-dependent forward operators. This allows the parameter-selection strategy to exploit the information contained in the full detector-array acquisition.

Let us now describe how this principle can be adapted to the multi-frame ISM setting. Let the observed data be organized as a matrix
$
\mathbf{Y}\in\mathbb{R}^{N\times D},
$
with scalar entries $y_{i,d}$, where
$
(i,d)\in\mathcal I
:=
\{1,\dots,N\}\times\{1,\dots,D\}
$
indexes pixels and detector frames, respectively. We model each measurement as an independent Poisson random variable:
\begin{equation}
y_{i,d}
\sim
\operatorname{Poiss}(\nu_{i,d}),
\qquad
\nu_{i,d}
:=
[\mathbf A_d\mathbf x+\mathbf b_d]_i,
\end{equation}
where $\nu_{i,d}$ denotes the expected photon count at pixel $i$ and detector frame $d$. Collecting these quantities into the matrix
$
\boldsymbol\nu\in\mathbb R^{N\times D},
$
we can write compactly
$
\mathbf Y \sim \operatorname{Poiss}(\boldsymbol\nu)
$
entrywise.
We introduce the multi-frame  lag domain
\[
\mathcal L
:=
\{-(N-1),\dots,N-1\}
\times
\{-(D-1),\dots,D-1\}.
\]
and recall the following standard definitions.

\begin{definition}[Weakly stationary and white random fields]
A random field
$
\mathcal Z
=
\{Z_{i,d}\}_{(i,d)\in\mathcal I}
$
is said to be \emph{weakly stationary} if its mean $\mu_{\mathcal Z}$ and variance $\sigma_{\mathcal Z}^2$ are constant over $\mathcal I$, and its autocorrelation depends only on the lag $(l,m)\in\mathcal L$.
A weakly stationary field is said to be \emph{white} if $\mu_{\mathcal Z}=0$ and its normalized autocorrelation
$
\mathbf C[\mathcal Z]
=
\{c_{l,m}[\mathcal Z]\}
$
satisfies
\begin{equation}
\label{eq:autocorr}
c_{l,m}[\mathcal Z]
:=
\frac{
\operatorname{Cov}(Z_{i,d},Z_{i+l,d+m})
}{
\sigma_{\mathcal Z}^2
}
=
\begin{cases}
1,
& \text{if } (l,m)=(0,0),\\
0,
& \text{otherwise},
\end{cases}
\end{equation}
where
$
\operatorname{Cov}(U,V)
:=
\mathbb E[(U-\mathbb E[U])(V-\mathbb E[V])].
$
\end{definition}

In the Poisson setting, the variance of the measurements depends on the underlying intensity itself. Consequently, raw residuals are not directly comparable across pixels or detector channels characterized by different expected photon counts. To remove this signal dependence, we introduce the following standardized version of a Poisson random variable.

\begin{definition}[Standardized Poisson Random Variable]
\label{def:standardized_poisson}
Let $Y \sim \operatorname{Poiss}(\nu)$ with $\nu > 0$. Its \textbf{standardized} 
counterpart is defined as:
\begin{equation}
    Z := \frac{Y - \nu}{\sqrt{\nu}}.
    \label{eq:standardization}
\end{equation}
By construction, $\mathbb{E}[Z] = 0$ and $\operatorname{Var}[Z] = 1$. Consequently, 
any field of independent standardized Poisson variables forms a white random field.
\end{definition}

In our setting, given a reconstruction $\mathbf{x}$, we define the associated standardized residual field
$
\mathbf Z
=
\{z_{i,d}\}
\in
\mathbb R^{N\times D}
$
entrywise as
\begin{equation}
\label{eq:residual_field}
z_{i,d}
:=
\frac{
y_{i,d}-\nu_{i,d}
}{
\sqrt{\nu_{i,d}}
},
\qquad
(i,d)\in\mathcal I.
\end{equation}
When $\mathbf{x}$ coincides with the ground-truth image $\mathbf{x}_{GT}$, the residual field $\mathbf Z$ consists of independent standardized Poisson random variables and therefore forms a white random field by Definition~\ref{def:standardized_poisson}.

To quantify how far a realization of $\mathbf Z$  is from being white and assuming that the residual field is approximately centered, we introduce the empirical normalized autocorrelation
$
\hat{\mathbf C}(\mathbf Z)
=
\{\hat c_{l,m}(\mathbf Z)\},
$
whose components read
\begin{equation}
\label{eq:sample_corr}
\hat c_{l,m}(\mathbf Z)
:=
\frac{1}{\|\mathbf Z\|_F^2}
\sum_{(i,d)\in\mathcal I}
z_{i,d}\,
z_{i+l,d+m},
\end{equation}
where the sum is taken over all admissible index pairs and $\|\cdot\|_F$ denotes the Frobenius norm. Equation~\eqref{eq:sample_corr} represents the empirical counterpart of the normalized autocorrelation introduced in~\eqref{eq:autocorr}.

The overall departure of the residual field from whiteness is quantified through the whiteness functional
$
\mathcal W :
\mathbb R^{N\times D}
\to
\mathbb R_{+},
$
defined as
\begin{equation}
\label{eq:whiteness_def}
\mathcal W(\mathbf Z)
:=
\sum_{(l,m)\in\mathcal L}
\hat c_{l,m}(\mathbf Z)^2.
\end{equation}
Intuitively, $\mathcal W(\mathbf Z)$ measures the amount of residual correlation present across spatial and detector dimensions: the closer the residual field is to an ideal white field, the smaller the value of $\mathcal W$. As shown in~\cite{wp_lanza}, under periodic boundary conditions the whiteness functional can be evaluated efficiently in
$
\mathcal O(ND\log(ND))
$
operations using the two-dimensional Discrete Fourier Transform (DFT). Namely, denoting by
$
\widetilde{\mathbf Z}
=
\{\tilde z_{i,d}\}
$
the 2D DFT of $\mathbf Z$, one indeed obtains the following handy expression of $\mathcal{W}$
\begin{equation}
\label{eq:whiteness_fft}
\mathcal W(\mathbf Z)
=
\frac{
\displaystyle
\sum_{(i,d)\in\mathcal I}
|\tilde z_{i,d}|^4
}{
\left(
\displaystyle
\sum_{(i,d)\in\mathcal I}
|\tilde z_{i,d}|^2
\right)^2
},
\end{equation}
which can therefore be evaluated on the residual fields associated with reconstructions computed for different regularization parameters $\lambda$, allowing for fully automatic parameter selection through the minimization of residual correlations. More precisely, for a given regularization parameter $\lambda > 0$, let
$\mathbf{x}^\lambda \in \mathbb{R}_+^N$
denote the reconstruction obtained at convergence by one of the proposed algorithms, and let
$\boldsymbol{\nu}^\lambda \in \mathbb{R}^{N \times D}$
be the corresponding estimated Poisson mean, with entries
\begin{equation}
    \nu_{i,d}^\lambda
    :=
    \left[
    \mathbf{A}_d \mathbf{x}^\lambda + \mathbf{b}_d
    \right]_i,
    \qquad
    (i,d)\in\mathcal I.
\end{equation}

The associated standardized residual field
$
\mathbf Z^\lambda
=
\{z_{i,d}^\lambda\}
$
is then defined as
\begin{equation}
    z_{i,d}^\lambda
    :=
    \frac{
    y_{i,d}-\nu_{i,d}^\lambda
    }{
    \sqrt{\nu_{i,d}^\lambda}
    },
    \qquad
    (i,d)\in\mathcal I.
\end{equation}

The optimal regularization parameter is selected by solving
\begin{equation}
\label{masked_wp}
    \lambda^*
    \in
    \argmin_{\lambda>0}
    \mathcal W(\mathbf Z^\lambda).
\end{equation}

Intuitively, when the regularization parameter is properly chosen, the reconstruction
$\mathbf x^\lambda$
is close to the ground-truth image
$\mathbf x_{GT}$,
the residual field
$\mathbf Z^\lambda$
behaves approximately as a white random field, and consequently
$\mathcal W(\mathbf Z^\lambda)$
is minimized.

\begin{remark}
The interpretation of the residual field $\mathbf{Z}^\lambda$ as an approximately white random field relies on the assumption that $\mathbb{E}[\mathbf{Z}^\lambda] \approx \mathbf{0}$. Indeed, the normalized autocorrelation estimator introduced in \eqref{eq:sample_corr} is meaningful under the assumption of approximately centered residuals. This condition is expected to hold whenever the reconstruction $\mathbf{x}^\lambda$ is sufficiently close to the ground-truth image $\mathbf{x}_{GT}$.  In practice, we empirically verify this assumption through a simulation utilizing the PGD-TV algorithm. Specifically, we generate 50 independent noise realizations of the simulated 2D ISM dataset and compute the regularized reconstructions across a logarithmically spaced grid of regularization parameters. By tracking the spatial mean of the corresponding standardized residual field $\mathbf{Z}^\lambda$ for each realization, we confirm that the average residual mean remains negligibly small across the tested parameter range. Crucially, the deviation from zero is virtually non-existent for the smaller parameter values typically selected by the whiteness principle, thereby supporting the validity and robustness of the proposed parameter selection strategy in our setting (see Fig \ref{fig:z_mean_pgdtv}).

\begin{figure}
    \centering

    \includegraphics[width=0.4\linewidth]{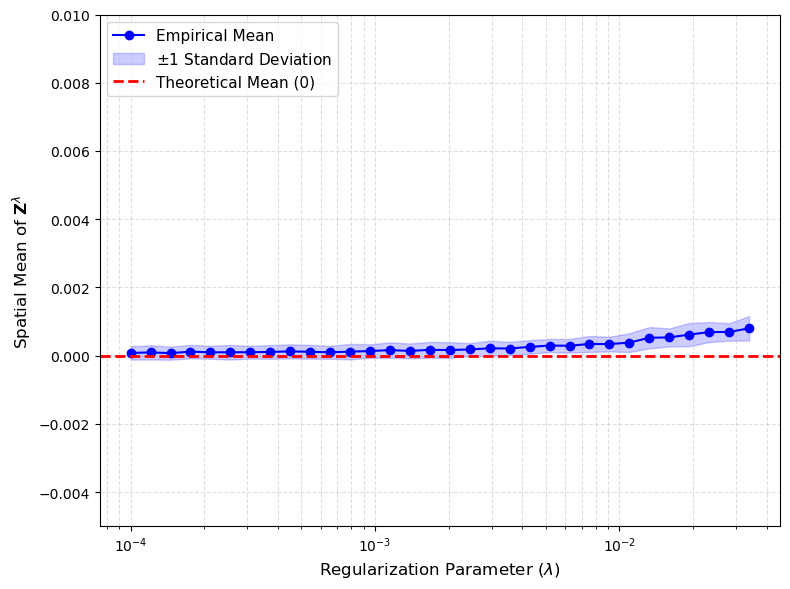} 
    \caption{Empirical validation of the zero-mean residual assumption for the PGD-TV algorithm. The solid blue line tracks the spatial mean of the standardized residual field $\mathbf{Z}^\lambda$ as a function of the regularization parameter $\lambda$, averaged across 50 independent noise realizations of the simulated 2D ISM dataset. The shaded light-blue region represents $\pm 1$ standard deviation, while the red dashed line indicates the ideal theoretical zero value. The empirical mean remains bounded and negligibly small, particularly in the lower $\lambda$ regime where optimal parameters are typically selected.}
    \label{fig:z_mean_pgdtv}
\end{figure}

\end{remark}

At this stage, the whiteness functional provides a criterion for selecting the regularization parameter. In principle, this can be approached in two different ways: either by evaluating $\mathcal W(\mathbf Z^\lambda)$ over a discrete set of candidate parameters and selecting the minimizer through a grid-search strategy, or by directly optimizing with respect to $\lambda$ within a bilevel variational framework \cite{SantambrogioWP,PragliolaBilevelWP}, where the reconstruction problem acts as a lower-level constraint. In this work, we adopt the former approach for simplicity and computational convenience.

\subsection{Masking}
\label{sec:masked_wp}

While effective, the procedure described above does not account for the fact that, in fluorescence microscopy applications, a significant fraction of pixels may record zero photon counts, particularly in low-flux regimes. These zero-valued pixels introduce a systematic bias into the whiteness functional~\eqref{eq:whiteness_def}, since they carry little information about the underlying noise correlations; see~\cite{carl_blanc_f,Bevilacqua_2023}. To address this issue, we adopt a \emph{masked} variant of the Residual Whiteness Principle, originally proposed in~\cite{Bevilacqua_2023}, restricting the residual analysis to the subset of pixels where at least one photon has been detected. Here, we adapt this approach to the multi-frame and dual-plane ISM setting.

The active domain used for the residual statistics is defined by the mask
$
    \mathcal M
    :=
    \{
    (i,d)\in\mathcal I
    \mid
    y_{i,d}>0
    \}.
$

Conditioning on $y_{i,d}>0$ modifies the underlying distribution of the observations: the relevant statistical model is no longer a standard Poisson distribution, but rather a \emph{zero-truncated Poisson} distribution. Its conditional mean $\mu_{i,d}^+$ and variance $(\sigma_{i,d}^+)^2$ are given by
\begin{equation}
    \mu_{i,d}^{+}
    :=
    \frac{\nu_{i,d}^\lambda}
    {1-e^{-\nu_{i,d}^\lambda}},
    \qquad
    (\sigma_{i,d}^{+})^2
    :=
    \mu_{i,d}^{+}
    \left(
    1-\mu_{i,d}^{+}e^{-\nu_{i,d}^\lambda}
    \right),
\end{equation}
where
$
\nu_{i,d}^\lambda
=
[\mathbf A_d\mathbf x^\lambda+\mathbf b_d]_i
$
denotes the estimated Poisson mean; see Appendix~\ref{masked_wh_proof} for the derivation of these quantities. The corresponding masked standardized residual field
$
\mathbf Z^{+}
=
\{z_{i,d}^{+}\}
\in
\mathbb R^{N\times D}
$
is then defined as
\begin{equation}
    z_{i,d}^{+}
    :=
    \begin{cases}
        \dfrac{
        y_{i,d}-\mu_{i,d}^{+}
        }{
        \sigma_{i,d}^{+}
        },
        & \text{if } (i,d)\in\mathcal M,
        \\[8pt]
        0,
        & \text{otherwise.}
    \end{cases}
\end{equation}

The optimal regularization parameter is then selected by minimizing the whiteness functional associated with the masked residual field:
\begin{equation}
    \lambda^*
    \in
    \argmin_{\lambda>0}
    \mathcal W(\mathbf Z^{+}_\lambda),
\end{equation}
where the dependence on $\lambda$ has been explicitly reinstated for clarity.

\subsubsection{High-pass filtering for $s^2$ISM} \label{sec:wp_highpass}

We now discuss the adaptation of the masked Residual Whiteness Principle to the $s^2$ISM model~\eqref{s2ISM}. In this setting, the Poisson mean associated with detector frame $d$ is given by
\begin{equation}
    \nu_{i,d}^\lambda
    :=
    \left[
    \mathbf A_{d,1}\mathbf x_1
    +
    \mathbf A_{d,2}\mathbf x_2
    +
    \mathbf b_d
    \right]_i,
\end{equation}
where $\mathbf x_1$ denotes the foreground component, explicitly regularized through the variational model, while $\mathbf x_2$ represents the background component, constrained only by non-negativity. Since the operators $\mathbf A_{d,2}$ are inherently low-pass and the component $\mathbf x_2$ is not explicitly regularized, inaccuracies in the estimated background may introduce structured low-frequency correlations into the residual field $\mathbf Z^+$. As a consequence, the whiteness functional may become dominated by low-frequency modeling errors unrelated to the regularization quality of the foreground reconstruction $\mathbf x_1$. To mitigate this effect, we evaluate the whiteness functional only on the high-frequency components of the residual field. Let
$
P(\boldsymbol\omega)
:=
|\mathcal F(\mathbf Z^+)(\boldsymbol\omega)|^2
$
denote the power spectrum of $\mathbf Z^+$, where $\mathcal F$ is the 2D DFT and $\boldsymbol\omega$ indexes the frequency components. We introduce a high-pass spectral mask by suppressing a low-frequency region
$
\mathcal B
$
centered at $\boldsymbol\omega=\mathbf 0$, defining the filtered power spectrum
\begin{equation}
    \widetilde P(\boldsymbol\omega)
    :=
    \begin{cases}
        0,
        & \text{if } \boldsymbol\omega\in\mathcal B,
        \\
        P(\boldsymbol\omega),
        & \text{otherwise.}
    \end{cases}
\end{equation}

By thus defining
$
P_\lambda(\boldsymbol\omega)
:=
|\mathcal F(\mathbf Z_\lambda^+)(\boldsymbol\omega)|^2
$
to denote the power spectrum of the masked residual field, and letting
$
\widetilde P_\lambda(\boldsymbol\omega)
$
to be its high-pass filtered counterpart, we can thus define the whiteness functional valuated using the filtered spectrum as
\begin{equation}
    \mathcal W(\mathbf Z_\lambda^+)
    :=
    \frac{
    \displaystyle
    \sum_{\boldsymbol\omega}
    \widetilde P_\lambda(\boldsymbol\omega)^2
    }{
    \left(
    \displaystyle
    \sum_{\boldsymbol\omega}
    \widetilde P_\lambda(\boldsymbol\omega)
    \right)^2
    }.
\end{equation}
By such procedure, the selection of $\lambda^*$ becomes primarily driven by the high-frequency structure of the foreground reconstruction and the associated noise statistics, while remaining substantially insensitive to low-frequency modeling errors arising from the unregularized background component $\mathbf x_2$.

Since the structured low-frequency correlations in $\mathbf Z^+$ arise from the low-pass action of $\mathbf A_{d,2}$ on the unregularized component $\mathbf x_2$, we set the extent of $\mathcal B$ according to the frequency support of the corresponding out-of-focus operator $\mathbf{A}_{d,2}$. Let $h_{d,2}$ denote the out-of-focus PSF of channel $d$, and let $\widehat h_{d,2} = \mathcal F(h_{d,2})$ be its optical transfer function (OTF). We take the cutoff radius $\omega_c$ to be a fraction of the spectral support $\omega_{\mathrm{supp}}(\widehat h_{d,2})$ of $\widehat h_{d,2}$, in particular:
\begin{equation}
    \mathcal B
    :=
    \{\,\boldsymbol\omega \;:\; \|\boldsymbol\omega\| < \omega_c\,\},
    \qquad
    \omega_c := \rho\,\omega_{\mathrm{supp}}(\widehat h_{d,2}),
\end{equation}
where 
$\rho \in (0,1]$ controls the extent of the suppressed band. In all our experiments we set $\rho = 0.5$ and select the central detector $d = 12$ for choosing such cut-off, which appeared a stable choice.

\section{Optimization Algorithms}
\label{sec:algorithms}

To solve the variational reconstruction problems~\eqref{Map_min} and~\eqref{eq:prob_s2ISM}, we consider two representative first-order optimization strategies: Proximal Gradient Descent (PGD)~\cite{proximal_gd} and Mirror Descent (MD)~\cite{nemirovskii_yudin_1983}. The objective function combines the multi-frame KL  data fidelity term, a regularization functional, and a non-negativity constraint on the reconstructed signal. The structure of the optimization scheme therefore depends on the smoothness properties of the chosen regularization model discussed in Section~\ref{sec:reg}. For simplicity, we focus primarily on the single-plane ISM setting and discuss the modifications required for the $s^2$ISM case whenever relevant.

\subsection{Proximal Gradient Descent}
\label{sec:PGD}

Proximal Gradient Descent (PGD) is a widely adopted algorithm in the field of signal/image-processing~\cite{proximal_gd,beck}, particularly effective 
for solving composite optimization problems where the objective function splits into a $L$-smooth (see Definition \ref{L-smothness}) and a non-smooth component. Applied to problem~\eqref{Map_min}, the PGD iteration reads:
\begin{equation}
\label{eq:PGD}
\tag{PGD}
    \x^{k+1} = \operatorname{prox}_{\alpha_k (\iota_{\geq 0} + \lambda R)} 
    \left( \x^k - \alpha_k \nabla \Phi^{\text{ISM}} (\x^k; \underline{\mathbf{A}},
    \underline{\mathbf{y}}, \underline{\mathbf b}) \right) 
    = \Pi_{\geq 0} \left\{ \operatorname{prox}_{\alpha_k \lambda R} 
    \left( \x^k - \alpha_k \nabla \Phi^{\text{ISM}} (\x^k; \underline{\mathbf{A}},
    \underline{\mathbf{y}}, \underline{\mathbf{b}}) \right) \right\},
\end{equation}
where $\Pi_{\geq 0}$ denotes the projection onto the non-negative orthant, enforcing the 
constraint encoded by $\iota_{\geq 0}$, $\operatorname{prox}_{\alpha_k \lambda R}$ is the 
proximal operator associated with the regularization term $R$, and $\alpha_k > 0$ is the stepsize. Note that the equivalence in~\eqref{eq:PGD} does not always hold: the proximal operator of a sum of two functions does not generally reduce to the composition of their individual proximal operators. However, 
when one of the two functions is an indicator function, the decomposition is valid by virtue of 
Proposition~II.2 in~\cite{sum_of_prox}.

When the regularization functional $R$ is differentiable with Lipschitz-continuous gradient, computing its proximal operator is unnecessary. Instead, one can directly incorporate the 
gradient of $R$ into the forward step, reducing the scheme to a standard Projected Gradient 
Descent iteration where the only non-smooth operation is the non-negativity projection:
\begin{equation}
\label{eq:ProjGD}
\tag{ProjGD}
    \x^{k+1} = \Pi_{\geq 0}\left(\x^k - \alpha_k \nabla \left( 
    \Phi^{\text{ISM}} (\x^k; \underline{\mathbf{A}},\underline{\mathbf{y}}, \underline{\mathbf{b}}) 
    + \lambda R (\x^k)\right)\right).
\end{equation}

Convergence of both schemes is guaranteed provided the stepsize satisfies $\alpha_k\in(0,2/L)$, where $L$ is the Lipschitz constant of the gradient of the smooth part of the objective. When $R$ is smooth, this smooth part includes the full objective $\Phi^{ISM}(\x; \underline{\mathbf{A}}, \underline{\mathbf{y}},{\underline{\mathbf{b}}}) + \lambda R(\x)$; when $R$ is non-smooth and handled via its proximal operator, only the data fidelity term $\Phi_{\mathbf{b}}^{ISM}(\x; \underline{\mathbf{A}}, \underline{\mathbf{y}})$ contributes to the smooth part, so that $L=L_{\Phi^{\text{ISM}}}$. Since the Lipschitz constant is typically large, leading to prohibitively small stepsizes, we adopt an adaptive backtracking strategy that selects $\alpha_k$ automatically at each iteration $k$ (see Section~\ref{backtracking}). A detailed analysis of the convergence properties of both~\eqref{eq:PGD} and~\eqref{eq:ProjGD} in the specific ISM setting is provided in Appendix~\ref{app: app_conv}.

\paragraph{The case of \text{$s^2$ISM}.}
In the case of problem~\eqref{eq:prob_s2ISM}, minimization is performed over the joint variable
$
\underline{\x}
=
[\x_1,\x_2],
$
with regularization acting only on the foreground component $\x_1$. Defining the gradient step
\[
\underline{\mathbf z}^k
:=
\underline{\x}^k
-
\alpha_k
\nabla
\Phi^{s^2\text{ISM}}
(
\underline{\x}^k;
[\underline{\mathbf A}_1,\underline{\mathbf A}_2],
\underline{\mathbf y},
\underline{\mathbf b}
),
\]
the proximal update decomposes as
$
\underline{\x}^{k+1}
=
\left[
\operatorname{prox}_{\alpha_k\lambda R}(\mathbf z_1^k),
\,
\mathbf z_2^k
\right].
$

When $R$ is the $\ell_1$ regularizer~\eqref{eq:l1_norm}, the proximal operator reduces to the soft-thresholding mapping,
$
\operatorname{prox}_{\alpha_k\lambda R}(\x)
=
\mathcal T_{\lambda\alpha_k}(\x),
$
and the resulting method corresponds to the classical Iterative Shrinkage-Thresholding Algorithm (ISTA), summarized in Algorithm~\ref{alg:ISTA}.  When $R$ is the smoothed Total Variation functional~\eqref{eq:TV_smooth}, its gradient can instead be incorporated directly into the smooth component of the objective, yielding the projected gradient scheme~\eqref{eq:ProjGD}, summarized in Algorithm~\ref{alg:PGD-TV}.

\begin{algorithm}
\caption{Iterative Shrinkage-Thresholding Algorithm (ISTA) for ISM and $s^2$ISM}
\label{alg:ISTA}
\begin{algorithmic}[1]
\Require Modality $\mathsf{M}\in\{\mathrm{ISM},s^2\mathrm{ISM}\}$, regularization parameter $\lambda>0$

\If{$\mathsf{M}=\mathrm{ISM}$}
    \State Set $\mathbf \x^0=\sum_{d=1}^{25}\mathbf A_d^\top \mathbf y_d \in \mathbb R_+^N$
    \State Set $\Phi=\Phi^{\mathrm{ISM}}(\cdot;\underline{\mathbf A},\underline{\mathbf y},\underline{\mathbf b})$ and $L=L_{\Phi^{\mathrm{ISM}}}$
\ElsIf{$\mathsf{M}=s^2\mathrm{ISM}$}
    \State Set $\underline{\mathbf x}^0=
    \left[
    \sum_{d=1}^{25}\mathbf A_{d,1}^\top\mathbf y_d,\,
    \sum_{d=1}^{25}\mathbf A_{d,2}^\top\mathbf y_d
    \right]\in\mathbb R_+^{N\times2}$
    \State Set $\Phi=\Phi^{s^2\mathrm{ISM}}(\cdot;[\underline{\mathbf A}_1,\underline{\mathbf A}_2],\underline{\mathbf y},\underline{\mathbf b})$ and $L=L_{\Phi^{s^2\mathrm{ISM}}}$
\EndIf

\Require Stepsize $\alpha_k$ chosen with Algorithm \ref{alg:backtracking_adaptive}

\For{$k=0,1,2,\dots$}
    \If{$\mathsf{M}=\mathrm{ISM}$}
        \State $\mathbf{x}^{k+1}
        =
        \Pi_{\geq0}\!\left(
        \mathcal T_{\lambda\alpha_k}
        \left(
        \mathbf{x}^k
        -
        \alpha_k\nabla\Phi(\mathbf{x}^k)
        \right)
        \right)$
    \ElsIf{$\mathsf{M}=s^2\mathrm{ISM}$}
        \State $\underline{\mathbf{x}}^{k+1}
        =
        \left[
        \Pi_{\geq0}\!\left(
        \mathcal T_{\lambda\alpha_k}
        \left(
        \mathbf{x}_1^k
        -
        \alpha_k
        \nabla_{\mathbf{x}_1}\Phi(\underline{\mathbf{x}}^k)
        \right)
        \right),
        \,
        \Pi_{\geq0}\!\left(
        \mathbf{x}_2^k
        -
        \alpha_k
        \nabla_{\mathbf{x}_2}\Phi(\underline{\mathbf{x}}^k)
        \right)
        \right]
        $
    \EndIf
\EndFor
\end{algorithmic}
\end{algorithm}
\begin{algorithm}
\caption{Projected Gradient Descent with smoothed TV for ISM and $s^2$ISM}
\label{alg:PGD-TV}
\begin{algorithmic}[1]

\Require Modality $\mathsf{M}\in\{\mathrm{ISM},s^2\mathrm{ISM}\}$, regularization parameter $\lambda>0$

\If{$\mathsf{M}=\mathrm{ISM}$}
    \State Set $\mathbf x^0=\sum_{d=1}^{25}\mathbf A_d^\top \mathbf y_d \in \mathbb R_+^N$
    \State Set
$
    \Phi
    =
    \Phi^{\mathrm{ISM}}
    (\cdot;\underline{\mathbf A},\underline{\mathbf y},\underline{\mathbf b})
$
\ElsIf{$\mathsf{M}=s^2\mathrm{ISM}$}
    \State Set
$
    \underline{\mathbf x}^0
    =
    \left[
    \sum_{d=1}^{25}\mathbf A_{d,1}^\top\mathbf y_d,\,
    \sum_{d=1}^{25}\mathbf A_{d,2}^\top\mathbf y_d
    \right]
    \in
    \mathbb R_+^{N\times2}
$
    \State Set
$
    \Phi
    =
    \Phi^{s^2\mathrm{ISM}}
    (\cdot;[\underline{\mathbf A}_1,\underline{\mathbf A}_2],\underline{\mathbf y},\underline{\mathbf b})
$
\EndIf

\Require Stepsize $\alpha_k$ chosen with Algorithm \ref{alg:backtracking_adaptive}

\For{$k=0,1,2,\dots$}

    \If{$\mathsf{M}=\mathrm{ISM}$}
        \State
        $
        \mathbf{x}^{k+1}
        =
        \Pi_{\geq0}
        \left(
        \mathbf{x}^k
        -
        \alpha_k
        \nabla
        \left(
        \Phi(\mathbf{x}^k)
        +
        \lambda\operatorname{TV}_\epsilon(\mathbf{x}^k)
        \right)
        \right)
        $

    \ElsIf{$\mathsf{M}=s^2\mathrm{ISM}$}
        \State
        $
        \underline{\mathbf{x}}^{k+1}
        =
        \left[
        \Pi_{\geq0}
        \left(
        \mathbf{x}_1^k
        -
        \alpha_k
        \left(
        \nabla_{\mathbf{x}_1}\Phi(\underline{\mathbf{x}}^k)
        +
        \lambda\nabla\operatorname{TV}_\epsilon(\mathbf{x}_1^k)
        \right)
        \right),
        \,
        \Pi_{\geq0}
        \left(
        \mathbf{x}_2^k
        -
        \alpha_k
        \nabla_{\mathbf{x}_2}\Phi(\underline{\mathbf{x}}^k)
        \right)
        \right]
        $
    \EndIf

\EndFor
\end{algorithmic}
\end{algorithm}

\subsubsection{Mirror Descent}
\label{sec:MD}

Mirror Descent (MD) can be viewed as a generalization of gradient descent to a 
different geometry suited to deal with the potential constraints imposed on the desired solution,
thus avoiding the introduction of an explicit projection step. To get an intuition of how the optimization schemes change in this setting, we consider the case where the regularization functional $R$ is differentiable on the positive orthant. In this case, the Mirror Descent (MD) \cite{nemirovskii_yudin_1983} iteration associated with problem~\eqref{Map_min} reads:
\begin{equation}
\mathbf{x}^{k+1}
\in
\argmin_{\mathbf{x}\geq0}
\left\langle
\nabla\Phi^{\mathrm{ISM}}
(\mathbf{x}^k;\underline{\mathbf A},\underline{\mathbf y},\underline{\mathbf{b}})
+
\lambda\nabla R(\mathbf{x}^k),
\,
\mathbf{x}-\mathbf{x}^k
\right\rangle
+
\frac{1}{\alpha_k}
D_h(\mathbf{x},\mathbf{x}^k), \label{eq:implicit MD for KL}
\end{equation}
where $h$ is the function which takes into account problem constraints (see Definition~\ref{def:Legendre function}) and induces the underlying geometry,  $D_h$ is the associated Bregman divergence generalizing the standard $\ell^2$ distance to such non-Euclidean case (Definition~\ref{Bregman divergence}). Intuitively, the function $h$ acts as a geometric barrier: its gradient diverges near the boundary of its domain, so that the algorithmic iterates are naturally constrained to remain inside the feasible set. 
Motivated by previous work \cite{Bolte2018}, for the particular constraint set considered in this work we consider the logarithmic barrier (so-called Burg's entropy):
\begin{equation}
    h(\mathbf{x})
=
-\sum_{i=1}^{N}\log(x_i), \label{eq: burg's entropy}
\end{equation}
which automatically enforces positivity of the iterates and eliminates the need for an explicit projection onto the non-negative orthant. Under this choice, the implicit update~\eqref{eq:implicit MD for KL} admits the multiplicative form~\cite{nolip}
\begin{equation}
\label{eq:explicit_md_step}
\mathbf{x}^{k+1}
=
\frac{
\mathbf{x}^{k}
}{
\mathbf 1
+
\alpha_k\,\mathbf{x}^k
\cdot
\left(
\nabla\Phi^{\mathrm{ISM}}
(\mathbf{x}^k;\underline{\mathbf A},\underline{\mathbf y},\underline{\mathbf{b}})
+
\lambda\nabla R(\mathbf{x}^k)
\right)
},
\end{equation}
where all operations are understood component-wise. The resulting iteration is multiplicative in nature and closely resembles the Richardson--Lucy scheme~\eqref{eq: Richardson-Lucy}.

Convergence of \eqref{eq:implicit MD for KL} is guaranteed by choosing a stepsize adapted to the geometry induced by $h$, typically proportional to the inverse of a generalized Lipschitz smoothness constant satisfying a suitable  condition (Definition~\ref{NoLip}). Such constant  plays an analogous role to the Lipschitz constant $L$ in PGD. As in~\eqref{eq:PGD}, large values of such generalized $L$ may lead to prohibitively small stepsizes, hence we adopt an adaptive backtracking strategy that selects $\alpha_k$ automatically at each iteration $k$ (see Section~\ref{backtracking}). Additional convergence details are provided in Appendix~\ref{app: app_conv}.

Note that when $R$ is the $\ell_1$ regularizer~\eqref{eq:l1_norm}, the positivity constraint induced by the logarithmic barrier makes $R$ differentiable on the feasible set. Consequently, the regularization term can be incorporated directly into the forward step without requiring any proximal evaluation, leading to the scheme reported in Algorithm~\ref{alg:Mirror-L1-Combined}.  When $R$ is the smoothed Total Variation functional~\eqref{eq:TV_smooth}, the corresponding Mirror Descent scheme is summarized in Algorithm~\ref{alg:Mirror-TV-Combined}.


\begin{algorithm}
\caption{Mirror Descent with $\ell_1$ regularization for ISM and $s^2$ISM}
\label{alg:Mirror-L1-Combined}
\begin{algorithmic}[1]

\Require Modality $\mathsf{M}\in\{\mathrm{ISM},s^2\mathrm{ISM}\}$, regularization parameter $\lambda>0$

\If{$\mathsf{M}=\mathrm{ISM}$}
    \Require Initial point $\mathbf{x}^0=\sum_{d=1}^{25}\mathbf A_d^\top\mathbf y_d\in\mathbb R^N_{>0}$, stepsize $\alpha_k$ chosen with (\ref{alg:backtracking_adaptive})
    \State Set $\Phi=\Phi^{\mathrm{ISM}}(\cdot;\underline{\mathbf A},\underline{\mathbf y},\underline{\mathbf b})$
\ElsIf{$\mathsf{M}=s^2\mathrm{ISM}$}
    \Require Initial point $\underline{\mathbf x}^0=
    \left[
    \sum_{d=1}^{25}\mathbf A_{d,1}^\top\mathbf y_d,\,
    \sum_{d=1}^{25}\mathbf A_{d,2}^\top\mathbf y_d
    \right]
    \in
    \mathbb R^{N\times2}_{>0}$,
    stepsize $\alpha_k$ chosen with Algorithm \ref{alg:backtracking_adaptive}
    \State Set $\Phi=\Phi^{s^2\mathrm{ISM}}(\cdot;[\underline{\mathbf A}_1,\underline{\mathbf A}_2],\underline{\mathbf y},\underline{\mathbf b})$
\EndIf

\For{$k=0,1,2,\dots$}

    \If{$\mathsf{M}=\mathrm{ISM}$}
        \State $\mathbf x^{k+1}=
        \frac{
        \mathbf x^k
        }{
        \mathbf 1
        +
        \alpha_k\,\mathbf x^k\cdot
        \left(
        \nabla\Phi(\mathbf x^k)
        +
        \lambda\mathbf 1
        \right)
        }$
    \ElsIf{$\mathsf{M}=s^2\mathrm{ISM}$}
        \State $\underline{\mathbf x}^{k+1}=
        \left[
        \frac{
        \mathbf x_1^k
        }{
        \mathbf 1
        +
        \alpha_k\,\mathbf x_1^k\cdot
        \left(
        \nabla_{\mathbf x_1}\Phi(\underline{\mathbf x}^k)
        +
        \lambda\mathbf 1
        \right)
        },
        \,
        \dfrac{
        \mathbf x_2^k
        }{
        \mathbf 1
        +
        \alpha_k\,\mathbf x_2^k\cdot
        \nabla_{\mathbf x_2}\Phi(\underline{\mathbf x}^k)
        }
        \right]$
    \EndIf

\EndFor
\end{algorithmic}
\end{algorithm}

\begin{algorithm}
\caption{Mirror Descent with smoothed TV for ISM and $s^2$ISM}
\label{alg:Mirror-TV-Combined}
\begin{algorithmic}[1]

\Require Modality $\mathsf{M}\in\{\mathrm{ISM},s^2\mathrm{ISM}\}$, regularization parameter $\lambda>0$

\If{$\mathsf{M}=\mathrm{ISM}$}
    \State Set
$
    \mathbf{x}^0
    =
    \sum_{d=1}^{25}\mathbf A_d^\top\mathbf y_d
    \in
    \mathbb R^N_{>0}
$
    \State Set
$
    \Phi
    =
    \Phi^{\mathrm{ISM}}
    (\cdot;\underline{\mathbf A},\underline{\mathbf y},\underline{\mathbf b})
$
\ElsIf{$\mathsf{M}=s^2\mathrm{ISM}$}
    \State Set
$
    \underline{\mathbf x}^0
    =
    \left[
    \sum_{d=1}^{25}\mathbf A_{d,1}^\top\mathbf y_d,\,
    \sum_{d=1}^{25}\mathbf A_{d,2}^\top\mathbf y_d
    \right]
    \in
    \mathbb R^{N\times2}_{>0}
$
    \State Set
$
    \Phi
    =
    \Phi^{s^2\mathrm{ISM}}
    (\cdot;[\underline{\mathbf A}_1,\underline{\mathbf A}_2],\underline{\mathbf y},\underline{\mathbf b})
$
\EndIf

\Require Stepsize $\alpha_k$ chosen with Algorithm \ref{alg:backtracking_adaptive}

\For{$k=0,1,2,\dots$}

    \If{$\mathsf{M}=\mathrm{ISM}$}
        \State
 $
        \mathbf x^{k+1}
        =
        \frac{
        \mathbf x^k
        }{
        \mathbf 1
        +
        \alpha_k
        \left(
        \nabla\Phi(\mathbf x^k)
        +
        \lambda\nabla\operatorname{TV}_\epsilon(\mathbf x^k)
        \right)
        }
$
    \ElsIf{$\mathsf{M}=s^2\mathrm{ISM}$}
        \State
$
        \underline{\mathbf x}^{k+1}
        =
        \left[
        \frac{
        \mathbf x_1^k
        }{
        \mathbf 1
        +
        \alpha_k
        \left(
        \nabla_{\mathbf x_1}\Phi(\underline{\mathbf x}^k)
        +
        \lambda
        \nabla\operatorname{TV}_\epsilon(\mathbf x_1^k)
        \right)
        },
        \,
        \frac{
        \mathbf x_2^k
        }{
        \mathbf 1
        +
        \alpha_k
        \nabla_{\mathbf x_2}\Phi(\underline{\mathbf x}^k)
        }
        \right]
$
    \EndIf

\EndFor
\end{algorithmic}
\end{algorithm}

Notice that the initial vector $\x_0$, obtained by initializing with the transpose, always provides a rough first estimate of $\x^*$, meaning we start closer to the solution whether the problem is convex or strictly convex.

\subsection{Stepsize selection via adaptive backtracking}
\label{backtracking}

The algorithms introduced above are presented using fixed stepsizes determined by the inverse of a global smoothness constant, namely the Lipschitz constant $L$ in the Euclidean setting, or its generalized counterpart in the Mirror Descent framework (see Definition~\ref{NoLip}). While such choices guarantee convergence, they are often overly conservative in practice. In the ISM/s$^2$ISM Poisson setting, global bounds typically correspond to worst-case estimates (see, e.g., \eqref{eq:L_ISM} and \eqref{eq:L_ISM_TV}), leading to very small admissible stepsizes and consequently slow numerical convergence.

To improve the practical efficiency of the schemes, we employ adaptive nonmonotone backtracking strategies that automatically adjust the stepsize during the iterations. In the Euclidean setting, such strategies are classical for proximal and projected gradient methods, where they are used to avoid relying on global smoothness estimates while still ensuring sufficient descent of the objective functional. The variants considered here are inspired by the adaptive strategies proposed in~\cite{CalatroniChambolle2019,RebegoldiCalatroni2022} for accelerated first-order schemes. At iteration $k$, a tentative stepsize larger than the previously accepted one is first proposed through an expansion factor $\delta\in(0,1)$. A candidate iterate is then computed and tested against a suitable descent condition. If the condition is satisfied, the stepsize is accepted; otherwise, it is progressively reduced through a contraction factor $\eta\in(0,1)$ until the descent criterion holds.

While such backtracking rules are standard in Euclidean proximal-gradient settings, their extension to the non-Euclidean Mirror Descent geometry is, to the best of our knowledge, less explored theoretically. In this work, we therefore adopt the corresponding MD backtracking rule mainly from a practical perspective, motivated by its favorable empirical behavior.

The resulting adaptive routine is summarized in Algorithm~\ref{alg:backtracking_adaptive}. There, $\mathbf u$ denotes either $\mathbf x$ in the ISM case or $\underline{\mathbf x}$ in the $s^2$ISM case. The functional $\mathcal J$ denotes the full objective minimized by the chosen algorithm, while $f$ denotes its smooth part.

\begin{algorithm}
\caption{Adaptive backtracking (AB)}
\label{alg:backtracking_adaptive}
\begin{algorithmic}[1]
\Require Algorithm $\mathcal A$, objective $\mathcal J$, initial point $\mathbf u^0$, initial stepsize $\alpha^0>0$, expansion factor $\delta\in(0,1)$, contraction factor $\eta\in(0,1)$
\For{$k=0,1,2,\dots$}
    \State $\alpha \gets \alpha^k/\delta$
    \Repeat
        \State Compute $\mathbf u^{\mathrm{cand}}=\mathcal A(\mathbf u^k,\alpha)$
        \If{$\mathcal A$ is PGD / ProjGD}
            \State $\Delta \gets \frac{1}{2\alpha}\|\mathbf u^{\mathrm{cand}}-\mathbf u^k\|_2^2
            +\left\langle \nabla f(\mathbf u^k),\mathbf u^{\mathrm{cand}}-\mathbf u^k\right\rangle$
        \ElsIf{$\mathcal A$ is MD}
            \State $\Delta \gets -\frac{d}{\alpha}D_h(\mathbf u^{\mathrm{cand}},\mathbf u^k)$, with $d>0$
        \EndIf
        \If{$\mathcal J(\mathbf u^{\mathrm{cand}})\leq \mathcal J(\mathbf u^k)+\Delta$}
            \State $\alpha^{k+1}\gets\alpha$
            \State $\mathbf u^{k+1}\gets\mathbf u^{\mathrm{cand}}$
        \Else
            \State $\alpha\gets\eta\alpha$
        \EndIf
    \Until{$\alpha$ is accepted}
\EndFor
\end{algorithmic}
\end{algorithm}

Algorithm~\ref{alg:backtracking_adaptive} adapts the sufficient decrease condition to the underlying optimization geometry: Euclidean for PGD-type schemes and Bregman-based for MD. The same strategy applies directly to the $s^2$ISM setting. 

The empirical behavior of the proposed adaptive backtracking procedure is illustrated in Section \ref{real_results} , where Figure~\ref{fig:backtracking} shows the evolution of the stepsize $\alpha_k$ for the different algorithms. As shown, the adaptive strategy consistently finds stepsizes that are several orders of magnitude larger than the pessimistic global theoretical lower bounds $1/L$, validating the practical efficiency of Algorithm~\ref{alg:backtracking_adaptive}. 

\section{Numerical Results on simulated ISM data}

In this section, we present reconstruction results obtained on simulated aberrated tubulin data. Simulations are performed using the BrightEyes-ISM simulator \cite{zunino2023open}, freely available online\footnote{https://github.com/VicidominiLab/BrightEyes-ISM}.

From now on, for the sake of visualization, images marked with the label ``(sat)'' are displayed with a saturated intensity scale: the upper limit of the colormap is set to the $99.99$-th percentile of the intensity values, rather than to the actual maximum. This saturation suppresses the visual impact of a few extremely bright outlier pixels, which would otherwise compress the dynamic range and hide the finer structures of the reconstruction. The saturation affects only the display and not the quantitative metrics computed.

\subsection{The MID case}

\subsubsection{Simulation setup}
We consider a normalized ground-truth image
$
\tilde{\mathbf x}_{GT}\in[0,1],
$
scaled by a flux factor $F>0$ chosen to produce realistic photon-count levels. The measurements
$
\mathbf y_1,\dots,\mathbf y_{25}
$
are generated according to the forward model
\begin{equation}
    \mathbf y_d
    =
    \operatorname{Poiss}
    \left(
    \mathbf A_d(F\tilde{\mathbf x}_{GT})
    +
    \mathbf b_d
    \right),
    \qquad
    d=1,\dots,25,
\end{equation}
where $\mathbf A_d$ denotes the simulated PSF operator associated with the $d$-th detector element, and $\mathbf b_d$ is a detector-dependent spatially constant background term defined as
\begin{equation}
\label{eq:bg_index}
    \eta_d
    =
    \frac{f_d}{f_{13}},
    \qquad
    b_d
    =
    \beta\eta_d,
    \qquad
    d\in\{1,\dots,25\},
\end{equation}
where $f_d$ denotes the fingerprint of the $d$-th detector, defined as the sum of all entries of the kernel associated with $\mathbf A_d$, and $\beta=10^{-1}$. This choice makes the background level proportional to the average signal intensity collected by each detector element.

For the experiments below, we simulate a $256\times256$ tubulin structure with flux factor
$
F=20,
$
shown in Figure~\ref{fig:gt_tub2d}. The ISM dataset is generated by convolving the ground-truth image with the 25 detector-dependent PSFs shown in Figure~\ref{fig:psf_sim}, adding the corresponding constant background terms, and applying Poisson noise independently to each detector channel. An example of resulting low photon-count noisy measurements is shown in Figure~\ref{fig:noisy_tub}.

\begin{figure}
    \centering
    \begin{subfigure}{0.32\textwidth}
        \centering
        \includegraphics[width=\linewidth]{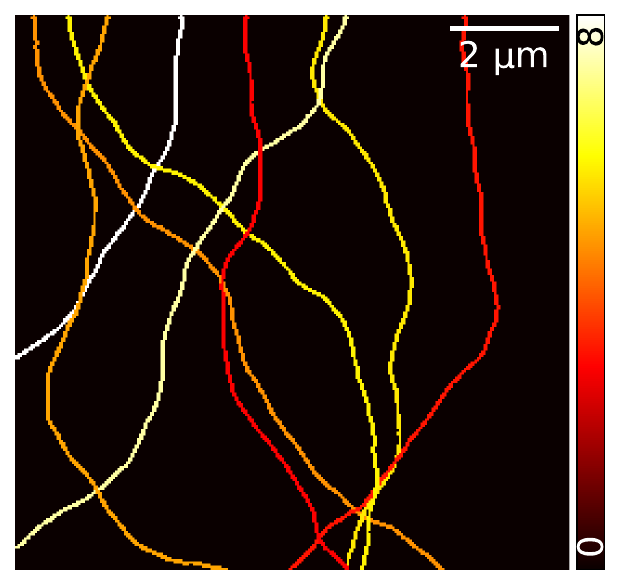}
        \caption{$\tilde{\mathbf{x}}_{GT}$.}
        \label{fig:gt_tub2d}
    \end{subfigure}
    \hfill
    \begin{subfigure}{0.32\textwidth}
        \centering
        \includegraphics[width=0.98\linewidth]{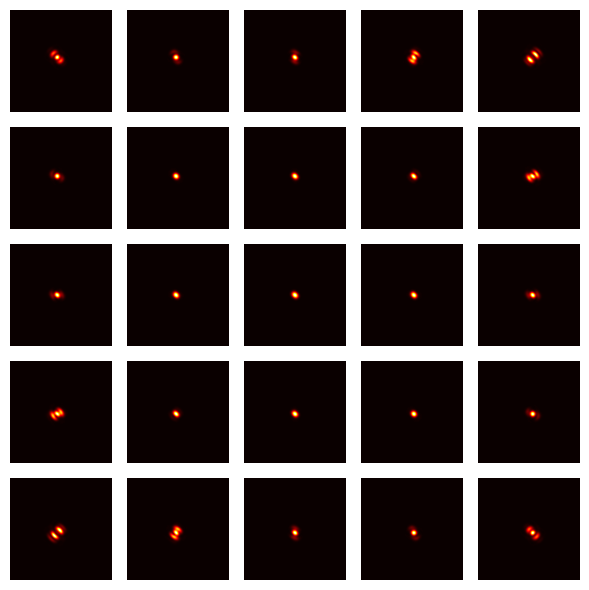}
        \caption{Simulated ISM PSFs.}
        \label{fig:psf_sim}
    \end{subfigure}
    \hfill
    \begin{subfigure}{0.32\textwidth}
    \centering
    \includegraphics[width=\linewidth]{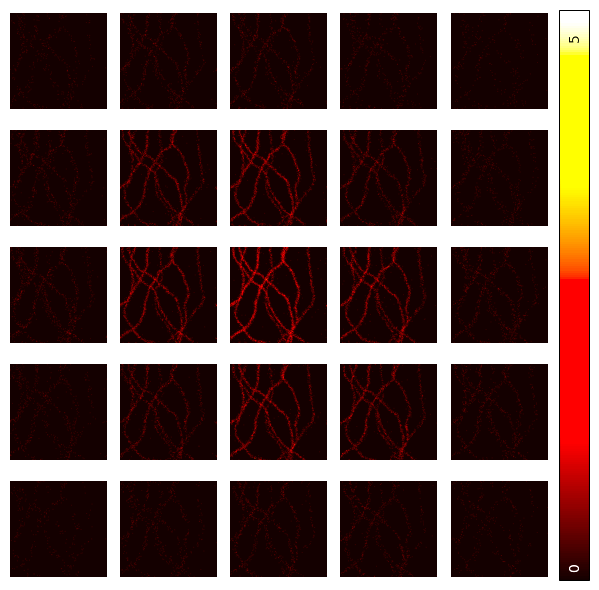}
    \caption{Simulated $\mathbf{y}_1,\ldots,\mathbf{y}_{25}$.}
            \label{fig:noisy_tub}
\end{subfigure}
    \caption{Simulation setup: (a) Ground truth tubulin structure. (b) Simulated ISM PSFs. For visualization purposes, each PSF is individually normalized rather than globally normalized across the dataset. (c) Simulated low-photon count ISM measurements.}
\end{figure}

\subsubsection{Evaluation protocol}

Reconstruction quality is assessed using two standard metrics: the Peak Signal-to-Noise Ratio (PSNR) and the Improved Structural Similarity for Comparing Microscopy Data (MicroSSIM)~\cite{microssim}, a similarity metric with values in $[0,1]$ specifically designed for microscopy applications. For consistency, all reconstructed images are rescaled to $[0,1]$ before comparison with $\tilde{\mathbf{x}}_{GT}$. Convergence is declared when the relative change between consecutive iterates satisfies
\begin{equation}
\label{conv_crit}
    \frac{\|\mathbf{x}^{k+1} - \mathbf{x}^k\|}
    {\|\mathbf{x}^{k+1}\|}
    \leq
    \texttt{tol},
    \qquad
    \texttt{tol}=10^{-5}.
\end{equation}
We compute reconstructions by solving problem~\eqref{Map_min} using Algorithms~\ref{alg:ISTA}, \ref{alg:PGD-TV}, \ref{alg:Mirror-L1-Combined}, and~\ref{alg:Mirror-TV-Combined}. Since the objective functional is convex but not strictly convex, minimizers are not necessarily unique. Consequently, different optimization schemes may converge to different numerical minimizers even under the same regularization model, motivating the comparison between both PGD- and MD-based solvers.

The regularization parameter $\lambda$ is selected independently for each reconstruction algorithm using the RWP strategy described in Section~\ref{secRPW}. Indeed, the residual whiteness functional depends implicitly on the reconstruction $\mathbf{x}^\lambda$, which in turn depends on the optimization dynamics and on the specific minimizer reached by the corresponding algorithm.
 For each method, we evaluate the whiteness functional $\mathcal W(\lambda)$ over a logarithmically distributed grid of $150$ candidate parameters:
\begin{equation}
    \Lambda
    =
    \{\lambda_i\}_{i=1}^{150},
    \qquad
    \lambda_i\in[\lambda_{\text{min}},\lambda_{\text{max}}].
\end{equation}
For every $\lambda\in\Lambda$, the corresponding algorithm is run until numerical convergence and the whiteness metric is evaluated on the masked residual field as described in Section~\ref{secRPW}. The optimal parameter $\lambda^*$ is then selected by minimizing $\mathcal W(\lambda)$.

Finally, visual reconstruction results and quantitative scores are reported in Figure~\ref{fig:sim_result_2d}. TV regularization provides the best overall reconstruction quality, effectively suppressing the point-like artifacts appearing in the $\ell_1$ reconstructions and achieving the highest quantitative scores. We also observe that RL and MD-$\ell_1$ reconstructions exhibit values outside the expected measurement range, reflecting the multiplicative nature of both schemes.

\begin{figure}
    \centering

    \newcommand{\methodhead}[3]{%
        \shortstack{\footnotesize #1\\[1pt]
                    \scriptsize PSNR=#2\\[-1pt]
                    \scriptsize MICROSSIM=#3}}
    \setlength{\tabcolsep}{1.5pt}
    \renewcommand{\arraystretch}{0.3}
    \newcommand{\tw}{0.225\linewidth} 

    \begin{tabular}{@{}c cccc@{}}
        & \methodhead{Noisy sum}{16.73}{0.221}
        & \methodhead{RL ($k=5$)}{16.14}{0.71}
        & \methodhead{RL ($k=10000$) (sat)}{15.95}{0.76}
        & \methodhead{ProjGD ($\lambda=0$)}{16.40}{0.74} \\[3pt]

        \rotatebox{90}{\scriptsize Full} &
        \includegraphics[width=\tw]{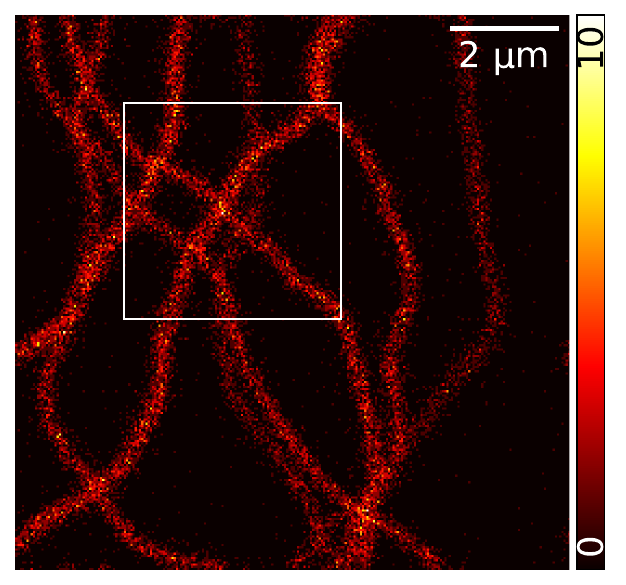} &
        \includegraphics[width=\tw]{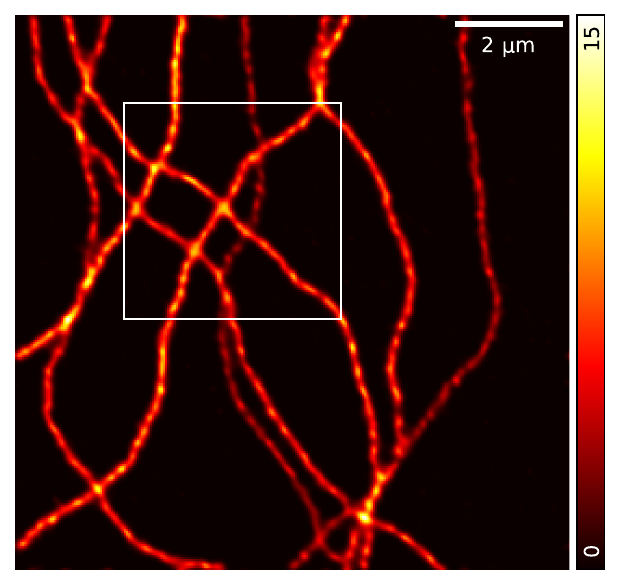} &
        \includegraphics[width=\tw]{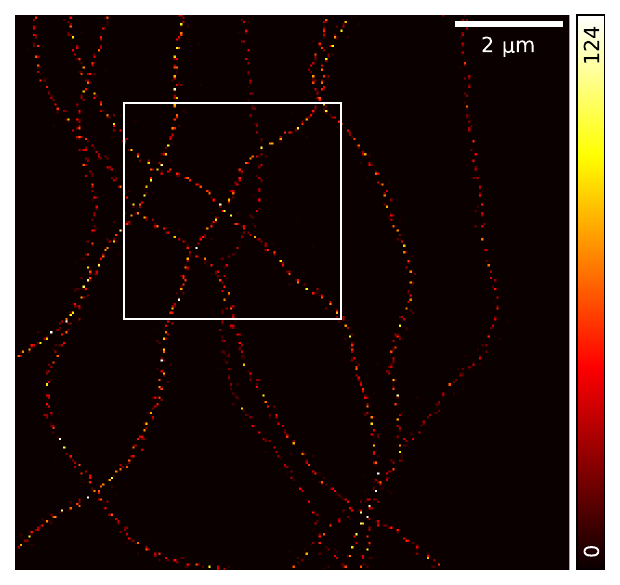} &
        \includegraphics[width=\tw]{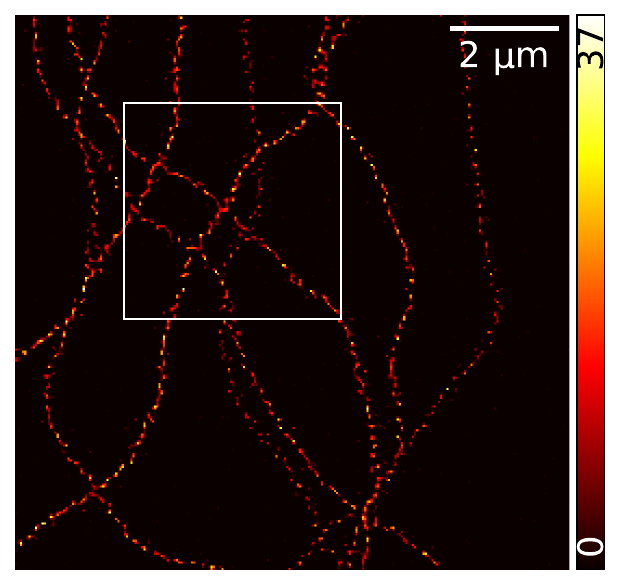} \\

        \rotatebox{90}{\scriptsize Crop} &
        \includegraphics[width=\tw]{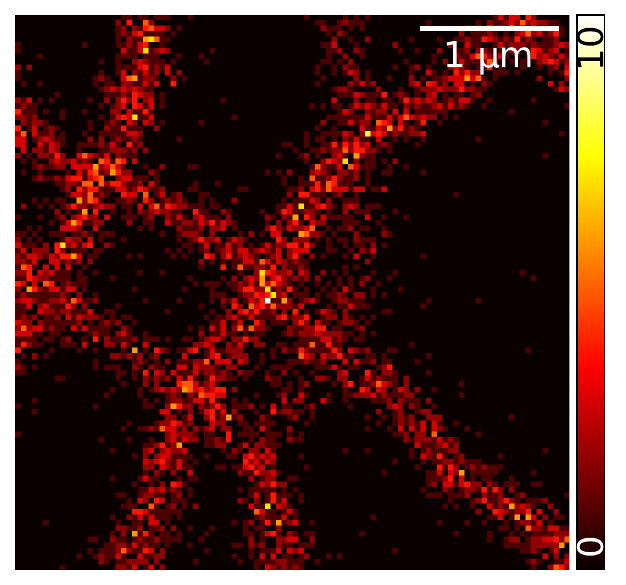} &
        \includegraphics[width=\tw]{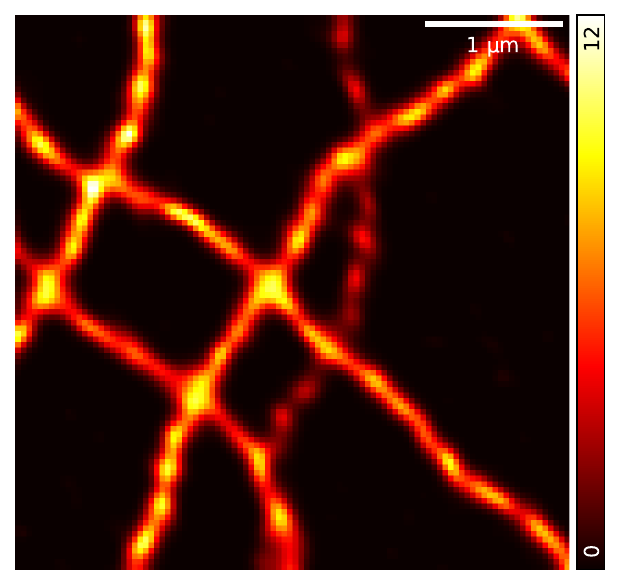} &
        \includegraphics[width=\tw]{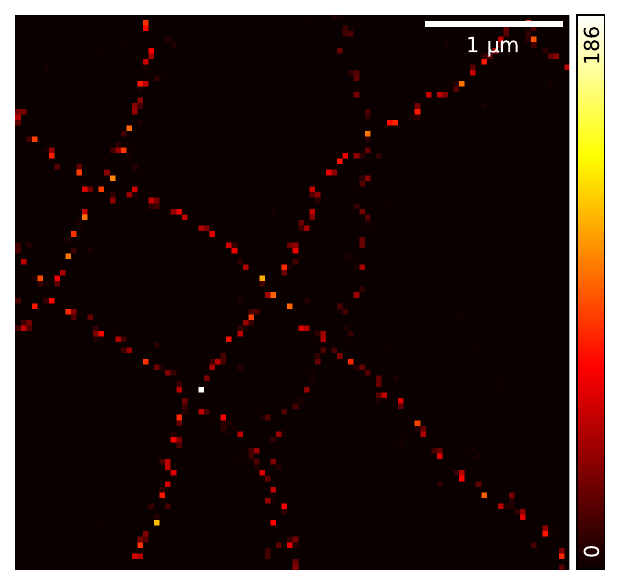} &
        \includegraphics[width=\tw]{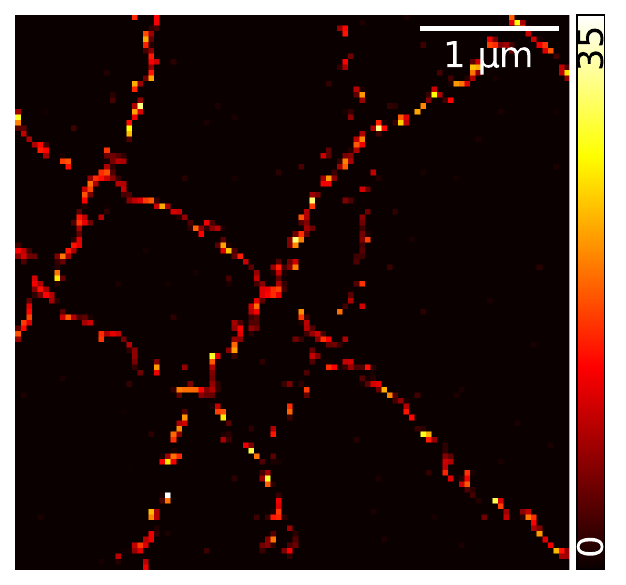} \\
    \end{tabular}

    \vspace{0.3cm}

    \begin{tabular}{@{}c cccc@{}}
        & \methodhead{ProxGD-$\ell_1$ ($\lambda=5.37\!\times\!10^{-2}$)}{17.61}{0.78}
        & \methodhead{MD-$\ell_1$ ($\lambda=8.05\!\times\!10^{-2}$) (sat)}{15.98}{0.822}
        & \methodhead{PGD-TV ($\lambda=6.72\!\times\!10^{-3}$)}{19.40}{0.84}
        & \methodhead{MD-TV ($\lambda=8.05\!\times\!10^{-2}$) (sat)}{17.45}{0.81} \\[3pt]

        \rotatebox{90}{\scriptsize Full} &
        \includegraphics[width=\tw]{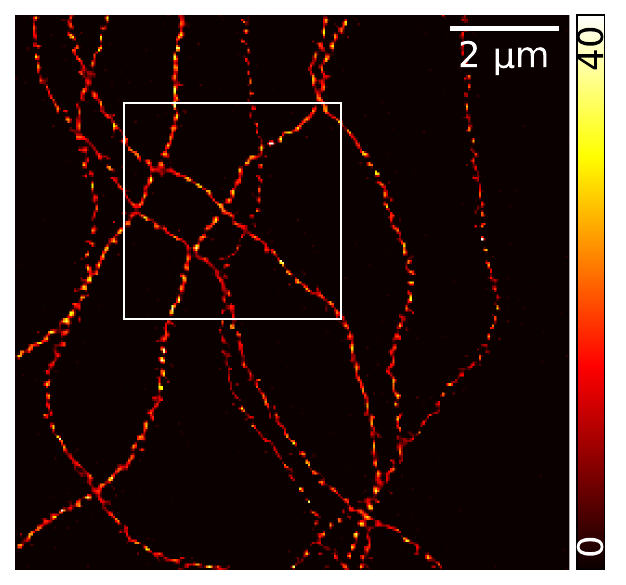} &
        \includegraphics[width=\tw]{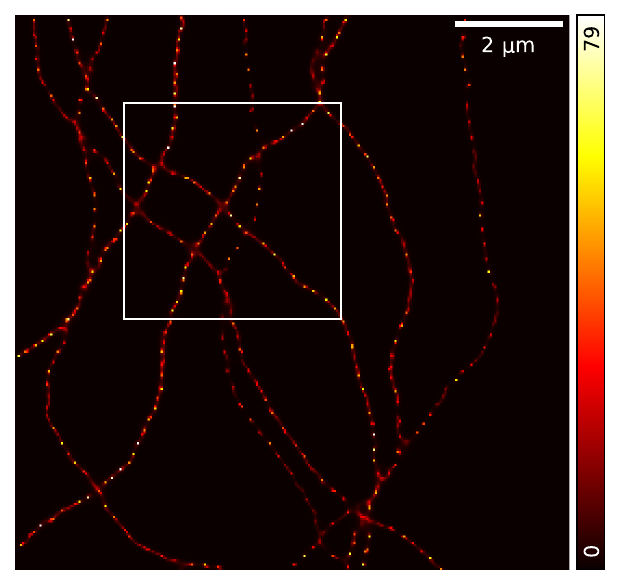} &
        \includegraphics[width=\tw]{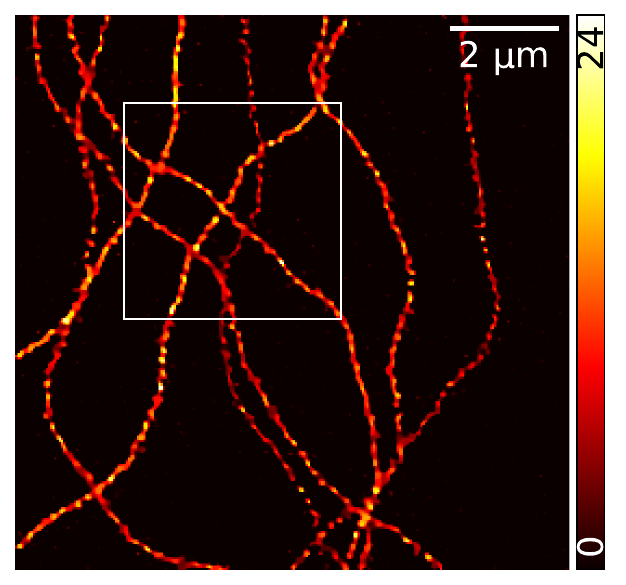} &
        \includegraphics[width=\tw]{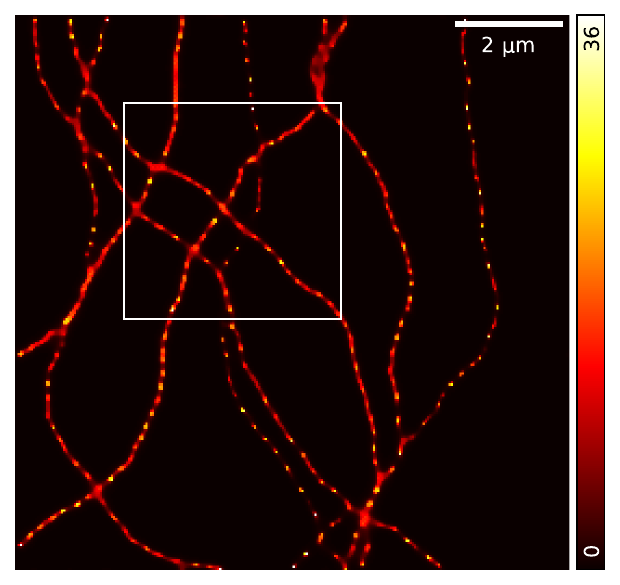} \\

        \rotatebox{90}{\scriptsize Crop} &
        \includegraphics[width=\tw]{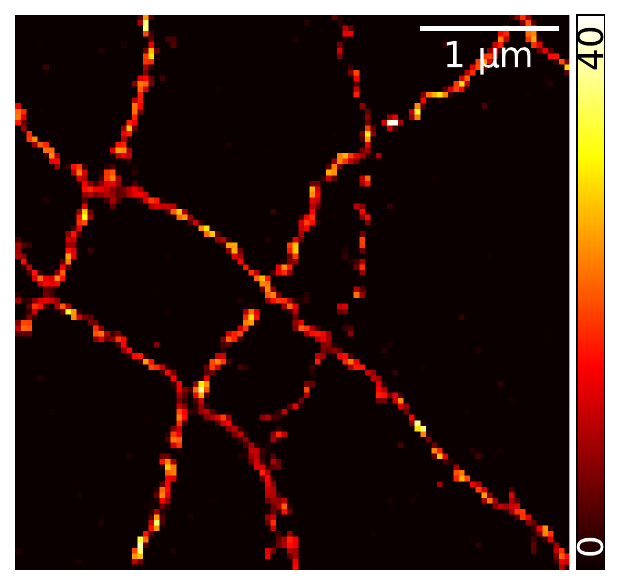} &
        \includegraphics[width=\tw]{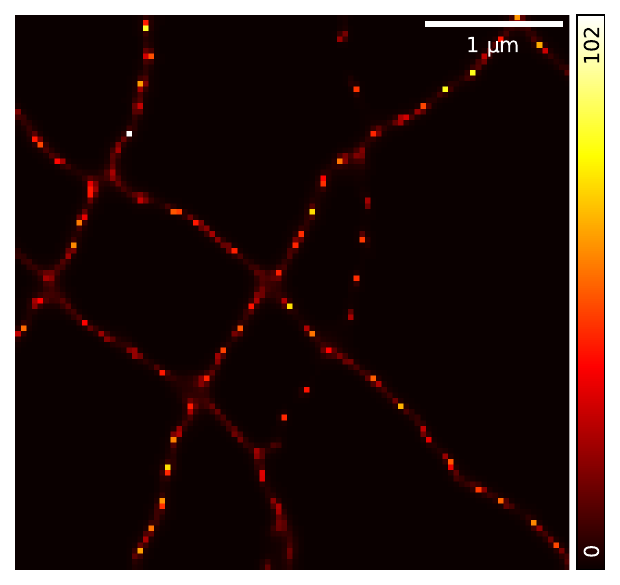} &
        \includegraphics[width=\tw]{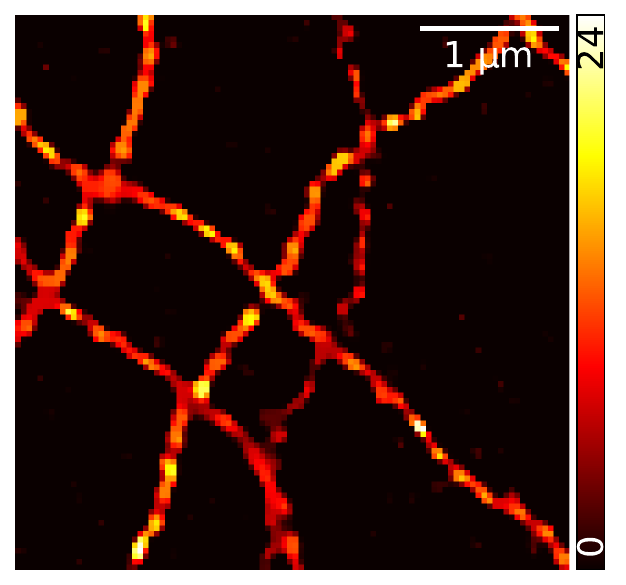} &
        \includegraphics[width=\tw]{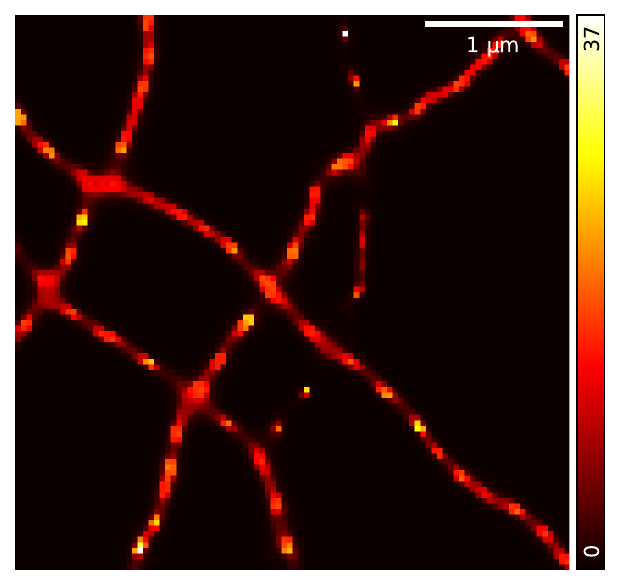} \\
    \end{tabular}

    \caption{Simulated tubulin reconstruction results. Rows 1--2: noisy sum $\bar{\y} = \sum_{d=1}^{25}\y_d$ and 
    Richardson--Lucy at two different stopping iterations, alongside the unregularized PGD 
    ($\lambda=0$) baseline. Rows 3--4: regularized reconstructions obtained with 
    PGD-TV, MD-TV, ProxGD-$\ell_1$, and MD-$\ell_1$, with optimal values $\lambda^*$ selected via the RWP criterion.}
    \label{fig:sim_result_2d}
\end{figure}

\subsection{The $s^2$ISM case}

We now evaluate the proposed reconstruction framework in a dual-plane imaging scenario, considering the $s^2$ISM acquisition model introduced in Section~\ref{sec: mathematical modeling for $s^2ISM$}. The goal of this setting is to recover the in-focus signal while suppressing out-of-focus fluorescence contributions. To simulate this acquisition geometry, we construct a composite scene consisting of two distinct tubulin structures: a foreground (in-focus) component, denoted by $\x_{1,\mathrm{GT}}$, and a background (out-of-focus) component, denoted by $\x_{2,\mathrm{GT}}$. The two ground-truth images, shown in Figure~\ref{fig:gt_3d_tubulin}, are independently scaled using the same flux-scaling procedure adopted in the single-plane experiments, with flux factor $F=20$.

\begin{figure}
  \centering
  \begin{subfigure}{0.32\linewidth}
    \centering
    \includegraphics[width=\linewidth]{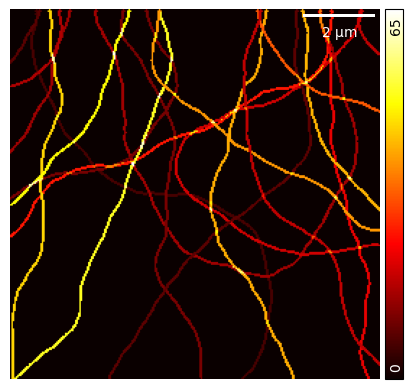} 
    \caption{In-focus ground truth $\x_{1,\mathrm{GT}}$}
    \label{fig:tub_forw}
  \end{subfigure}\hfill
  \begin{subfigure}{0.32\linewidth}
    \centering
    \includegraphics[width=\linewidth]{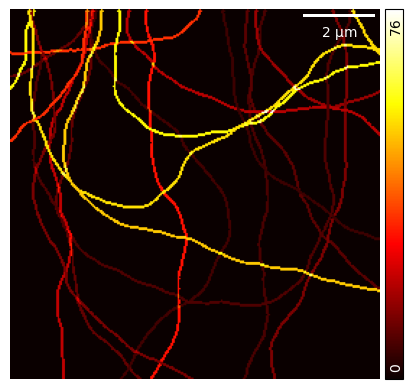}
    \caption{Out-of-focus ground truth $\x_{2,\mathrm{GT}}$}
    \label{fig:tub_back}
  \end{subfigure}
  \begin{subfigure}{0.32\linewidth}
    \centering
    \includegraphics[width=\linewidth]{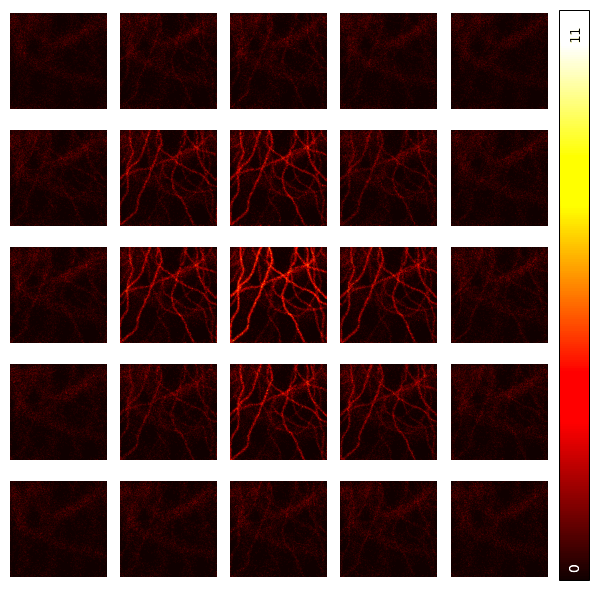}
    \caption{Simulated $s^2$ISM acquisitions.}
    \label{fig:noisy_3d_tub}
\end{subfigure}
  \caption{Left/center: simulated tubulin structures corresponding to the foreground (1) and background (2) planes. Right: simulated $s^2$ISM data generated from the superposition of foreground and background fluorescence contributions under Poisson noise.}
  \label{fig:gt_3d_tubulin}
\end{figure}

The optical system is then modeled through two distinct sets of detector-dependent PSFs: 25 in-focus PSFs associated with the foreground plane and 25 out-of-focus PSFs associated with the background contribution. These simulated PSFs are shown in Figure~\ref{fig:psfs_3d}.

\begin{figure}
  \centering
  \begin{subfigure}{0.48\linewidth}
    \centering
    \includegraphics[width=0.7\linewidth]{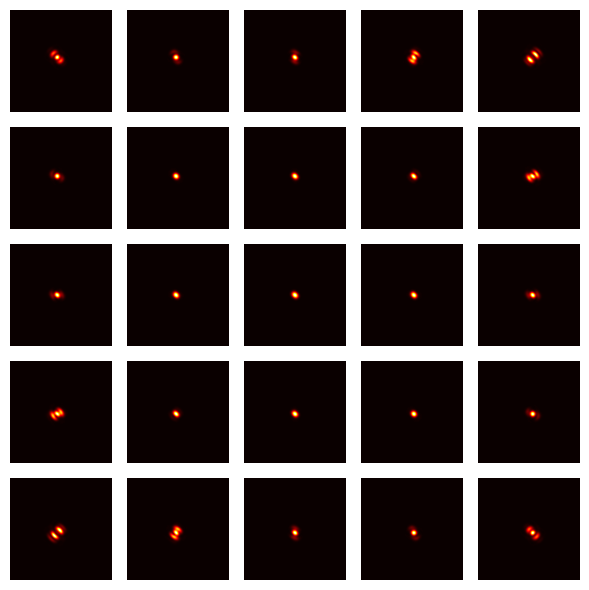} 
    \caption{In-focus PSFs}
    \label{fig:psf_in}
  \end{subfigure}\hfill
  \begin{subfigure}{0.48\linewidth}
    \centering
    \includegraphics[width=0.7\linewidth]{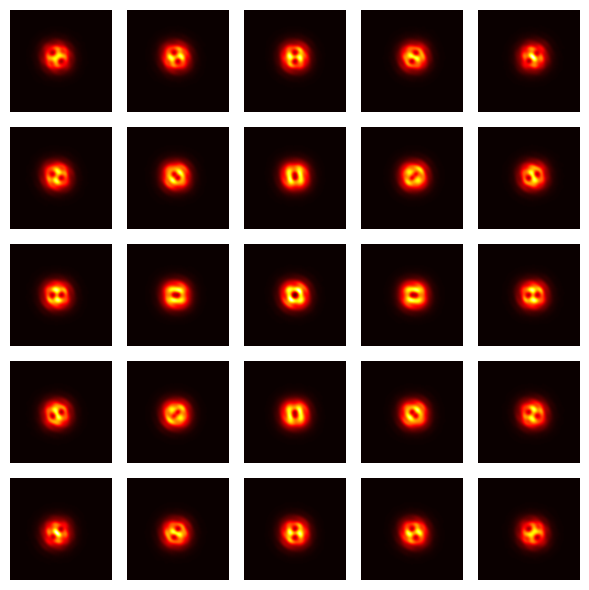}
    \caption{Out-of-focus PSFs}
    \label{fig:psf_out}
  \end{subfigure}
  \caption{Simulated detector-dependent PSFs associated with the foreground (in-focus, left) and background (out-of-focus, right) planes.}
  \label{fig:psfs_3d}
\end{figure}

The measurements are generated by convolving the two planes with their respective PSFs, summing the corresponding contributions, and finally applying Poisson noise independently to each detector channel according to the forward model described in Section~\ref{sec: mathematical modeling for $s^2ISM$}. An example of the resulting noisy acquisition is shown in Figure~\ref{fig:noisy_3d_tub}.

As in the single-plane setting, the regularization parameter $\lambda$ is selected using the masked RWP strategy described in Section~\ref{secRPW}, combined here with the high-pass spectral filtering procedure introduced in Section~\ref{sec:wp_highpass}. This modification prevents low-frequency modeling errors associated with the unregularized background component from dominating the parameter-selection process.

Figure~\ref{fig:sim_result_3d} reports the reconstruction results obtained in this dual-plane setting. RL reconstruction struggles to suppress the background contribution, leading to partial loss of the tubulin structures. In contrast, the proposed explicitly regularized schemes achieve substantially improved separation between foreground and background components.

Among the considered methods, PGD-TV provides the best overall reconstruction quality, accurately recovering the morphology of the tubulin filaments while effectively suppressing the out-of-focus signal. The $\ell_1$-based reconstructions successfully remove the background contribution but tend to produce sparser and less continuous structures. Finally, although the MD-based schemes effectively suppress the background plane, they also attenuate part of the foreground signal, leading to a partial loss of fine structural information.

\begin{figure}
    \centering
    \newcommand{\methodhead}[3]{%
        \shortstack{\scriptsize #1\\[1pt]
                    \tiny PSNR #2\\[-1pt]
                    \tiny MICROSSIM #3}}
    \setlength{\tabcolsep}{1.5pt}
    \renewcommand{\arraystretch}{0.3}
    \newcommand{\tw}{0.225\linewidth} 

    \begin{tabular}{@{}c cccc@{}}
        & \methodhead{Noisy sum}{16.73}{0.191}
        & \methodhead{RL ($k=5$)}{17.02}{0.53}
        & \methodhead{RL ($k=1000$) (sat)}{17.11}{0.52}
        & \methodhead{ProjGD ($\lambda=0$)}{17.11}{0.53} \\[3pt]

        \rotatebox{90}{\scriptsize In-focus} &
        \includegraphics[width=\tw]{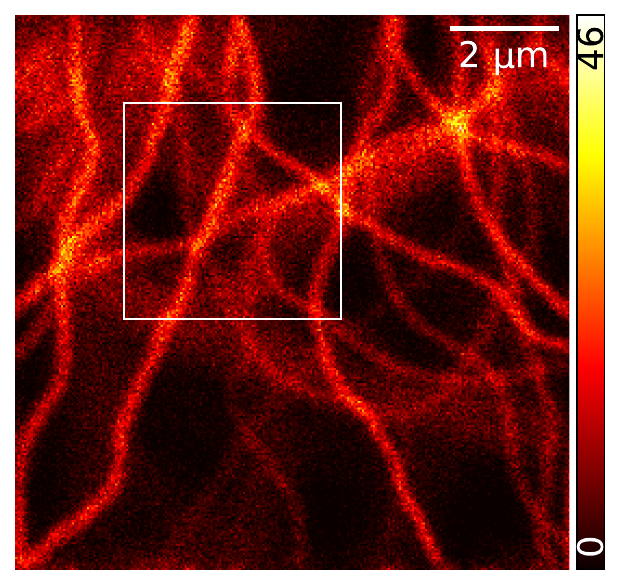} &
        \includegraphics[width=\tw]{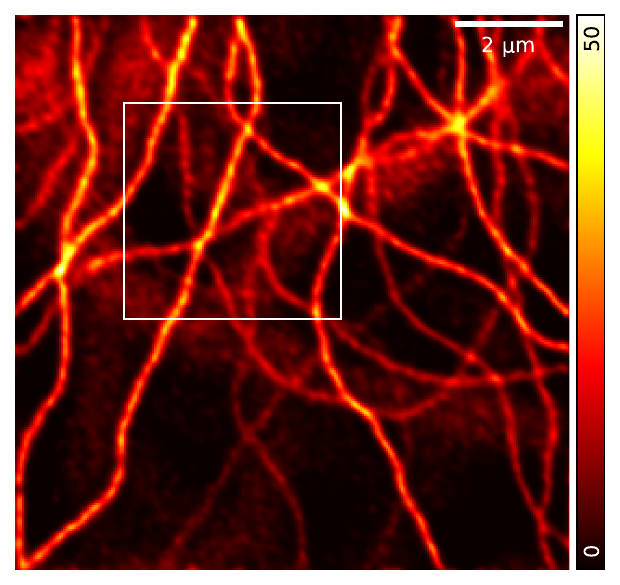} &
        \includegraphics[width=\tw]{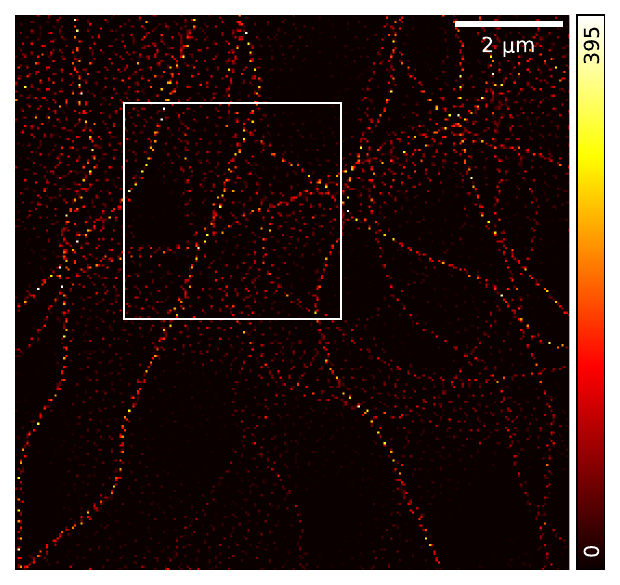} &
        \includegraphics[width=\tw]{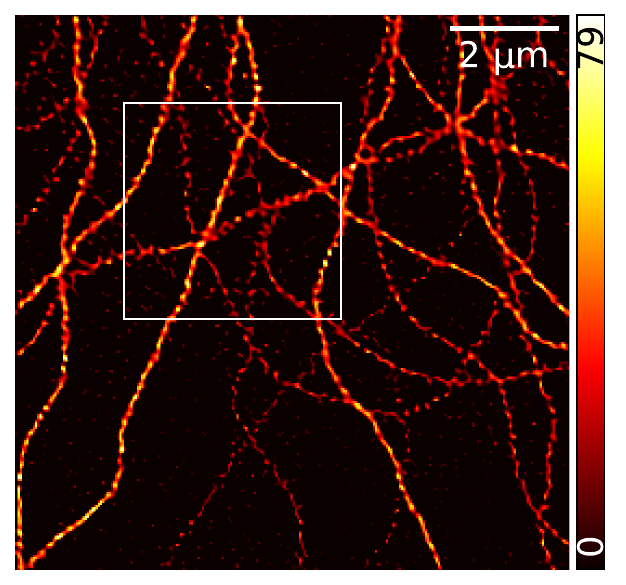} \\

        \rotatebox{90}{\scriptsize Crop} &
        \includegraphics[width=\tw]{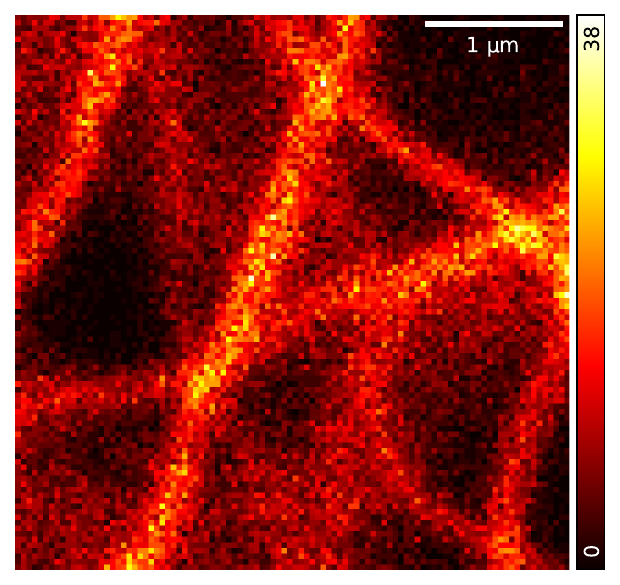} &
        \includegraphics[width=\tw]{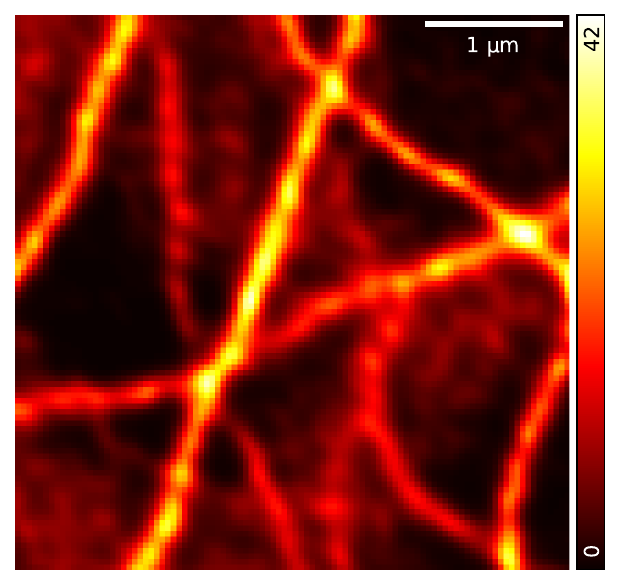} &
        \includegraphics[width=\tw]{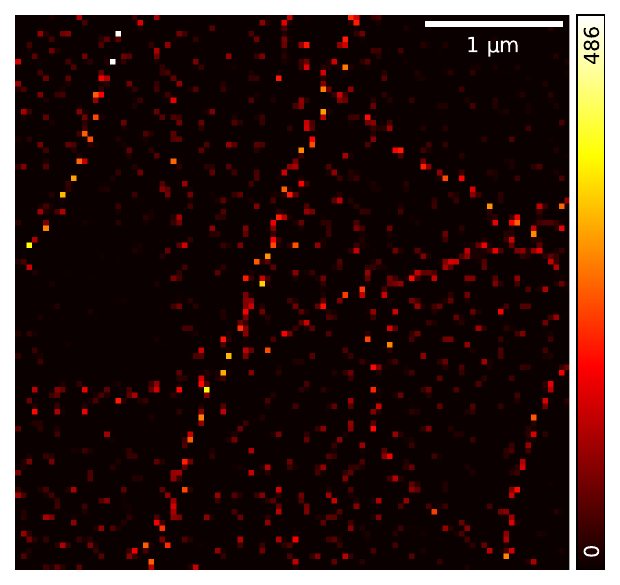} &
        \includegraphics[width=\tw]{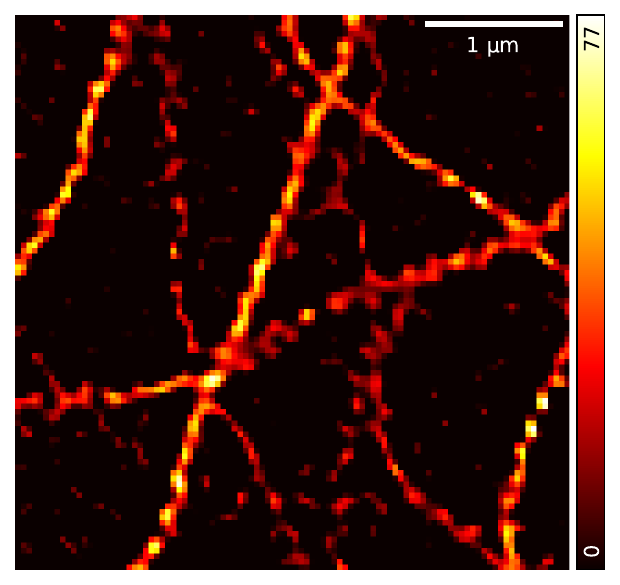} \\

        \rotatebox{90}{\scriptsize Out-of-focus} &
        &
        \includegraphics[width=\tw]{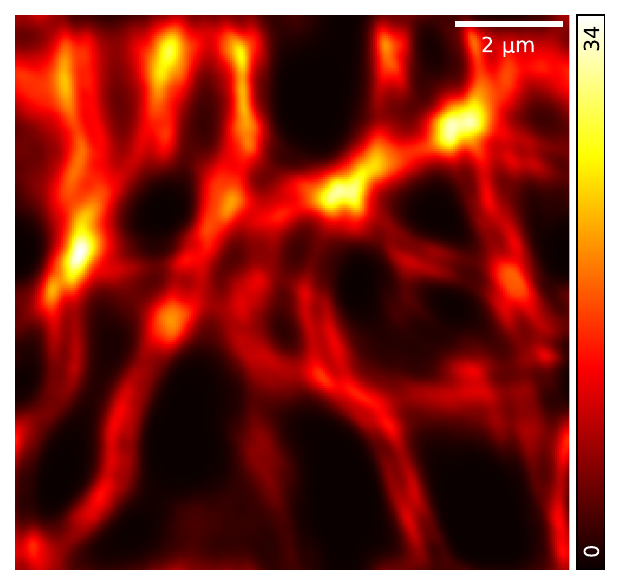} &
        \includegraphics[width=\tw]{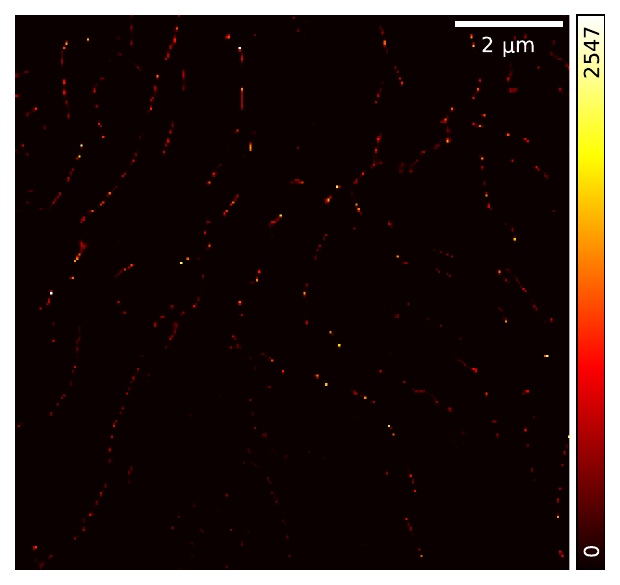} &
        \includegraphics[width=\tw]{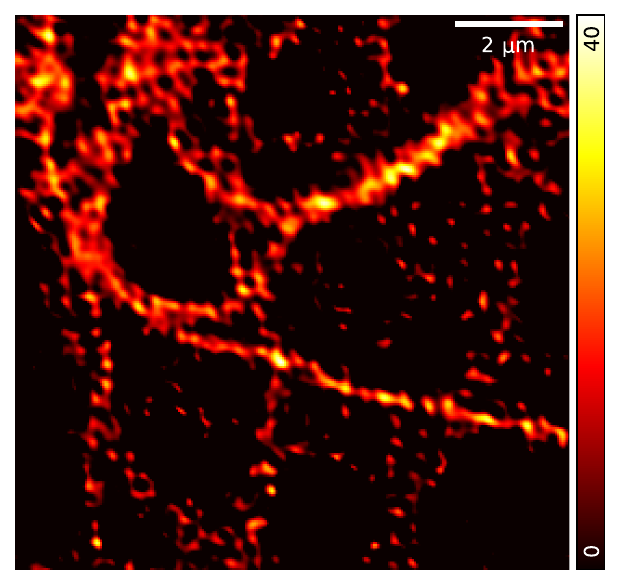} \\
    \end{tabular}

    \vspace{0.3cm}

    \begin{tabular}{@{}c cccc@{}}
        & \methodhead{ProxGD-$\ell_1$ ($\lambda=2.68\!\times\!10^{-2}$)}{20.25}{0.72}
        & \methodhead{MD-$\ell_1$ ($\lambda=9.09\!\times\!10^{-2}$) (sat)}{17.32}{0.68}
        & \methodhead{PGD TV ($\lambda=1.34\!\times\!10^{-2}$)}{18.99}{0.69}
        & \methodhead{MD TV ($\lambda=3.09\!\times\!10^{-1}$)(sat)}{16.73}{0.62} \\[3pt]

        \rotatebox{90}{\scriptsize In-focus} &
        \includegraphics[width=\tw]{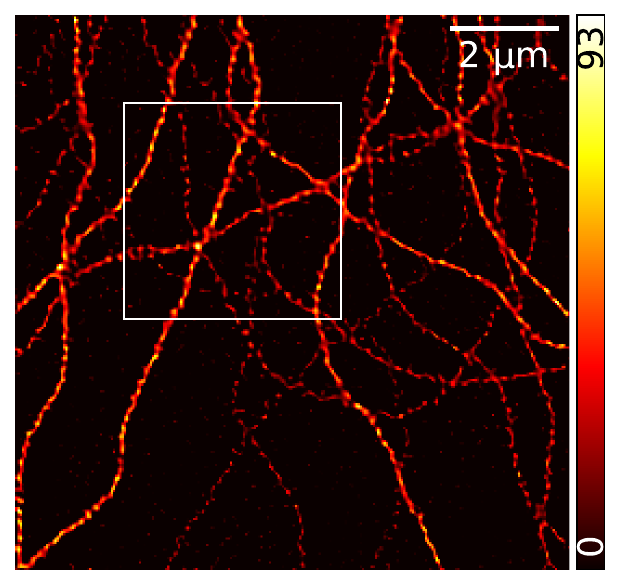} &
        \includegraphics[width=\tw]{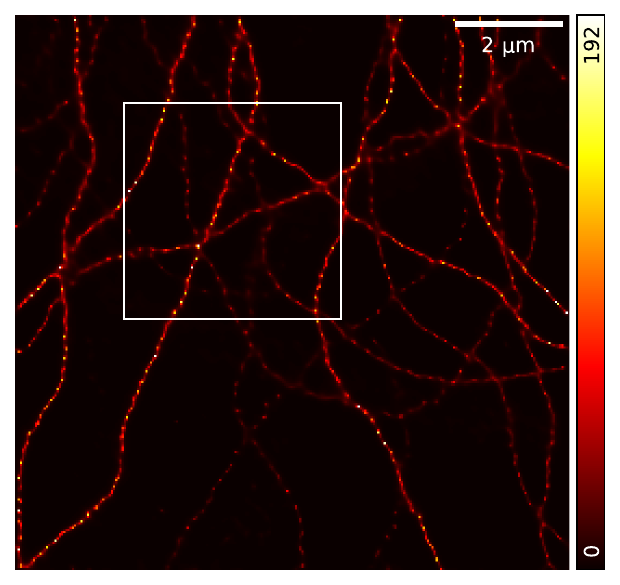} &
        \includegraphics[width=\tw]{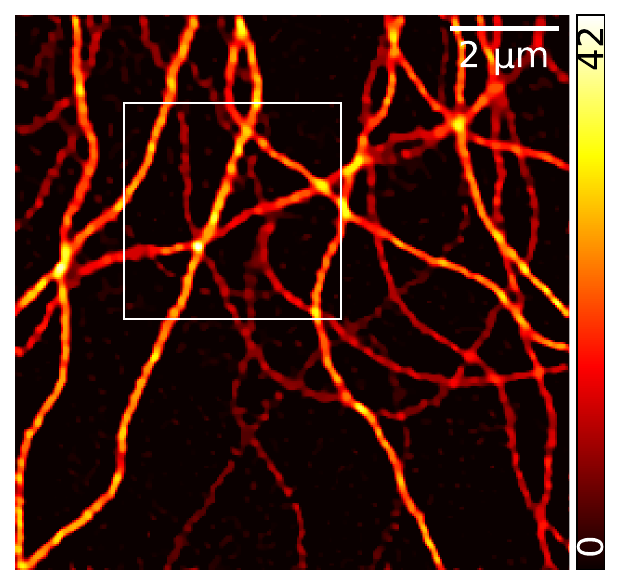} &
        \includegraphics[width=\tw]{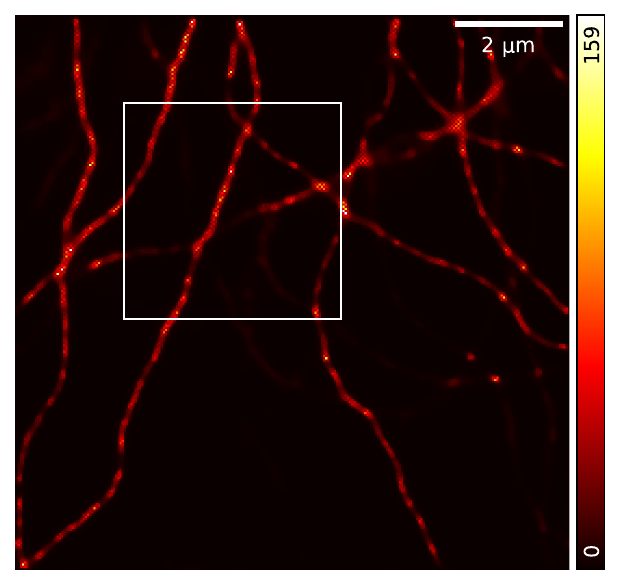} \\

        \rotatebox{90}{\scriptsize Crop} &
        \includegraphics[width=\tw]{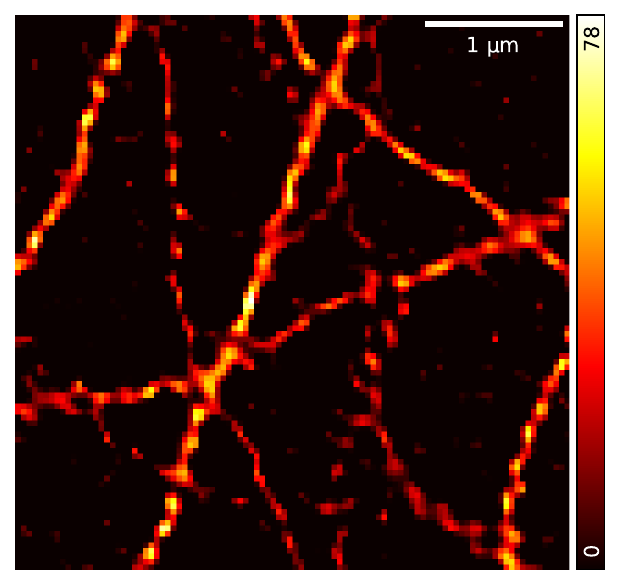} &
        \includegraphics[width=\tw]{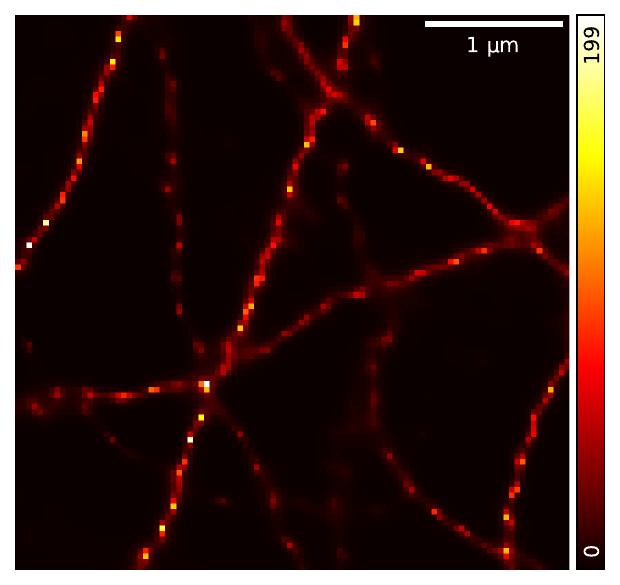} &
        \includegraphics[width=\tw]{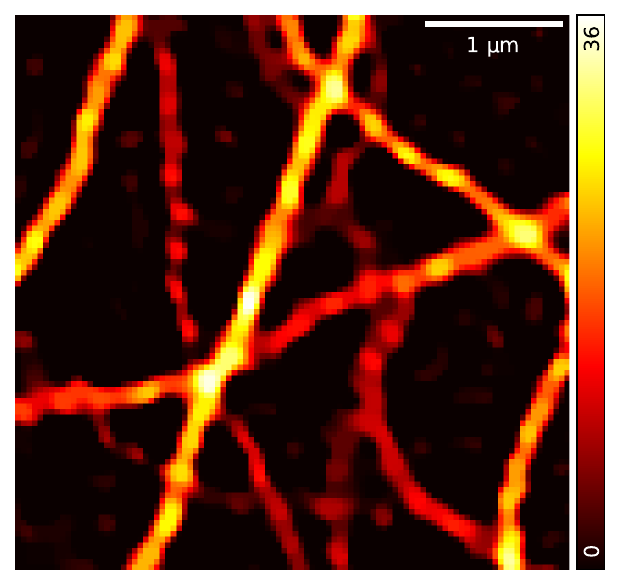} &
        \includegraphics[width=\tw]{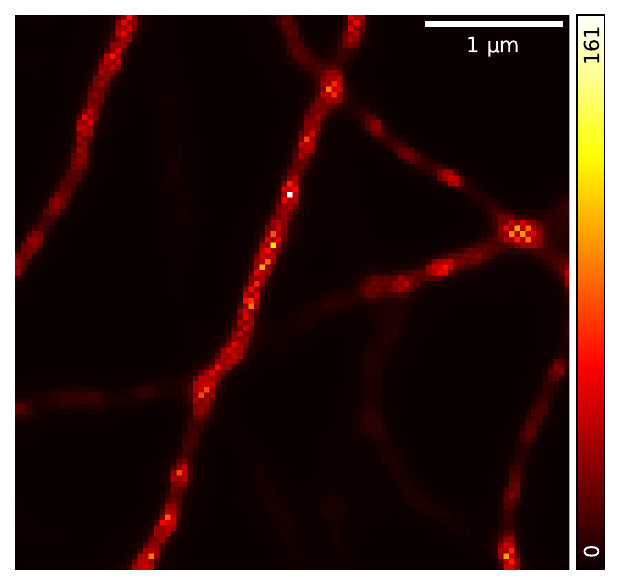} \\

        \rotatebox{90}{\scriptsize Out-of-focus} &
        \includegraphics[width=\tw]{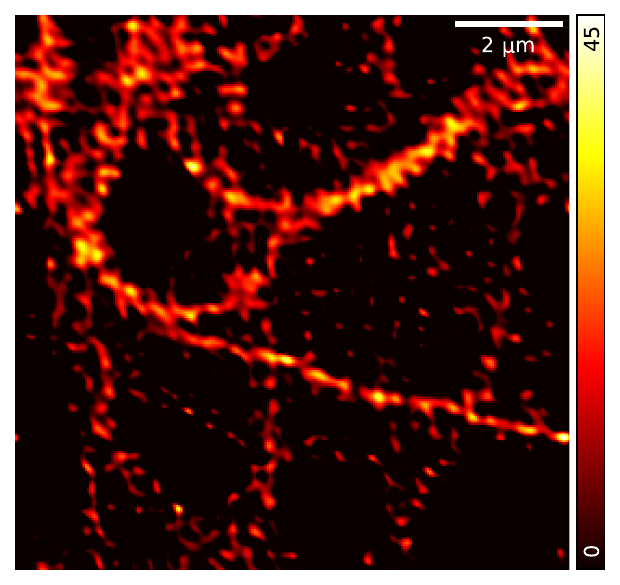} &
        \includegraphics[width=\tw]{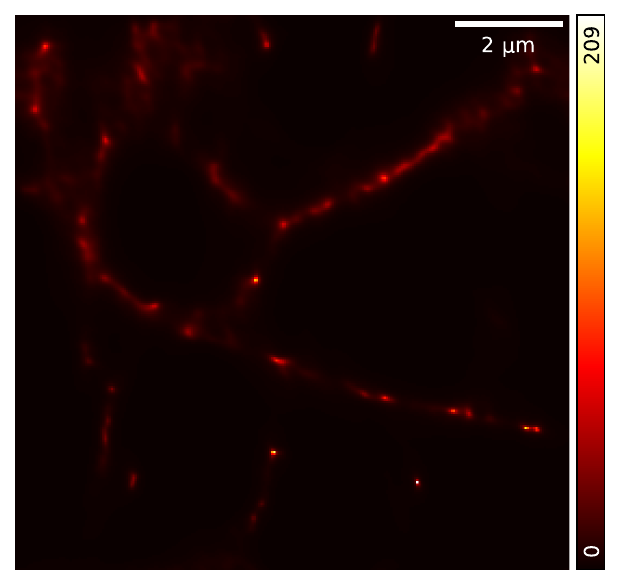} &
        \includegraphics[width=\tw]{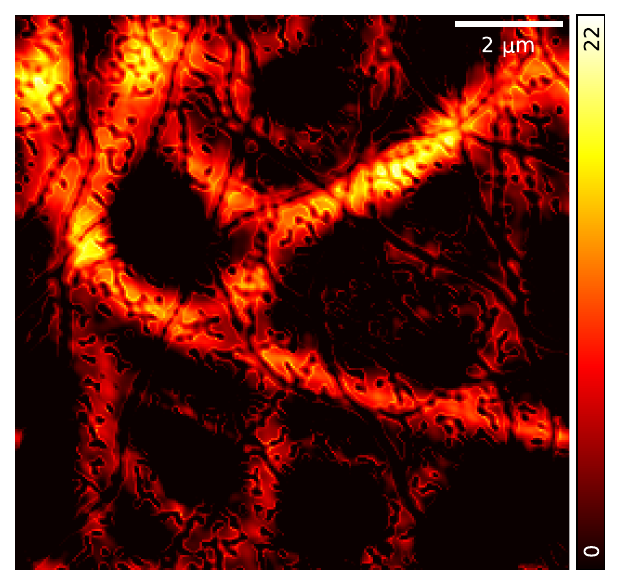} &
        \includegraphics[width=\tw]{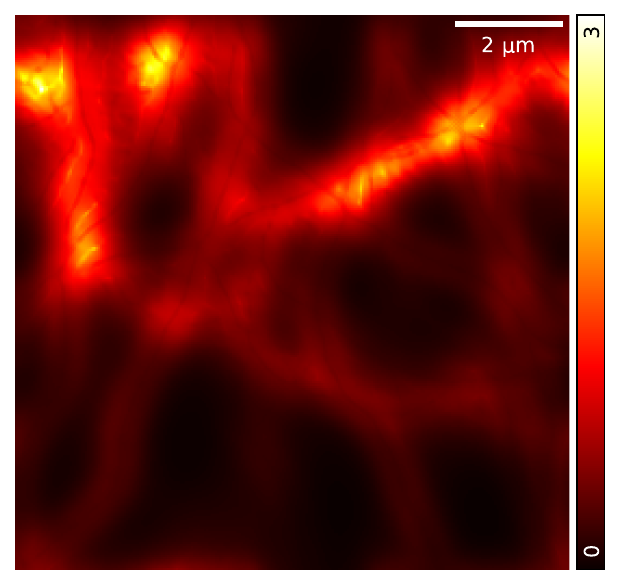} \\
    \end{tabular}

\caption{Results of simulated tubulin 3D reconstruction. Top block: 
        baseline comparisons (noisy sum, Richardson--Lucy, projected GD 
        without regularization). Bottom block: regularized iterative methods. 
        For each method the full sum-projection (Full), a central crop (Crop), 
        and the axial view (Axial) are shown, together with PSNR and 
        MicroSSIM.}
    \label{fig:sim_result_3d}
\end{figure}

\subsection{Validation of the RWP parameter selection strategy}

In this Section, we report some numerical tests validating the RWP strategy for choosing the optimal regularization weight. To start with, in Figure~\ref{fig:whiteness_graphs} we illustrate the parameter-selection process for MD-TV (Algorithm~\ref{alg:Mirror-TV-Combined}) on the simulated 3D tubulin dataset. The figure reports both the whiteness functional and the corresponding oracle PSNR values as functions of $\lambda$. Two observations are worth emphasizing. First, the whiteness functional exhibits a well-defined minimum, providing a stable and fully unsupervised parameter-selection criterion. Second, the parameter $\lambda^*$ minimizing $\mathcal W(\lambda)$ appears close to the value maximizing the PSNR, a quantity which is unavailable in practice. Similar plots can be showed for the other regularization/algorithmic strategies considered, thus validating the RWP strategy as an effective unsupervised approach for regularization parameter selection in the ISM setting.

\begin{figure}
\begin{subfigure}{0.45\textwidth}
    \centering
    \includegraphics[width=\linewidth]{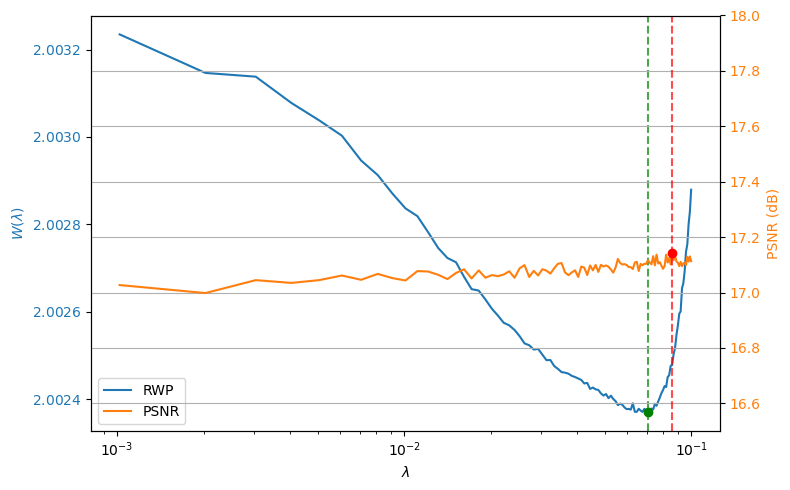}
    \caption{Evolution of $\mathcal{W}$ and PSNR  vs.~$\lambda$ for MD-TV.}
    \label{fig:whiteness_graphs}
\end{subfigure}
\hspace{0.5cm}
\begin{subfigure}{0.45\textwidth}
    \centering
    \includegraphics[width=\linewidth]{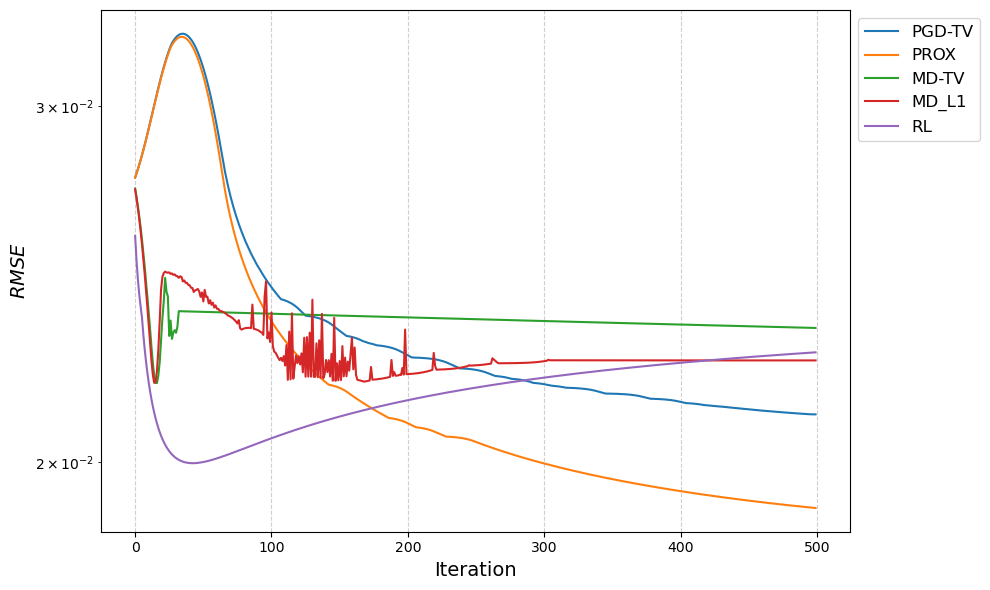}
    \caption{RMSE evolution.}
    \label{fig:conv_curv}
    \end{subfigure}
    \caption{Validation of the proposed RWP parameter selection method and of the regularization framework on the 3D dataset. Left: whiteness functional $\mathcal{W}(\lambda)$ and corresponding PSNR values as functions of the regularization parameter for MD-TV Algorithm~\ref{alg:Mirror-TV-Combined}. Right: RMSE as a function of the iteration count for all the proposed reconstruction algorithms.}
\end{figure}

\subsubsection{Reconstruction quality and convergence behavior}

Figure~\ref{fig:conv_curv} reports the Root Mean Square Error (RMSE)  as a function of the iteration count. In contrast to the RL semi-convergent behavior discussed in Section~\ref{sec: implicit reg and early stopping} and showed in Figure \ref{fig:no_reg_conv}, the explicitly regularized schemes remain substantially more stable. In particular, PGD-TV and ProxGD-$\ell_1$ exhibit a stable decrease of the RMSE, confirming that the proposed regularized formulations can be run until numerical convergence without requiring empirical early stopping rules. We observe that the behavior of MD-$\ell_1$ and MD-TV closely resemble the one of RL, which can be explained by their shared multiplicative structure. Within the logarithmic Mirror Descent geometry, the $\ell_1$ regularizer becomes smooth on the strictly positive orthant and contributes only a constant shift to the gradient of the KL terms (see Algorithm~\ref{alg:Mirror-L1-Combined}), making the resulting dynamics structurally very similar to Richardson--Lucy iterations. This is also reflected in its oscillatory behavior. 

\subsection{Robustness to different photon flux levels}

Figure~\ref{fig:flux_comparison} illustrates the behavior of the proposed framework under different photon flux levels. As expected, increasing the flux improves the quality of the measurements, resulting in progressively cleaner acquisitions and higher-quality reconstructions. This trend is reflected by the increasing PSNR and MicroSSIM values reported below each reconstruction, while the recovered foreground and background components remain well separated across all considered noise levels. To validate how the RWP criterion adapts to these different acquisition conditions, while accounting for the stochasticity of the Poisson noise, we further considered the procedure applied to select the optimal weighting for the ProxGD-$\ell_1$ model over $N = 10$ independent noise realizations for different flux levels and average the resulting optimal lambdas  showing also uncertainty. As shown in Figure~\ref{fig:adaptivity_noise}, the selected parameter decreases with increasing photon flux, indicating that, as expected, less regularization (selected by RWP) is required as the measurements become more reliable, while lower photon fluxes automatically lead to stronger regularization.

\begin{figure}
    \setlength{\tabcolsep}{1.5pt}      
    \renewcommand{\arraystretch}{0.6}  
    \newcommand{\imgw}{0.30\linewidth} 
    \begin{tabular}{@{}c ccc@{}}
        & \small $\bar{\y}$ & \small In-focus $\x^*_1$ & \small Out-of-focus $\x^*_2$\\[2pt]

        \rotatebox{90}{ Flux 10} &
        \includegraphics[width=\imgw]{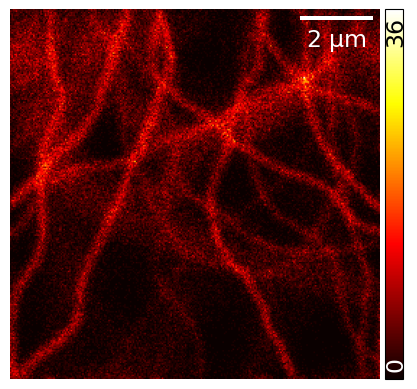}   &
        \includegraphics[width=\imgw]{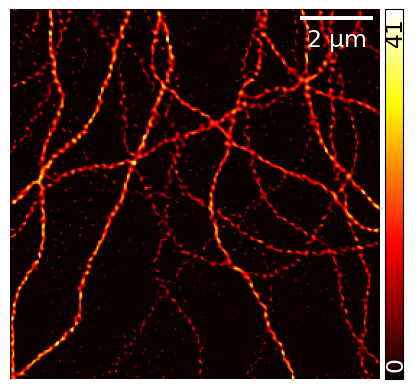} &
        \includegraphics[width=\imgw]{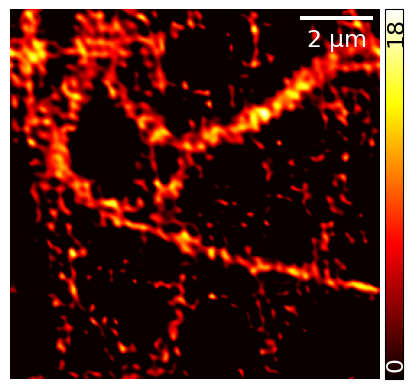}  \\
        & \multicolumn{3}{c}{
            $\lambda^\ast = 3.73\times10^{-2}$ \quad
            PSNR $= 20.15$ \quad
            MICROSSIM $= 0.670$} \\[3pt]

        \rotatebox{90}{ Flux 20} &
        \includegraphics[width=\imgw]{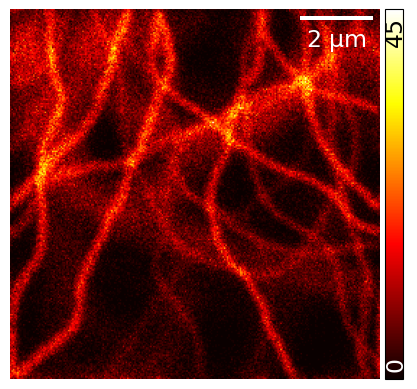}   &
        \includegraphics[width=\imgw]{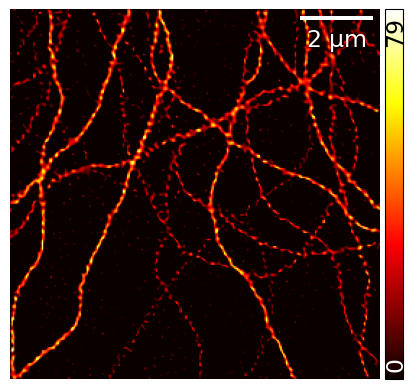} &
        \includegraphics[width=\imgw]{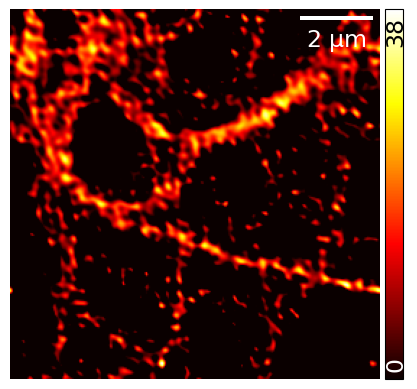}  \\
        & \multicolumn{3}{c}{
            $\lambda^\ast = 2.33\times10^{-2}$ \quad
            PSNR $= 20.74$ \quad
            MICROSSIM $= 0.727$} \\[3pt]

        \rotatebox{90}{ Flux 40} &
        \includegraphics[width=\imgw]{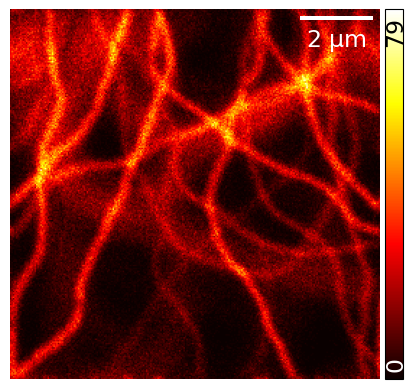}   &
        \includegraphics[width=\imgw]{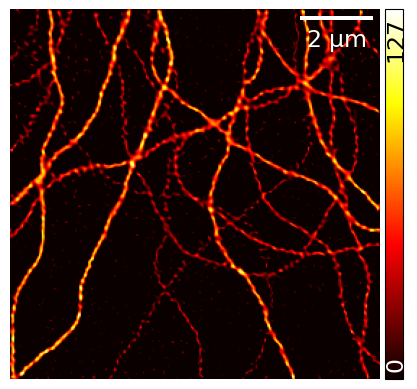} &
        \includegraphics[width=\imgw]{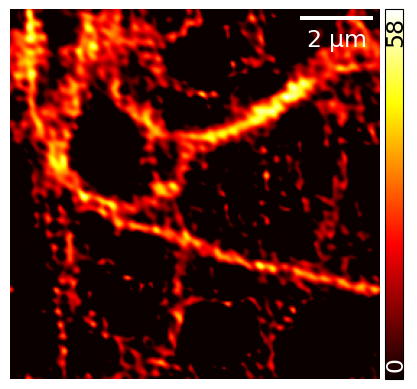}  \\
        & \multicolumn{3}{c}{
            $\lambda^\ast = 1.84\times10^{-2}$ \quad
            PSNR $= 22.08$ \quad
            MICROSSIM $= 0.789$} \\
    \end{tabular}
    
    \caption{Top: whiteness and error curves for the three flux levels.
             Bottom: noisy measurement and reconstructions (in-focus/out-of-focus) at
             the $\lambda^\ast$ selected by the RWP. For each
             flux, we report the selected $\lambda^\ast$ and the metrics
             (PSNR, MicroSSIM) of the corresponding reconstruction.}
    \label{fig:flux_comparison}
\end{figure}

\begin{figure}
    \centering
    \includegraphics[width=\linewidth]{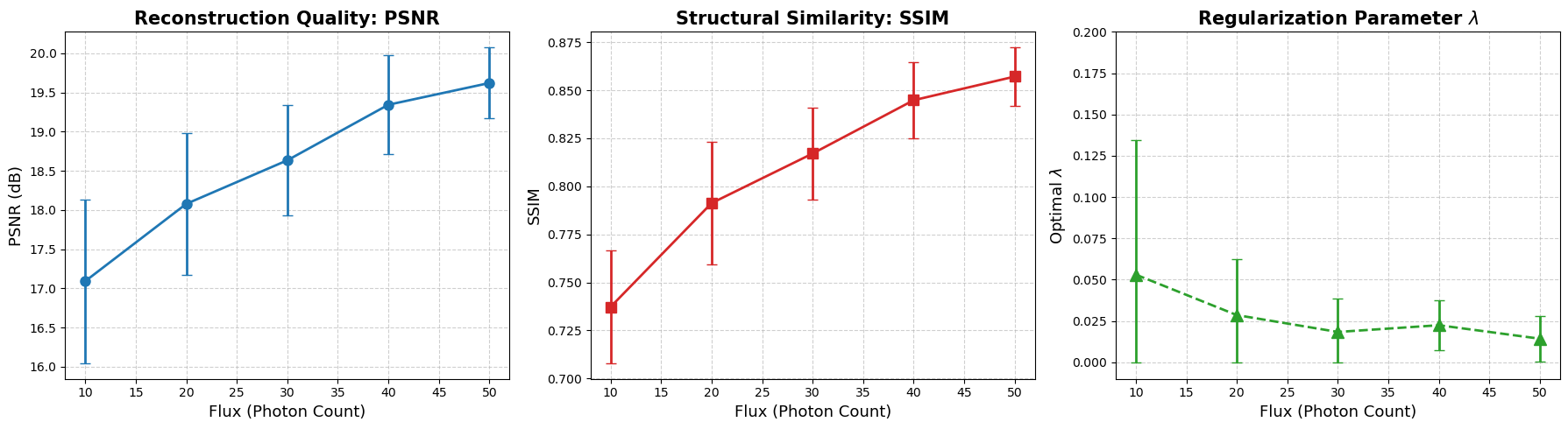}
    \caption{Adaptivity of the proposed RWP parameter-selection strategy to different photon flux levels. Each curve shows the high-pass filtered whiteness functional $\mathcal{W}(\lambda)$, averaged over $N=10$ independent noise realizations, as a function of the regularization parameter $\lambda$; the dashed vertical lines mark the  $\lambda^*$ selected by RWP. As the photon flux decreases, the criterion automatically selects progressively larger regularization parameters, reflecting the increased need of higher regularization amount due to the presence of higher noise. 
    }
    \label{fig:adaptivity_noise}
\end{figure}

\section{Numerical results on real s$^2$ISM data}
\label{real_results}

In this section, we evaluate the proposed reconstruction framework on real s$^2$ISM  acquisitions. In contrast to the simulated setting, no ground-truth image is available in this case, making the computation of quantitative error metrics such as PSNR or MicroSSIM not possible. Consequently, the assessment of the reconstruction quality relies primarily on qualitative criteria, including structural continuity, background suppression, contrast enhancement, and robustness against noise amplification artifacts. 
Figure \ref{fig:all_tomm20_meas} reports the three datasets used in the following. They show subcellular targets representing morphologically distinct biological organizations: the microtubule marker $\alpha$-tubulin, the intermediate filament vimentin and the heterochromatin mark histone H3K9me3. More details on their acquisition are given below.

\subsection{Real data acquisition}
\label{real_data_acq}
All real data were acquired using a custom-built laser-scanning microscope equipped with a $5\times5$ square single-photon avalanche diode (SPAD) array detector (PRISM Light Kit, Genoa Instruments), previously described in \cite{slenders2025array, patil2025open}, and operated in Image Scanning Microscopy (ISM) modality. Briefly, the system was designed to image samples labelled with far-red fluorophores by focusing a 640 nm laser diode (PicoQuant LDH-D-C-640S) through a high-numerical-aperture oil-immersion objective lens (Nikon SR HP Apo TIRF 100$\times$/1.49 NA oil). The emitted fluorescence was collected by the same objective, spectrally separated from the excitation light using a dichroic mirror (F48-643, AHF, Germany) and emission filters (LP01-633R-25 and Notch 642), and subsequently imaged onto a secondary image plane where the SPAD array was located. A relay-lens system provided an overall magnification of approximately $500\times$ between the sample plane and the detector plane, corresponding to an effective SPAD-array size of approximately 1 Airy Unit. The excitation focus was raster-scanned across the sample using a pair of galvanometric mirrors. Synchronization between scanning and photon detection, as well as generation of the ISM dataset, was fully controlled by the open-source BrightEyes-MCS microscope-control software \cite{donato2024brighteyes}. 
The acquisitions are performed with a pixel dwell time of 64~$\mathrm{\mu}$s (64 time bins of 1~$\mathrm{\mu}$s), a pixel size of 50~nm and a laser power of 40~$\mathrm{\mu}$W.

\subsection{Biological sample preparation}

\textit{Cell Culture.} HeLa cells were grown in DMEM with $10\%$ FBS, $1\%$ L-glutamin and $1\%$ penicillin/streptomycin (Sigma-Aldrich) at 37$^{\circ}$C and $5\%$ CO2 in a cell culture incubator. After two days in incubator, they were seeded at medium confluence on cleaned and round 18 mm diameter high resolution 1.5'' glass coverslips (Marienfield, VWR).

\textit{Cell fixation.} After 24 hours, the cells were washed one time with sterile Dulbecco PBS 1X (Sigma Life Science). A fixation solution ($0.2\%$ glutaraldehyde + $0.2\%$ formaldehyde in DPBS) is added for 15 min at room temperature. Cells were then washed 3 times in PBS 1x and stored in PBS 1x + sodium azide $0.1\%$ at 4$^{\circ}$C until labelling. 

\textit{Cell immunolabeling.} For the saturation step, cells are blocked for 1h in PBS + 3$\%$ BSA + 0.1$\%$ Triton. The cells were incubated 1h at room temperature with primary antibody in PBS + 3$\%$ BSA + 0.1$\%$ Triton. This was followed by three washing steps in PBS + 3$\%$ BSA + 0.1$\%$ Triton and incubation for 1h at room temperature with secondary antibody. Three more washes with PBS 1x are performed. The cells are then embedded in Prolong Diamond mounting medium (ThermoFisher) and sealed with two-component sealants. Samples are stored in fridge at 4$^{\circ}$C until imaging.

\textit{Primary antibodies.} The primary antibodies are used with a dilution of 1:500. Mouse anti-alpha-tubulin (Thermo Fisher DM1A + Thermo Fisher B512), chicken anti-vimentin (EnCor CPCA-Vim) and rabbit anti-H3K9me3 (Abcam ab8898). 

\textit{Secondary antibodies.} The secondary antibodies are used with a dilution of 1:500. Anti-mouse Atto 647N (Sigma 50185), anti-rabbit Atto 647N (Sigma 40839) and anti-chicken CF660C (Biotium 20371).

\begin{figure}
    \centering
    
    \begin{subfigure}{0.35\textwidth} 
        \centering
        \includegraphics[width=\textwidth]{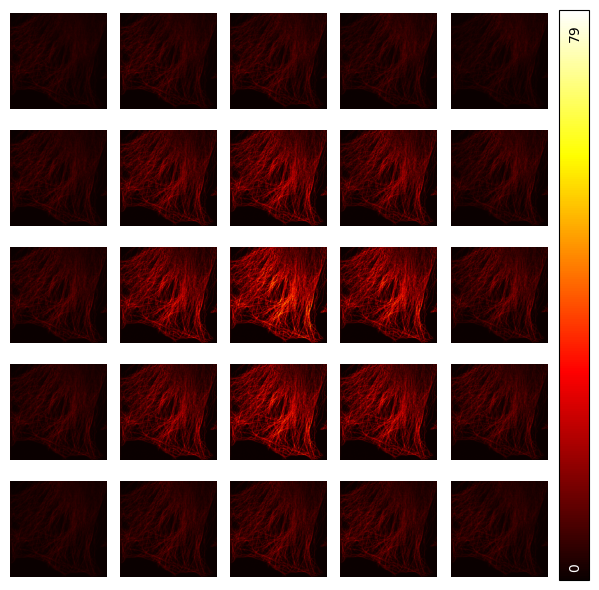}
        \caption{Measurement \emph{Tubulin}}
        \label{fig:meas_01}
    \end{subfigure}
    \hspace{1cm} 
    \begin{subfigure}{0.35\textwidth} 
        \centering
        \includegraphics[width=\textwidth]{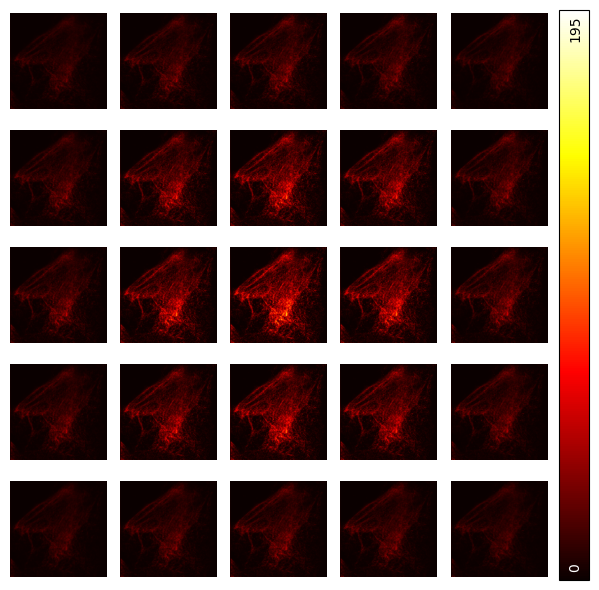}
        \caption{Measurement \emph{Vimentin}}
        \label{fig:meas_02}
    \end{subfigure}
    
    \vspace{0.5cm} 
    
    \begin{subfigure}{0.35\textwidth} 
        \centering
        \includegraphics[width=\textwidth]{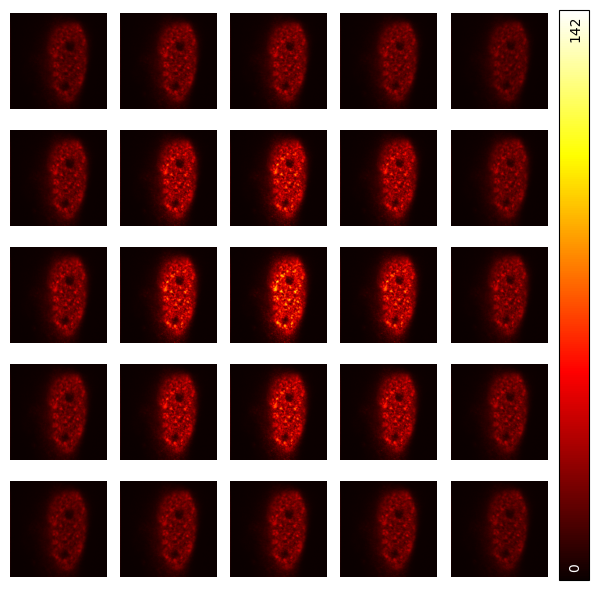}
        \caption{Measurement \emph{Histone}}
        \label{fig:meas_03}
    \end{subfigure}
    \hspace{1cm} 

    \caption{s$^2$ISM real datasets.}
    \label{fig:all_tomm20_meas}
\end{figure}

\subsection{Reconstruction stability and parameter selection}

Figure~\ref{fig:01tomm_iteration_comparison} highlights the stability of the reconstruction process along iterations for Algorithm~\ref{alg:PGD-TV} in comparison with the RL iteration~\eqref{eq: Richardson-Lucy}. The progressive amplification of noise and high-frequency artifacts ultimately deteriorate the reconstruction quality at late RL iterations, as illustrated by the reconstructions corresponding to increasing iteration counts. In contrast, the solutions computed by the regularized framework produce stable reconstructions, yielding improved structural continuity and a more favorable compromise between resolution enhancement and noise suppression. 
As previously observed on simulated data, RL tends to amplify the dynamic range of the reconstructed image. Due to the multiplicative nature of the iteration, photon counts that are initially distributed across multiple measurements become progressively concentrated into a small number of pixels, producing intensity peaks far from the expected intensity range.

\subsection{Reconstruction results}

Figures~\ref{fig:02TUB_res} through~\ref{fig:02vimentine_results} illustrate the reconstruction results obtained across four different $s^2$ISM samples. Overall, the proposed regularized frameworks provide a significant improvement in image quality and spatial resolution compared to both the raw measurements and the standard RL reconstructions. A primary effect is the substantial suppression of background fluorescence and out-of-focus contributions, leading to improved contrast of the mitochondrial structures.

Among the considered regularization models, Total Variation (TV) promotes piecewise-smooth reconstructions while preserving structural continuity. As visible in the magnified crops, the TV-based methods (PGD-TV and MD-TV) provide effective edge-preserving denoising, resulting in cleaner and more natural-looking reconstructions. In contrast, the $\ell_1$ regularization approaches (ProxGD-$\ell_1$ and MD-$\ell_1$) enforce pointwise sparsity, leading to reconstructions with a more granular appearance.


\begin{figure}
    \centering
    \includegraphics[width=0.9\linewidth]{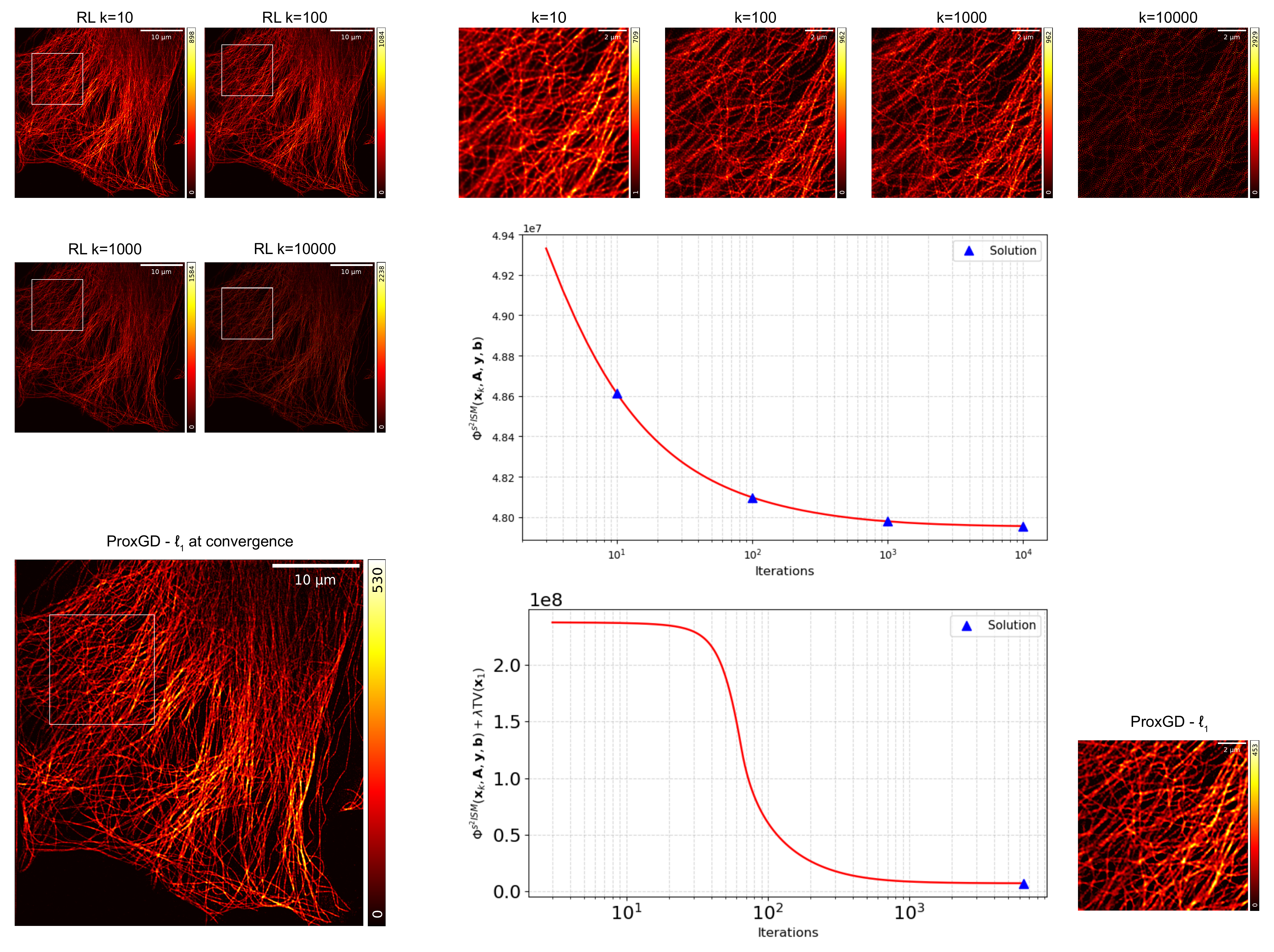}
    \caption{Comparison between RL solutions at different iteration counts and the regularized ProxGD-$\ell_1$ solution with optimal parameter selected by RWP. Figures (a), (b), (c), (d): RL leads to noise overfitting at large $k$, thus requiring early stopping. Fig (f): TV regularization ensures a stable, high-quality solution at convergence.}
    \label{fig:01tomm_iteration_comparison}
\end{figure}

To conclude, Table~\ref{tab:convergence_results} reports the computational performance of the considered algorithms in terms of execution time and number of iterations required to satisfy the convergence criterion~\eqref{conv_crit}. Overall, the multiplicative schemes exhibit a clear computational advantage over their PGD counterparts, converging in significantly fewer iterations and with substantially reduced execution times across all datasets. However, this improved computational efficiency comes at the price of reduced reconstruction quality in the considered ISM/$s^2$ISM setting as pointed out in the previous sections. 

Figure~\ref{fig:backtracking} demonstrates the effectiveness of the adaptive backtracking strategy (Algorithm~\ref{alg:backtracking_adaptive}) when applied to real data. Notably, for ProjGD and ProxGD, the stepsize behaves identically, initially increasing and then remaining stable.In contrast, the stepsize for both MD methods exhibits fluctuation, eventually dropping toward the theoretical lower bound as the algorithm converges.


\begin{table}[h]
    \centering
    \caption{Computational performance comparison: execution time (s) and number of 
    iterations to convergence ($\text{tol} = 10^{-5}$).}
    \label{tab:convergence_results}
    \small
    \begin{tabular}{lcccccc}
        \toprule
        \multirow{2}{*}{\textbf{Algorithm}} 
            & \multicolumn{2}{c}{\textbf{TUBULIN}} 
            & \multicolumn{2}{c}{\textbf{VIMENTIN}} 
            & \multicolumn{2}{c}{\textbf{HISTONE}} \\
        \cmidrule(lr){2-3} \cmidrule(lr){4-5} \cmidrule(lr){6-7} 
        & Time (s) & Iter. & Time (s) & Iter. 
        & Time (s) & Iter. \\
        \midrule
        PGD-TV   & 11.57 & 338 & 13.3 & 465 & 25.59 & 2032  \\
        ProxGD-$\ell_1$   & 39.88 & 1528 & 48.67 & 1873 & 27.75 & 2260  \\
        MD-TV    &  0.99 &  25 &  1.89 &  52 &  3.22 &  227 \\
        MD-$\ell_1$  & 6.89 &  223  &  7.46 &  240 & 2.83 &  191 \\
        \bottomrule
    \end{tabular}
\end{table}

\begin{figure}
    \centering
    \includegraphics[width=0.7\linewidth]{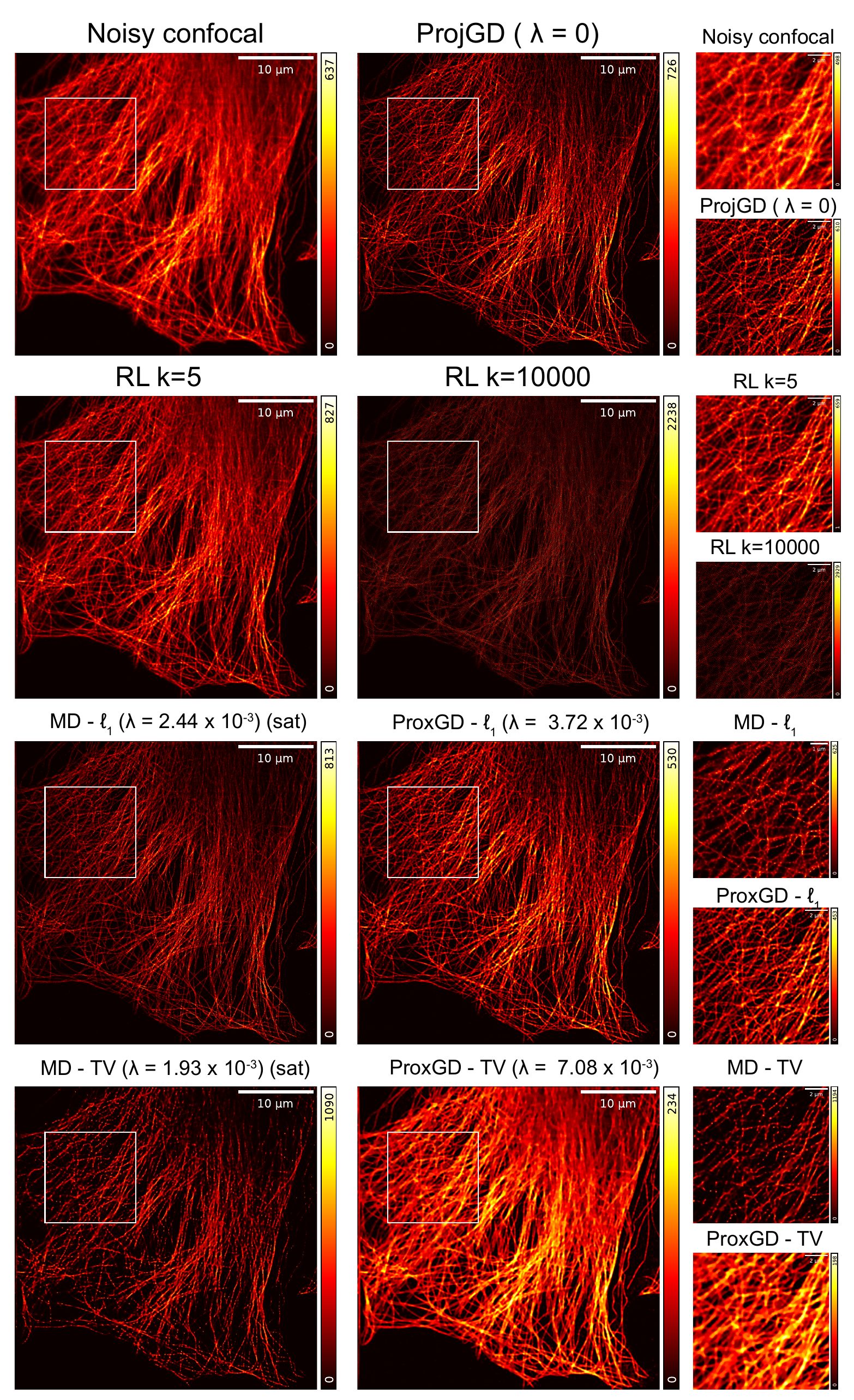}
    \caption{\emph{Tubulin} results: Full images  and corresponding crops .}
    \label{fig:02TUB_res}
\end{figure}

\begin{figure}
    \centering
    \includegraphics[width=0.7\linewidth]{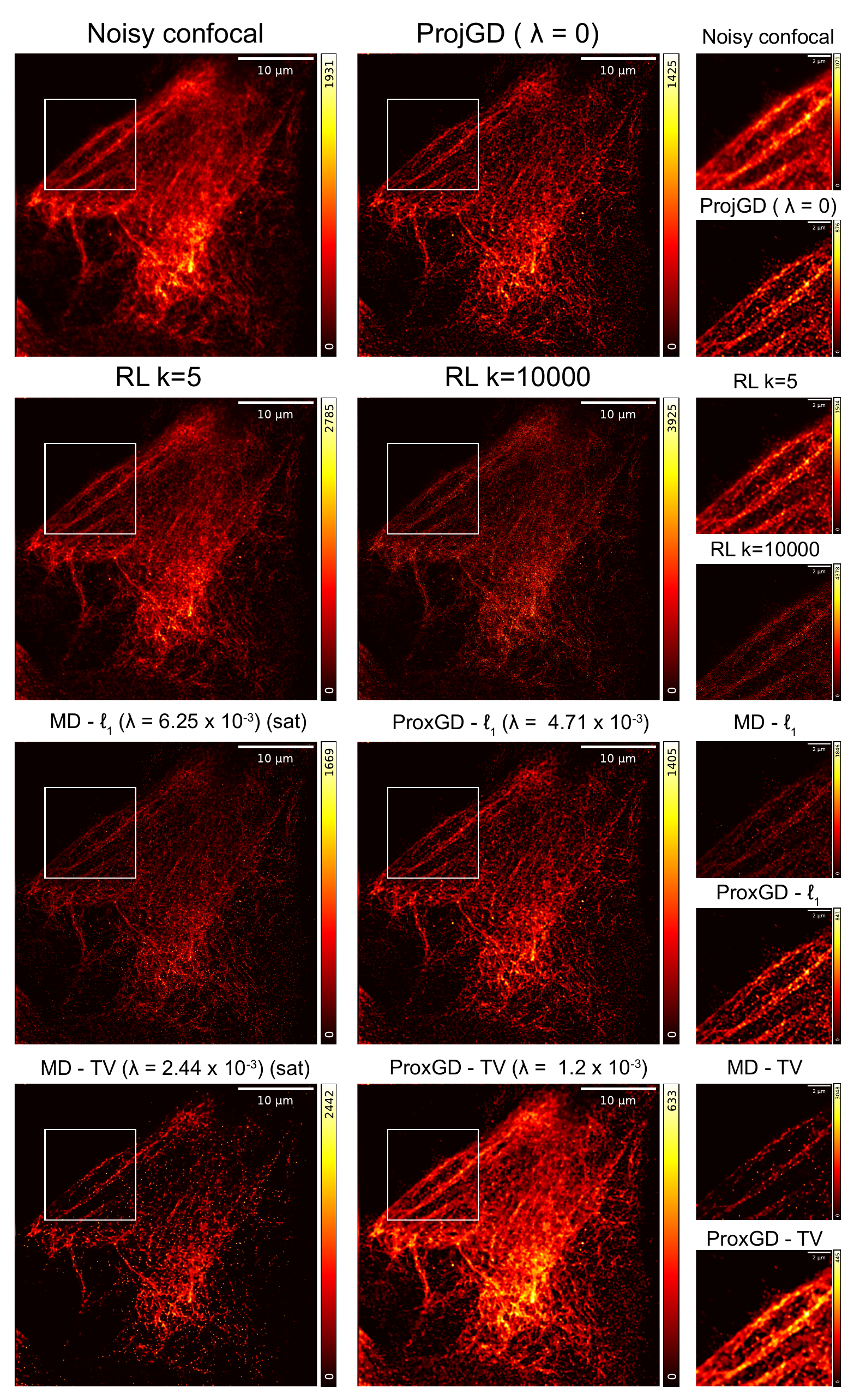}
    \caption{\emph{Vimentin} results: Full images  and corresponding crops .}
    \label{fig:02vimentine_results}
\end{figure}

\begin{figure}
    \centering
    \includegraphics[width=0.7\linewidth]{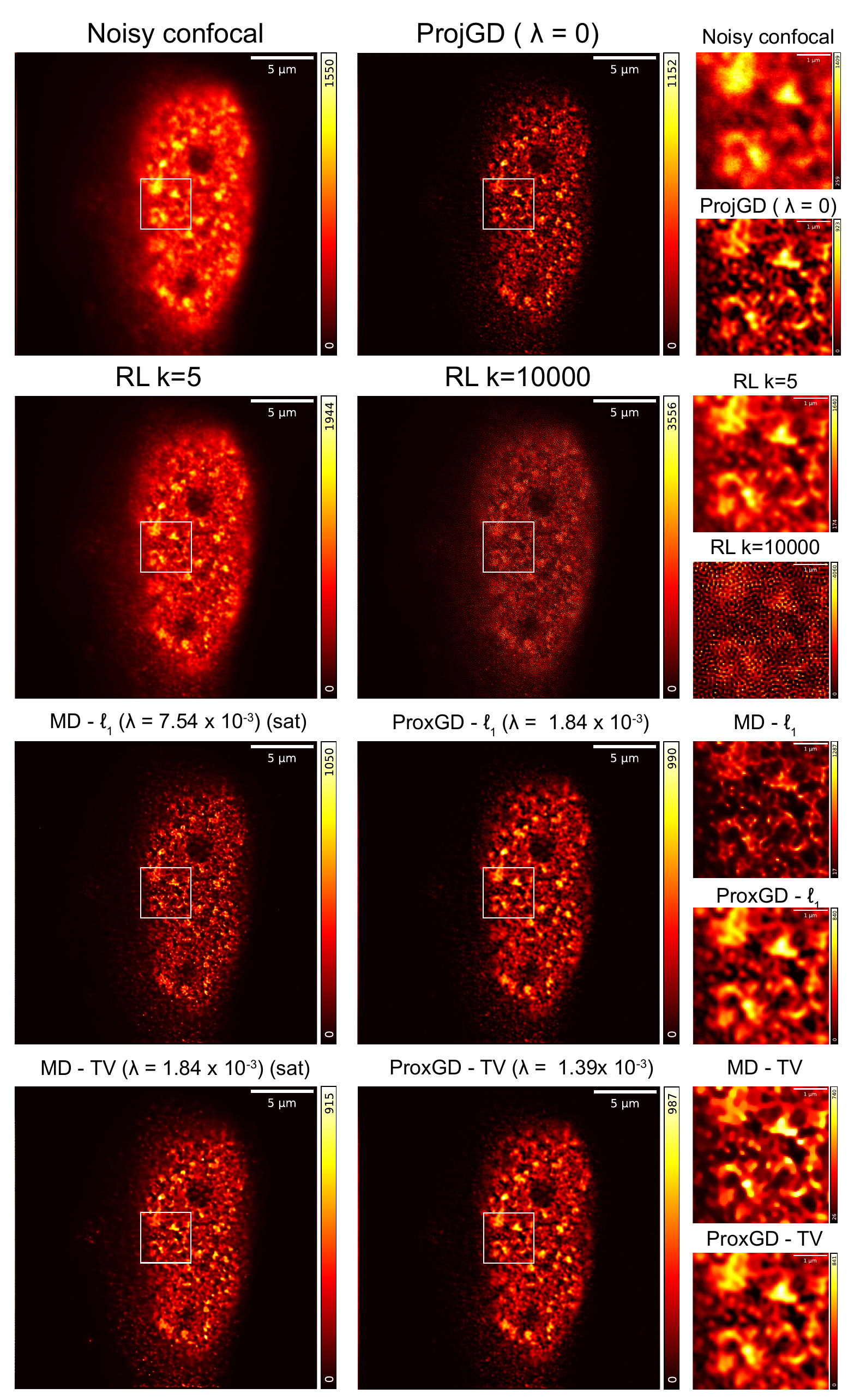}
    \caption{\emph{Histone} results: Full images  and corresponding crops .}
    \label{fig:03H3_res}
\end{figure}

\begin{figure}
    \centering
    
    \begin{subfigure}{0.47\linewidth}
        \centering
\includegraphics[width=1.1\linewidth]{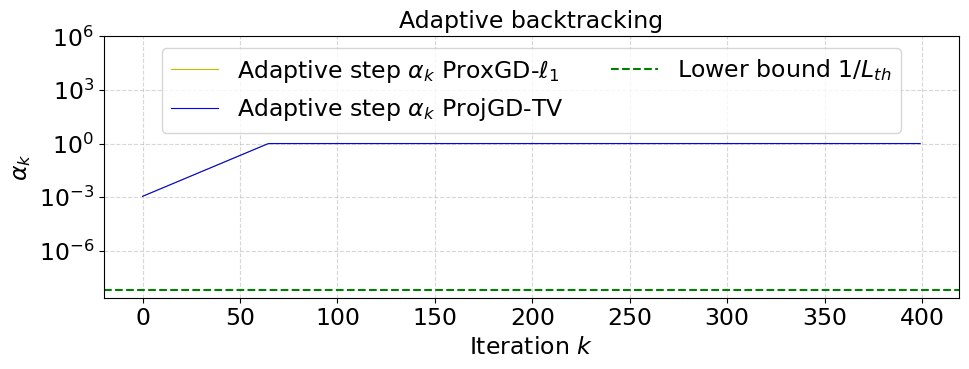} 
        \caption{Stepsize evolution for \eqref{eq:ProjGD} and \eqref{eq:PGD}.}
        \label{fig:back_pgdprox}
    \end{subfigure}
\hfill
    \begin{subfigure}{0.47\linewidth}
        \centering        \includegraphics[width=1.1\linewidth]{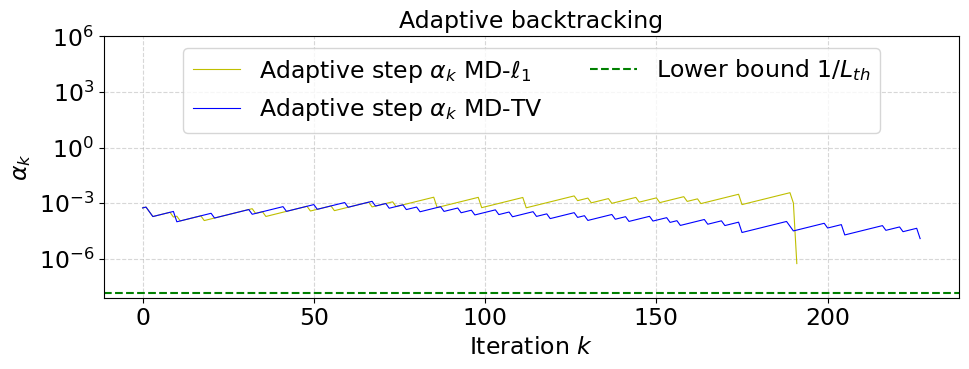}
        \caption{Stepsize evolution for \eqref{eq:explicit_md_step}.}
        \label{fig:back_md}
    \end{subfigure}
    
    \caption{Evolution of the adaptive stepsize $\alpha_k$ across first 500 iterations for the \emph{Histone} data, compared against the theoretical global lower bounds. In all cases, the adaptive backtracking strategy (Algorithm \ref{alg:backtracking_adaptive}) selects stepsizes significantly larger than the conservative theoretical bounds, thus accelerating convergence.}
    \label{fig:backtracking}
\end{figure}

\section{Conclusions}

We proposed a self-tuning explicit regularization framework for Image Scanning Microscopy (ISM) datasets. By formulating the reconstruction problem within a Bayesian Maximum A Posteriori framework, we combined multi-frame Poisson data fidelity terms with sparsity-promoting regularization strategies, including smoothed Total Variation and $\ell_1$ penalties. The proposed framework addresses one of the main limitations of existing Richardson–Lucy-based approaches, namely their semi-convergent behavior and the consequent need for empirical early stopping \cite{Liu}. By introducing explicit regularization together with automatic parameter selection based on the Residual Whiteness Principle, we obtained stable and robust reconstructions without requiring ground-truth information or manual tuning. The proposed spectral high-pass extension further enabled reliable parameter selection in the s$^2$ISM setting.

For the numerical computation of the solutions, we investigated first-order optimization strategies based on Proximal Gradient Descent and Mirror Descent corresponding to multiplicative algorithms. Both approaches can effectively solve the resulting reconstruction problems while preserving the physical non-negativity constraints. Extended experiments on simulated and real ISM datasets demonstrated improved reconstruction stability, enhanced image quality, and effective optical sectioning in challenging low-photon conditions.


An important direction for future work concerns the computational cost of optimal parameter selection. In the present work, the regularization parameter is determined through a grid-search use of RWP. A natural extension would instead consist in directly optimizing the whiteness functional within a bilevel optimization framework \cite{SantambrogioWP,PragliolaBilevelWP}, enabling simultaneous reconstruction and parameter learning in a unified procedure. Alternative definitions of the residual whiteness functional, for example based on detector-wise spatial correlations, may also be considered. More broadly, the framework introduced here opens the way to the integration of learned priors and, in particular, Plug-and-Play regularization approaches \cite{Kamilov_Pnp} tailored for Poisson data \cite{hurault2023convergent,Daniele2026,Benfenati2025} for for the reconstruction problems presented in this work.

\section*{Acknowledgments}

The work of S. Agostoni, C. Daniele and L. Calatroni is supported by the
funding received from the European Research Council (ERC) Starting project MALIN under
the European Union’s Horizon Europe programme (grant agreement No. 101117133).
The work of G. Vicidomini is supported by the
funding received from the European Research Council (ERC) Consolidator project BrightEyes under
the European Union’s Horizon Europe programme (grant agreement No. 818699).
This work represents only the view of the authors. The European Commission and the other
organizations are not responsible for any use that may be made of the information it contains.

\section*{Conflicts of Interest}

G. Vicidomini are co-founders and scientific advisors of Genoa Instruments. 

\section*{Declaration of generative AI and AI-assisted technologies}

During the preparation of this work, the authors used generative AI tools to assist with language proofreading of the manuscript, to support the review and debugging of the accompanying code, and to generate the graphical abstract. After using these tools, the authors reviewed and edited the content as needed.

\clearpage

\appendix

\section{Background on Optimization Theory}

In this section, we introduce the fundamental definitions of optimization necessary for an exhaustive understanding of the results in this work.  We report  in Section \ref{app: app_conv} the main convergence results.

\begin{definition}[Domain of a function] \label{def: dom}
    Let $f:\mathbb{R}^n \to \mathbb{R} \cup \{+\infty\}$. The (proper) domain of $f$ is the set of points where the function does not assume infinite values, that is
    $\operatorname{dom}(f)=\{x \in \mathbb{R}^n : f(x) < +\infty\}.$
\end{definition}

We now recall a classical property for smooth first-order optimization methods, that is the Lipschitz-smoothness of the gradient function.

\begin{definition}[$L$-smoothness]\label{L-smothness}
    Let $f:\mathbb{R}^n \to \mathbb{R} \cup \{+\infty\}$. $f$ is said to have Lipschitz-continuous gradient or, in short, $L$-smooth if:
    \begin{itemize}
        \item $f$ is differentiable in $\operatorname{int(dom}(f))$;
        \item $\exists L>0 : \|\nabla f(x)-\nabla f (y)\| \leq L\|x-y\|$ $\forall x,y \in \operatorname{dom}(f)$. 
    \end{itemize}
\end{definition}

Definition \ref{L-smothness} plays a crucial role in proving the convergence of the Proximal Gradient Method. Specifically, this condition is required to determine an admissible stepsize and to ensure the decrease of the target functional.

\smallskip

We now recall the notion of Legendre function, which is  used in this work to define the Mirror Descent update described in Section \ref{sec:MD}. Such functions are used to incorporate natural constraints on the desired solution, hence the definition requires some structural properties such as the growth till the boundary of the constraint set and its regularity in its interior. 
\begin{definition}[Legendre function]\label{def:Legendre function}
    Let \( h: \mathbb{R}^n \to \mathbb{R} \cup \{+ \infty\} \) have a non-empty domain. Then $h$ is Legendre-type if:
\begin{enumerate}
    \item[(i)] essentially smooth, that is \( h \) is differentiable on $ \operatorname{int}(\operatorname{dom}(h))$, with moreover $\| \nabla h(x^k) \| \to + \infty$ \ for every sequence \( \{x^k\}_{k \in \mathbb{N}} \subset \operatorname{dom}(h) \) converging to a boundary point of \( \operatorname{dom}(h)\) as \( k \to +\infty \);
    \item[(ii)]  strictly convex on \( \operatorname{int}(\operatorname{dom}(h))\).
\end{enumerate}
\end{definition}

We can now define a generalized notion of distance associated to such function in terms of the so-called Bregman divergence which we define below.
\begin{definition}[Bregman divergence] \label{Bregman divergence}
    Given a Legendre function $h: \mathbb{R}^n \to \mathbb{R} \cup \{+ \infty\}$ its associated Bregman divergence $D_h:\mathbb{R}^n \times \operatorname{int}(\operatorname{dom}(h)) \longrightarrow [0,+\infty]$ is defined in the following way:
\begin{equation}
    D_h(x,y)=h(x) - h(y) - \langle \nabla h (y), x-y  \rangle  \label{bregman distance}
\end{equation}
with $x \in \mathbb{R}^n$ and $y \in \operatorname{int}(\operatorname{dom}(h))$. If $x \notin \operatorname{dom}(h)$, then $D_h(x,y)=+\infty \ \forall y \in \operatorname{int}(\operatorname{dom}(h))$. 
\end{definition}
Note that Bregman divergences are not symmetric in general (for instance, there is no guarantee for them to be symmetric) hence are not distances in the classical sense.

Definitions \ref{def:Legendre function}, \ref{Bregman divergence} are the fundamental ingredients to define Mirror Descent algorithm described in Section \ref{sec:MD}. The following condition generalizes property \eqref{L-smothness}.
\begin{definition}[NoLip, $L$-SMAD, Relative smoothness \cite{nolip},\cite{Bolte2018},\cite{RelativeSmoothness}] \label{NoLip}
Let $h$ be a Legendre function and let $f:\mathbb{R}^n \rightarrow \mathbb{R} \cup \{+\infty\}$ be proper, lower semicontinuous with $\operatorname{dom}(f) \supset \operatorname{dom}(h)$ and differentiable on $\operatorname{int}(\operatorname{dom}(h))$.
We say that $f$ satisfies the (NoLip) condition with respect to $h$ with constant $L>0$, if the following holds:
\begin{equation}
   \text{There exists } L>0 \text{ such that } Lh-f \text{ is convex on $\operatorname{int}(\operatorname{dom}(h))$} \label{eq nolip}   
\end{equation}
\end{definition}
This condition can be seen as a generalization of the standard notion of $L$-smoothness (Definition \ref{L-smothness}) with respect to the new geometry induced by the Legendre function $h$. In close analogy to the standard Proximal Gradient Method, this property is crucial for establishing the monotonic decrease of the objective functional.




\section{Computation of the (generalized) Lipschitz constants}

In this section we provide the computation of the Lipschitz constants of the gradient for the functionals of interest $\Phi^{\text{ISM}}(\,\cdot\,; \underline{\mathbf{A}}, \underline{\mathbf{y}}, 
\underline{\mathbf{b}}), \Phi^{\text{$s^2$ISM}}(\,\cdot\,; \underline{\mathbf{A}}, \underline{\mathbf{y}}, 
\underline{\mathbf{b}})$ and for the smooth regularization functionals. Similarly, we provide an explicit computation of the generalized Lipschitz constants required for convergence of the MD algorithms

\begin{proposition}[Lipschitz constant of $ \nabla \Phi^{\text{ISM}}$]
\label{KL_L} 
Let $\Phi^{\text{ISM}}(\,\cdot\,; \underline{\mathbf{A}}, \underline{\mathbf{y}}, 
\underline{\mathbf{b}})$ be  the functional defined in \eqref{ml_min} and $\underline{\mathbf{b}} = [ \mathbf{b}_1, \dots, \mathbf{b}_{25}]$. If $\mathbf{b_d} > 0$ for all $d$ , then $\Phi^{\text{ISM}}(\,\cdot\,; \underline{\mathbf{A}}, \underline{\mathbf{y}}, 
\underline{\mathbf{b}})$ is $L_{\Phi^{\text{ISM}}}$-smooth with $L_{\Phi^{\text{ISM}}}$ given by:
\begin{equation} \label{eq:L_ISM}
    L_{\Phi^{\text{ISM}}} \leq \sum_{d=1}^{25} \frac{\max(\mathbf{y}_d)}{b_d^2} \cdot 
    \max(\mathbf{A}_d \mathbf{1}) \cdot \max(\mathbf{A}_d^\top \mathbf{1}).
\end{equation}
\end{proposition}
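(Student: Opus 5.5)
The plan is to bound the Hessian of $\Phi^{\text{ISM}}$ on the non-negative orthant, where the iterates live, and then use the fact that the Lipschitz constant of a $C^2$ gradient on a convex set is controlled by the supremum of the Hessian's operator norm. Since $\Phi^{\text{ISM}}$ is a sum over $d$ and Lipschitz constants are subadditive, it suffices to treat a single detector term $\operatorname{KL}(\mathbf y_d,\A_d\x+\mathbf b_d)$ and sum at the end. Differentiating \eqref{eq:grad_ISM} once more gives
\begin{equation}
\nabla^2 \Phi_d(\x)=\A_d^\top \operatorname{diag}\!\left(\frac{\mathbf y_d}{(\A_d\x+\mathbf b_d)^2}\right)\A_d .
\end{equation}
For $\x\geq 0$, the entries of $\A_d$ are non-negative because it is a convolution with a non-negative PSF. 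Hence $\A_d\x+\mathbf b_d\geq b_d>0$ entrywise, and the diagonal weights satisfy $0\le y_{d,i}/(\A_d\x+\mathbf b_d)_i^2\le \max(\mathbf y_d)/b_d^2$. This is exactly where the hypothesis $\mathbf b_d>0$ enters.

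Next I bound the operator norm. Because the diagonal weight matrix $\mathbf D$ is positive semidefinite and dominated by $\frac{\max(\mathbf y_d)}{b_d^2}\mathbf I$, we have
\begin{equation}
\|\A_d^\top \mathbf D\A_d\|_2\le \frac{\max(\mathbf y_d)}{b_d^2}\,\|\A_d\|_2^2 .
\end{equation}
I then invoke the classical estimate $\|\mathbf M\|_2^2\le\|\mathbf M\|_1\|\mathbf M\|_\infty$. For a matrix with non-negative entries, the maximal row sum is $\|\A_d\|_\infty=\max(\A_d\mathbf 1)$ and the maximal column sum is $\|\A_d\|_1=\max(\A_d^\top\mathbf 1)$. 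This gives the factor $\max(\A_d\mathbf 1)\max(\A_d^\top\mathbf 1)$.

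To conclude, for $\x,\mathbf z\geq 0$ I write
\begin{equation}
\nabla\Phi_d(\x)-\nabla\Phi_d(\mathbf z)=\int_0^1\nabla^2\Phi_d(\mathbf z+t(\x-\mathbf z))(\x-\mathbf z)\,dt .
\end{equation}
The segment stays in the orthant by convexity, so the Hessian bound applies along it. This yields the Lipschitz estimate for each $d$, and summing over $d=1,\dots,25$ gives \eqref{eq:L_ISM}.

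The main subtlety is the domain rather than the computation. Off the orthant, $\A_d\x+\mathbf b_d$ can approach zero, and then the gradient is not globally Lipschitz. The statement must therefore be read as $L$-smoothness on $\operatorname{dom}$, taken as $\R^N_{\ge0}$ (consistent with Definition~\ref{L-smothness} and the constraint $\iota_{\ge0}$). I will make this explicit, together with the non-negativity of the PSF entries used both for $\A_d\x\ge0$ and for identifying the norms with row and column sums.
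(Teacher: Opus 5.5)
Your proposal is correct and follows essentially the same route as the paper. Both compute the Hessian $\sum_d \A_d^\top \operatorname{diag}\bigl(\y_d/(\A_d\x+\mathbf b_d)^2\bigr)\A_d$, bound its diagonal weights by $\max(\y_d)/b_d^2$, use $\|\A_d\|_2^2\le\|\A_d\|_1\|\A_d\|_\infty$ with row and column sums for non-negative matrices, and sum over $d$; your explicit restriction to the non-negative orthant (where $\A_d\x+\mathbf b_d\ge b_d$ actually holds, given non-negative PSFs) and the integral mean-value step make precise what the paper's proof uses only implicitly.
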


\begin{proof}
This proof is an extension of \cite[Lemma 1]{Harmany}.
The Hessian of $\Phi^{\text{ISM}}$ with respect to $\mathbf{x}$ is:
\begin{equation}
    \nabla^2 \Phi^{\text{ISM}}(\mathbf{x};\, \underline{\mathbf{A}}, \underline{\mathbf{y}}, 
    \underline{\mathbf{b}}) = \sum_{d=1}^{25} \mathbf{A}_d^\top \operatorname{diag}\!\left( 
    \frac{\mathbf{y}_d}{(\mathbf{A}_d \mathbf{x} + \mathbf{b}_d)^2} \right) \mathbf{A}_d.
\end{equation}
Since the Lipschitz constant of $\nabla \Phi^{\text{ISM}}$ equals the supremum of the spectral 
norm of the Hessian, it suffices to bound the largest eigenvalue of 
each summand. For each $d = 1, \dots, 25$, denoting by $L_d$ the Lipschitz constant associated to the gradient of each detector component of the data term, we have:
\begin{align*}
    L_d &= \lambda_{\max}\!\left( \mathbf{A}_d^\top \operatorname{diag}\!\left( 
    \frac{\mathbf{y}_d}{(\mathbf{A}_d \mathbf{x} + \mathbf{b}_d)^2} \right) 
    \mathbf{A}_d \right) \leq \frac{1}{b_d^2}\, \lambda_{\max}\!\left( \mathbf{A}_d^\top 
    \operatorname{diag}(\mathbf{y}_d)\, \mathbf{A}_d \right) \leq \frac{1}{b_d^2}\, \left\| \mathbf{A}_d^\top \operatorname{diag}(\mathbf{y}_d)\, 
    \mathbf{A}_d \right\|_2 \\
    &\leq \frac{\max(\mathbf{y}_d)}{b_d^2}\, \|\mathbf{A}_d\|_2^2 \leq \frac{\max(\mathbf{y}_d)}{b_d^2}\, \|\mathbf{A}_d\|_1 \|\mathbf{A}_d\|_\infty = \frac{\max(\mathbf{y}_d)}{b_d^2} \cdot \max(\mathbf{A}_d \mathbf{1}) \cdot \max(\mathbf{A}_d^\top \mathbf{1}),
\end{align*}
where the last equality follows from the fact that $\|\mathbf{A}_d\|_1 = 
\max(\mathbf{A}_d^\top \mathbf{1})$ and $\|\mathbf{A}_d\|_\infty = 
\max(\mathbf{A}_d \mathbf{1})$ for non-negative matrices. The desired constant 
is then bounded by summing over all frames:
\begin{equation}
    L_{\Phi^{\text{ISM}}} \leq \sum_{d=1}^{25} L_d = \sum_{d=1}^{25} 
    \frac{\max(\mathbf{y}_d)}{b_d^2} \cdot \max(\mathbf{A}_d \mathbf{1}) \cdot 
    \max(\mathbf{A}_d^\top \mathbf{1}). 
\end{equation}
\end{proof}

\begin{remark}[Dual--plane case]\label{rmk: lipschitz constant of KL in 2 planes ism}
    Proposition \ref{KL_L} can be generalized in a straightforward manner also to the functional $\Phi^{\text{$s^2$ISM}}(\,\cdot\,; \underline{\mathbf{A}}, \underline{\mathbf{y}}, 
\underline{\mathbf{b}})$.
In fact, similarly to the proof above, for a given $d \in \{1,\dots,25\}$, we can consider the operator $\underline{\A}_d\underline{\x} =  \A_{d,1}\x_1 + \A_{d,2}\x_2$ and repeat the same computations above.
\end{remark}

\begin{proposition}[Lipschitz constant of $\nabla \text{TV}_{\epsilon}$]
\label{prop:lip_F}

Let $\operatorname{TV}_{\epsilon}(\mathbf{x})$ as defined in \eqref{eq:TV_smooth}. If $\epsilon > 0$, then $\operatorname{TV}_{\epsilon}$ has $L_{\operatorname{TV}_{\epsilon}}$-Lipschitz continuous gradient with:
\begin{equation} \label{eq:L_ISM_TV}
L_{\operatorname{TV}_{\epsilon}} \leq  \frac{8}{\epsilon}.
\end{equation}
\end{proposition}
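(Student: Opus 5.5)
We plan to write $\operatorname{TV}_{\epsilon}$ as a composition $\operatorname{TV}_{\epsilon}(\mathbf x)=g(\nabla\mathbf x)$, where $\nabla\in\mathbb R^{2N\times N}$ is the discrete gradient operator. The function $g:\mathbb R^{2N}\to\mathbb R$ acts on the pair field $\mathbf p=(\mathbf p_h,\mathbf p_v)$ by
\[
g(\mathbf p)=\sum_{i=1}^N \phi_\epsilon(p_{h,i},p_{v,i}),
\qquad
\phi_\epsilon(u)=\sqrt{\|u\|^2+\epsilon^2},\quad u\in\mathbb R^2 .
\]
By the chain rule, $\nabla\operatorname{TV}_\epsilon(\mathbf x)=\nabla^\top\nabla g(\nabla\mathbf x)$, which matches \eqref{eq:TV_grad} up to the sign convention for the adjoint. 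Hence
\[
\|\nabla\operatorname{TV}_\epsilon(\mathbf x)-\nabla\operatorname{TV}_\epsilon(\mathbf y)\|\le \|\nabla\|_2\,L_g\,\|\nabla\mathbf x-\nabla\mathbf y\|\le L_g\|\nabla\|_2^2\,\|\mathbf x-\mathbf y\|,
\]
so it suffices to establish two bounds: $L_g\le 1/\epsilon$ and $\|\nabla\|_2^2\le 8$.

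For $L_g$, note that $g$ is separable over pixels $i$, so its gradient is Lipschitz with the maximum of the per-pixel constants. For a single pixel we compute the Hessian of $\phi_\epsilon$ on $\mathbb R^2$:
\[
\nabla^2\phi_\epsilon(u)=\frac{1}{\phi_\epsilon(u)}\Bigl(I-\frac{uu^\top}{\|u\|^2+\epsilon^2}\Bigr).
\]
Its eigenvalues are $1/\phi_\epsilon(u)$ in directions orthogonal to $u$, and $\epsilon^2/\phi_\epsilon(u)^3$ along $u$. Both are bounded by $1/\epsilon$ because $\phi_\epsilon(u)\ge\epsilon$. Since the Hessian is symmetric positive semidefinite, its spectral norm is at most $1/\epsilon$ uniformly in $u$. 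The mean value theorem then gives the Lipschitz bound $L_g\le 1/\epsilon$.

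For the operator norm, we assume forward differences with periodic or Neumann boundary conditions, as is standard. Expanding the squares gives
\[
\|\nabla_h\mathbf x\|^2=\sum_i (x_{i+1}-x_i)^2\le 2\sum_i (x_{i+1}^2+x_i^2)\le 4\|\mathbf x\|^2,
\]
and the analogous bound holds for $\nabla_v$. Adding the two yields $\|\nabla\mathbf x\|^2\le 8\|\mathbf x\|^2$, i.e. $\|\nabla\|_2^2\le 8$; this is the classical estimate of Chambolle. Combining the two bounds gives $L_{\operatorname{TV}_\epsilon}\le 8/\epsilon$, as stated in \eqref{eq:L_ISM_TV}.

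The main point needing care is the per-pixel Hessian bound. The coupling between the horizontal and vertical components inside the square root means the Hessian is a genuine $2\times 2$ block rather than a diagonal matrix. We handle this through the explicit eigen-decomposition above rather than bounding the entries one by one. A secondary point is the boundary convention for $\nabla$, but the bound $\|\nabla\|_2^2\le 8$ holds for all the usual choices.
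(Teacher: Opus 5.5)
Your proposal is correct and follows essentially the same argument as the paper. The paper bounds the Hessian $\nabla^\top\mathbf{W}(\mathbf{x})\nabla$ by $\|\nabla\|_2^2\max_i\|\mathbf{W}_i(\mathbf{x})\|_2$, where $\mathbf{W}_i$ is exactly your per-pixel $2\times 2$ Hessian of $\phi_\epsilon$ with norm at most $1/\epsilon$, and where $\|\nabla\|_2^2=\|-\Delta\|_2\le 8$; your composition-Lipschitz phrasing and your hand check of $\|\nabla\|_2^2\le 8$ repackage the same two bounds (and your eigenvalue description is more accurate than the paper's, which mislabels the block as an orthogonal projection).
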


\begin{proof} This proof is based on \cite[Theorem 3.1]{chambolle2004algorithm}.
Similarly to Proposition , the idea is to bound the spectral norm of the Hessian of the functional of interest. In the proof, for the sake of simplicity, we omit the dependency on $\epsilon$ in the object involved. Let $\nabla = \begin{bmatrix}\nabla_h \\ \nabla_v\end{bmatrix} \in \mathbb{R}^{2N \times N}$ 
be the discrete gradient operator. The Hessian of  the $\operatorname{TV}_\epsilon$ functional is:
\begin{equation}    \mathbf{H}_{\operatorname{TV}}(\mathbf{x}) = \nabla^\top \mathbf{W}(\mathbf{x})\, \nabla 
    \in \mathbb{R}^{N \times N},
\end{equation}
where $\mathbf{W}(\mathbf{x}) \in \mathbb{R}^{2N \times 2N}$ is block-diagonal with 
$2 \times 2$ blocks given by:
\begin{equation}
    \mathbf{W}_i(\mathbf{x}) = \frac{1}{\sqrt{\|(\nabla \mathbf{x})_i\|^2 + \epsilon^2}}
    \left(\mathbf{I}_2 - \frac{(\nabla \mathbf{x})_i\,(\nabla \mathbf{x})_i^\top}
    {\|(\nabla \mathbf{x})_i\|^2 + \epsilon^2}\right), \qquad i = 1, \dots, N.
\end{equation}
Each block can be written as $\mathbf{W}_{i} = \frac{1}{\sqrt{\|(\nabla\mathbf{x})_i\|^2 
+ \epsilon^2}}\,\mathbf{P}_i$, where $\mathbf{P}_i$ is an orthogonal projection with 
eigenvalues:
\begin{equation}
    \lambda_1(\mathbf{P}_i) = \frac{\epsilon^2}{\|(\nabla \mathbf{x})_i\|^2 + \epsilon^2} 
    \leq 1, \qquad \lambda_2(\mathbf{P}_i) = 1,
\end{equation}
so $\|\mathbf{P}_i\|_2 = 1$ and therefore:
\begin{equation}
    \|\mathbf{W}_i(\mathbf{x})\|_2 \leq \frac{1}{\sqrt{\|(\nabla \mathbf{x})_i\|^2 
    + \epsilon^2}} \leq \frac{1}{\epsilon}.
\end{equation}
By submultiplicativity and the identity $\nabla^\top \nabla = -\Delta$, where 
$\|-\Delta\|_2 \leq 8$ for the  2D finite-difference Laplacian discretized with the usual five-point formula, we obtain:
\begin{equation}
    \|\mathbf{H}_{\text{TV}}(\mathbf{x})\|_2 \leq \|\nabla\|_2^2\,\|\mathbf{W}(\mathbf{x})\|_2 
    \leq \frac{8}{\epsilon}, 
\end{equation}
as desired.
\end{proof}
Thanks to the result above we deduce that for the actual regularization functional considered in this work, which reads $\lambda\operatorname{TV}_\epsilon(\mathbf{x})$ there holds consequently:
\begin{equation}
\|\nabla^2(\lambda\operatorname{TV})(\mathbf{x})\|_2 
    = \lambda\,\|\mathbf{H}_{\text{TV}}(\mathbf{x})\|_2 \leq \frac{8\lambda}{\epsilon}.
\end{equation}

\begin{remark}[Lipschitz constant of the regularization function for the dual plane problem]
    In the dual-plane problem the (smoothed) regularization function $\tilde{R}$ acts only on the first component of $\underline{\x}$ (the in--focus signal) as $\tilde{R}(\underline{\x}) = R(\x_1)$. Thus there holds $L_{\tilde{R}} = L_{R}$, which allows for using in the case the same estimates above.
\end{remark}
We can now combine the two estimates by triangular inequality to get an estimate of the Lipschitz smoothness constant of the functional $f(\cdot) = \Phi^{ISM}(\,\cdot\,; \underline{\mathbf{A}}, \underline{\mathbf{y}}, 
\underline{\mathbf{b}}) + \lambda \text{TV}_{\epsilon}(\cdot)$, which is $L_{f}$-- smooth with constant:
\begin{equation}
    L_{f} \leq \sum_{d=1}^{25} 
    \frac{\max(\mathbf{y}_d)}{b_d^2} \cdot \max(\mathbf{A}_d \mathbf{1}) \cdot 
    \max(\mathbf{A}_d^\top \mathbf{1}) + \frac{8 \lambda}{\epsilon}
    \end{equation}

In the Mirror Descent framework, computing a generalized Lipschitz constant in the sense of Definition \ref{NoLip} is often a challenging task. In our setting of interest, however, under the choice of the logarithmic barrier $h$ in \eqref{eq: burg's entropy}, we can state the following result:
\begin{proposition}
    Let $\Phi^{\text{ISM}}(\,\cdot\,; \underline{\mathbf{A}}, \underline{\mathbf{y}}, 
\underline{\mathbf{b}})$  the functional defined in \eqref{ml_min} and $h$ defined in \eqref{eq: burg's entropy}. Then any $L> 0$ such that:
\begin{equation}
    L \geq \sum^{25}_{d=1} \|\y_d\|_1,
\end{equation}
satisfies Definition \ref{NoLip}.
\end{proposition}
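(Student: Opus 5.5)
This is the classical relative smoothness result of Bauschke--Bolte--Teboulle \cite{nolip} for the Kullback--Leibler fidelity with respect to Burg's entropy, extended to the multi-frame setting with background. Since both $Lh$ and $\Phi^{\text{ISM}}$ are $C^2$ on $\operatorname{int}(\operatorname{dom}(h)) = \mathbb{R}^N_{>0}$, it suffices to show
\[
\langle \nabla^2 \Phi^{\text{ISM}}(\x)\mathbf{u},\mathbf{u}\rangle \le L\,\langle \nabla^2 h(\x)\mathbf{u},\mathbf{u}\rangle
\]
for every $\x>0$ and every $\mathbf{u}\in\mathbb{R}^N$. Convexity of $Lh-\Phi^{\text{ISM}}$ on the open convex set $\mathbb{R}^N_{>0}$ then follows.

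The Hessians are read off directly. From the proof of Proposition~\ref{KL_L},
\[
\langle \nabla^2\Phi^{\text{ISM}}(\x)\mathbf{u},\mathbf{u}\rangle=\sum_{d=1}^{25}\sum_{i=1}^N \frac{y_{d,i}\,\langle \mathbf{a}_{d,i},\mathbf{u}\rangle^2}{(\langle \mathbf{a}_{d,i},\x\rangle+b_d)^2},
\]
where $\mathbf{a}_{d,i}\ge 0$ denotes the $i$-th row of $\A_d$. On the other side,
\[
\langle\nabla^2 h(\x)\mathbf{u},\mathbf{u}\rangle=\sum_{j=1}^N u_j^2/x_j^2 .
\]

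The key step is a per-term bound, valid for each pair $(d,i)$:
\[
\frac{\langle \mathbf{a}_{d,i},\mathbf{u}\rangle^2}{(\langle \mathbf{a}_{d,i},\x\rangle+b_d)^2}\le \frac{\langle \mathbf{a}_{d,i},\mathbf{u}\rangle^2}{\langle \mathbf{a}_{d,i},\x\rangle^2}\le \sum_{j} u_j^2/x_j^2 .
\]
The first inequality only uses $b_d\ge 0$, so the background helps rather than hurts. For the second, write
\[
\langle \mathbf{a},\mathbf{u}\rangle=\sum_j \frac{a_jx_j}{\langle\mathbf{a},\x\rangle}\cdot\langle\mathbf{a},\x\rangle\frac{u_j}{x_j}.
\]
The weights $a_jx_j/\langle\mathbf{a},\x\rangle$ are nonnegative and sum to one, so Jensen's inequality (convexity of $t\mapsto t^2$) gives
\[
\frac{\langle\mathbf{a},\mathbf{u}\rangle^2}{\langle\mathbf{a},\x\rangle^2}\le\sum_j \frac{a_jx_j}{\langle\mathbf{a},\x\rangle}\,\frac{u_j^2}{x_j^2}\le \sum_j\frac{u_j^2}{x_j^2}.
\]
The last step uses that each weight is at most one. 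Rows with $\mathbf{a}_{d,i}=0$ contribute zero and need no argument.

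Multiplying the per-term bound by $y_{d,i}\ge 0$ and summing over $i$ and $d$ yields
\[
\langle \nabla^2\Phi^{\text{ISM}}\mathbf{u},\mathbf{u}\rangle\le\Big(\sum_{d}\|\y_d\|_1\Big)\langle\nabla^2h\,\mathbf{u},\mathbf{u}\rangle .
\]
Any larger $L$ works too, since $\nabla^2 h\succeq 0$.

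It remains to check the hypotheses of Definition~\ref{NoLip}. Burg's entropy is Legendre. The set $\operatorname{dom}\Phi^{\text{ISM}}$ contains $\{\x:\A_d\x+\mathbf{b}_d>0\}\supset\mathbb{R}^N_{>0}$, using $b_d>0$ (or $\A_d\ge0$ with no zero rows). $\Phi^{\text{ISM}}$ is differentiable there.

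The only delicate point is the Jensen step: it requires nonnegativity of the PSF matrices and strict positivity of $\x$. The rest is bookkeeping.
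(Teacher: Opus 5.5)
Your proof is correct and follows the same route as the paper: the paper invokes \cite[Lemma 7]{nolip} detector by detector and sums the resulting convex functions, and your Jensen-based Hessian comparison is the argument behind that lemma, written out in full. Your remark that $(\langle \mathbf{a}_{d,i},\x\rangle+b_d)^2\ge\langle \mathbf{a}_{d,i},\x\rangle^2$ also shows why the background term does no harm, a point the paper leaves implicit since the cited lemma is stated for $\operatorname{KL}(\mathbf{y},\A\x)$ without background.
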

\begin{proof}
Recalling the definition of $\Phi^{\text{ISM}}(\cdot;\underline{\mathbf{A}}, \underline{\mathbf{y}}, \underline{\mathbf{b}})$:
$$
\Phi^{\text{ISM}}(\x;\underline{\mathbf{A}}, \underline{\mathbf{y}}, \underline{\mathbf{b}})
=
\sum_{d=1}^{25}
\operatorname{KL}(\mathbf{y}_d,\A_d\x+\mathbf{b}_d),$$
we have that for every $d = 1,...25$, we can apply \cite[Lemma 7]{nolip}, which implies that: \begin{equation}
    \|\y_d\|_1 h(\cdot) - \operatorname{KL}(\mathbf{y}_d,\A_d\cdot+\mathbf{b}_d) \text{ is convex on } \operatorname{\operatorname{int}(dom}(h)).
\end{equation}
The thesis follows by the fact the sum of convex functions is convex.
\end{proof}

The adaptation to the
dual-plane case is straightforward as described in Remark \ref{rmk: lipschitz constant of KL in 2 planes ism}.

\begin{remark}[On the generalized Lipschitz constant of the composite problem]
Let us note that deriving an explicit generalized Lipschitz constant for composite problems involving a smooth regularization functional is a difficult task as for the latter such estimate could be challenging. However, to guarantee existence of such constant it is possible to follow a general reasoning that applies to all regularization functions $R$ that have a Lipschitz gradient and are twice differentiable on the positive orthant (as in our case). Further details and a more precise analysis of this case are available in \cite[Section B.3.3]{Daniele2026}.





Note that that the lack of an explicit estimate of the generalized Lipschitz constant poses a challenge for choosing a fixed, global stepsize. However, employing a backtracking strategy, such as the one depicted in Algorithm \ref{alg:backtracking_adaptive}, allows this issue to be bypassed.
\end{remark}

\section{Convergence Properties of the Algorithms}
\label{app: app_conv}

In this section, we provide further details about the convergence of the algorithms considered in this work. 

\subsection{Proximal Gradient Descent}

We present the standard convergence result for Proximal Gradient Descent which applies to the different setups considered, see, e.g. \cite{beck}. To leave the discussion general, we consider the following composite minimization problem:
\begin{equation}
    \underset{x \in \mathbb{R}^n}{\operatorname{argmin}}~ F(x) := f(x) + g(x), \label{eq: composite problem F = f+g}
\end{equation}
where we assume the following:
\begin{assumption} \label{assumption 1} \leavevmode
    \begin{enumerate}
        \item $f : \mathbb{R}^n \to \mathbb{R} \cup \{+ \infty\}$ is proper, convex and $L_{f}$-smooth.
        \item $g : \mathbb{R}^n \to \mathbb{R} \cup \{+ \infty\}$ is proper, lower semi-continuous and convex.
        \item We assume that  $X^*:=\left\{ \text{Argmin}~F\right\} \neq \emptyset$. For $\x^*\in X^*$ we define $F(\x^*)=F_{\text{opt}}$.   
    \end{enumerate}
\end{assumption}

\begin{theorem}[Theorem 10.1 \cite{beck}]
\label{converg_pgd}
Suppose that Assumption \ref{assumption 1} holds. Let $\{  \mathbf{x}^k \}_{k \geq 0}$ be the sequence given by:
\begin{equation}
    \mathbf{x}^{k+1} = \operatorname{prox}_{\alpha g} (\mathbf{x}^k - \alpha \nabla f(\mathbf{x}^k)), \label{eq: general pgd}
\end{equation}
where $\alpha > 0$ is the stepsize.
     If $\alpha \in \left(0, \frac{2}{L_f}\right) $, then for any optimal solution \( \x^* \in X^* \) and \( k \geq 0 \):
    $$F(\x^k) - F_{\text{opt}} \leq \frac{L_f \Vert \x^0 - \x^* \Vert^2}{2k} = O(1/k)$$
\end{theorem}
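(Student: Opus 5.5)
The plan has two parts. First, run the standard descent-lemma argument of \cite[Theorem 10.21]{beck}, which yields a bound of the stated form. Second, check which stepsizes actually deliver the displayed constant $L_f\|\mathbf x^0-\mathbf x^*\|^2/(2k)$. The conclusion will be that the $O(1/k)$ rate holds throughout $(0,2/L_f)$, but the displayed constant cannot hold uniformly on that interval. I will therefore prove the theorem in the form ``$O(1/k)$, with the displayed constant attained at $\alpha=1/L_f$'', and justify this reading by an explicit example.

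\emph{Step 1: the one-step inequality.} Write $\mathbf x^{k+1}=\mathbf x^k-\alpha G_\alpha(\mathbf x^k)$, where $G_\alpha$ is the gradient mapping. By Assumption~\ref{assumption 1}, $f$ is $L_f$-smooth, so the descent lemma applies. The prox optimality condition says $G_\alpha(\mathbf x^k)-\nabla f(\mathbf x^k)\in\partial g(\mathbf x^{k+1})$. Combining these with convexity of $f$ and $g$ gives, for any $\mathbf z$,
\[
F(\mathbf x^{k+1})\le F(\mathbf z)+\langle G_\alpha(\mathbf x^k),\mathbf x^k-\mathbf z\rangle-\alpha\bigl(1-\tfrac{\alpha L_f}{2}\bigr)\|G_\alpha(\mathbf x^k)\|^2 .
\]
\emph{Step 2: monotonicity and telescoping.} Taking $\mathbf z=\mathbf x^k$ shows $F(\mathbf x^{k+1})\le F(\mathbf x^k)$ for every $\alpha<2/L_f$. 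Now take $\mathbf z=\mathbf x^*$ with $\alpha\le 1/L_f$, so that $1-\tfrac{\alpha L_f}{2}\ge\tfrac12$. Completing the square gives
\[
F(\mathbf x^{k+1})-F_{\rm opt}\le\tfrac1{2\alpha}\bigl(\|\mathbf x^k-\mathbf x^*\|^2-\|\mathbf x^{k+1}-\mathbf x^*\|^2\bigr).
\]
Summing over $k$ and using monotonicity yields
\[
F(\mathbf x^k)-F_{\rm opt}\le\frac{\|\mathbf x^0-\mathbf x^*\|^2}{2\alpha k}.
\]
At $\alpha=1/L_f$ this is exactly the displayed estimate $L_f\|\mathbf x^0-\mathbf x^*\|^2/(2k)$.

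\emph{Step 3: the range $\alpha\in(1/L_f,2/L_f)$.} Here the coefficient $1-\tfrac{\alpha L_f}{2}$ drops below $\tfrac12$, so the square in Step 2 can no longer be completed as written. Instead I would use the fact that $\|\mathbf x^k-\mathbf x^*\|$ is nonincreasing, which follows from firm nonexpansiveness of the forward--backward map when $\alpha<2/L_f$. I would combine this with the sufficient-decrease inequality from Step 1. The standard recursion on $\delta_k:=F(\mathbf x^k)-F_{\rm opt}$ then gives $\delta_k\le C(\alpha)\|\mathbf x^0-\mathbf x^*\|^2/k$, with a constant $C(\alpha)$ that blows up as $\alpha\to 2/L_f$. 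This is the $O(1/k)$ part of the claim on the whole interval.

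\emph{Main obstacle.} The main obstacle is matching the literal constant $L_f/2$ for every $\alpha$, and I expect this to fail. Take $n=1$, $g=0$ and $f(x)=\tfrac{L_f}{2}x^2$. Then $x^k=(1-\alpha L_f)^k x^0$, so
\[
F(x^k)-F_{\rm opt}=\tfrac{L_f}{2}(1-\alpha L_f)^{2k}|x^0|^2 .
\]
This exceeds $L_f|x^0|^2/(2k)$ for large $k$ whenever $\alpha L_f$ is close to $0$ or close to $2$. So in the write-up I will state explicitly the following. The displayed constant holds for $\alpha=1/L_f$ (and, by Step 2, for $\alpha\le 1/L_f$ the bound holds with $L_f$ replaced by $1/\alpha$). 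The guarantee valid on all of $(0,2/L_f)$ is the $O(1/k)$ rate with the $\alpha$-dependent constant $C(\alpha)$. This is what is needed for the ISM schemes, where the backtracking of Algorithm~\ref{alg:backtracking_adaptive} enforces the descent condition in Step 1 directly.
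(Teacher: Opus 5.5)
The paper gives no proof of its own, only the pointer to \cite{beck}, where the $O(1/k)$ bound with constant $L_f/2$ is stated for the constant stepsize $1/L_f$ (or for backtracking). Your Steps 1--2 (the fundamental prox-grad inequality, completing the square, telescoping plus monotonicity) are exactly that argument. Your diagnosis of the statement is also correct: $(0,2/L_f)$ is the sufficient-decrease range, not a range on which the displayed constant holds. Proving the bound $\|\mathbf{x}^0-\mathbf{x}^*\|^2/(2\alpha k)$ for $\alpha\le 1/L_f$ (which is the stated bound at $\alpha=1/L_f$), together with an $\alpha$-dependent $O(1/k)$ on the rest of the interval, is the right way to read it.

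The one thing to fix is the justification of your counterexample, whose quantifiers are backwards. For fixed $\alpha\in(0,2/L_f)$ you have $|1-\alpha L_f|<1$, so $(1-\alpha L_f)^{2k}$ decays geometrically and is eventually below $1/k$. The bound therefore does \emph{not} fail for large $k$. It fails at small $k$ when $\alpha$ is near an endpoint. For example, at $k=2$ the claim requires $(1-\alpha L_f)^4\le 1/2$, which is violated for $\alpha L_f=0.1$ or $1.9$ (since $0.6561>0.5$), and that single pair already refutes the literal statement.

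There are also two smaller points in Step 3. First, for $\alpha>1/L_f$ the forward--backward map is averaged, hence nonexpansive, but not firmly nonexpansive in general; nonexpansiveness is all you actually use. Second, the recursion $\delta_{k+1}+a\,\delta_{k+1}^2\le\delta_k$, with $a=\alpha(1-\alpha L_f/2)/\|\mathbf{x}^0-\mathbf{x}^*\|^2$, only yields $\delta_k\le\max\{2/a,\delta_1\}/k$. To get a constant depending only on $\|\mathbf{x}^0-\mathbf{x}^*\|^2$ you also need to control $\delta_1$. For instance, $\delta_1\le\|G_\alpha(\mathbf{x}^0)\|\,\|\mathbf{x}^0-\mathbf{x}^*\|\le\frac{2}{\alpha}\|\mathbf{x}^0-\mathbf{x}^*\|^2$, using $G_\alpha(\mathbf{x}^*)=0$ and that $G_\alpha$ is $\frac{2}{\alpha}$-Lipschitz for $\alpha\le 2/L_f$. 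This gives, e.g., $C(\alpha)=4/(\alpha(2-\alpha L_f))$.
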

Theorem \ref{converg_pgd} can be applied to establish the convergence of both \eqref{eq:PGD} and \eqref{eq:ProjGD}, which are particular cases of \eqref{eq: general pgd} for specific choices of $f$ and $g$. Table \ref{tab:scelte_f_g} summarizes such choices. Note, that when the non-smooth function $g$ reduces to the indicator function of a closed and convex constraint set, PGD becomes nothing but projected gradient descent.

\begin{table}[htbp]
\centering
\caption{Choices of $f$ and $g$ for the models considered in this work.}
\label{tab:scelte_f_g}
\vspace{2mm} 
\begin{tabular}{lll}
\toprule
\textbf{Algorithm} & \textbf{$f$} & \textbf{$g$} \\
\midrule
\textbf{PGD} & $\Phi^{\text{ISM}}(\,\cdot\,; \underline{\mathbf{A}}, \underline{\mathbf{y}}, 
\underline{\mathbf{b}})$ & $\|\cdot\|_1+ \iota_{\geq 0}(\cdot) $ \\
\addlinespace 
\textbf{ProjGD} & $\Phi^{\text{ISM}}(\,\cdot\,; \underline{\mathbf{A}}, \underline{\mathbf{y}}, 
\underline{\mathbf{b}}) + \text{TV}_{\epsilon}$       & $\iota_{\geq 0}(\cdot)$ \\
\bottomrule
\end{tabular}
\end{table}

\subsection{Mirror Descent}



We provide here a brief overview on the convergence results for the MD scheme \eqref{eq:implicit MD for KL} as used in this work. The key to establishing convergence lies in the choice of the stepsize $\alpha$, which, analogously to Proximal Gradient Descent, must be inversely proportional to the generalized Lipschitz constant defined in Definition \ref{NoLip} of the functional considered. Under this condition, convergence results are well-established \cite{nolip, Bolte2018, RelativeSmoothness}. In particular, if $h$ is chosen as in \ref{eq: burg's entropy}, then it holds that \cite[Theorem 1]{nolip}:

\begin{equation}
    F(\mathbf{x}^k) - F (\mathbf{u}) \leq \frac{2L_F}{k} D_{h}(\mathbf{u}, \mathbf{x}^0), \text{ for all } \mathbf{u} \in \operatorname{dom}(h),
\end{equation}

where $\{ \mathbf{x}^k \}_{k \geq 0}$ is the MD sequence \eqref{eq:implicit MD for KL}, $F = \Phi^{\text{ISM}}(\,\cdot\,; \underline{\mathbf{A}}, \underline{\mathbf{y}}, 
\underline{\mathbf{b}}) + R(\cdot)$, with $R$ being one of the two regularization functions considered in Section \ref{sec:reg} and  $L_F$ is the generalized Lipschitz constant of $F$ w.r.t. $h$.


\section{Statistics of zero-truncated Poisson distribution}\label{masked_wh_proof}

We review in this Section the main properties of masked/truncated Poisson process which we employ to apply the Residual Whiteness Principle in Section \ref{sec:masked_wp}.
For that, we follow \cite[Proposition 4.2]{Bevilacqua_2023}
Let $Y \sim \mathcal{P}(\nu)$ be a standard Poisson random variable with mean $\nu > 0$, 
and let $Y_+ \sim \mathcal{P}_+(\nu)$ denote its zero-truncated counterpart, i.e.\ the 
distribution of $Y$ conditioned on $Y > 0$. The probability mass function of $Y_+$ is:
\begin{equation}
    \mathbb{P}_{Y_+}(y) = \frac{1}{1 - e^{-\nu}}\,\mathbb{P}_Y(y) 
    = \frac{1}{1 - e^{-\nu}} \cdot \frac{\nu^y\, e^{-\nu}}{y!}, 
    \qquad y \in \mathbb{N}_0 := \mathbb{N} \setminus \{0\}.
\end{equation}
That is, $\mathbb{P}_{Y_+}(y) = T(\nu)\,\mathbb{P}_Y(y)$, where:
\begin{equation}
    T(\nu) := \frac{1}{1 - e^{-\nu}}, \qquad V(\nu) := \frac{1-(1+\nu)e^{-\nu}}{(1-e^{-\nu})^2}.
\end{equation}

\begin{proposition}
Let $Y \sim \mathcal{P}(\nu)$ and $Y_+ \sim \mathcal{P}_+(\nu)$, with $\nu \in \mathbb{R}_{++}$. 
Then the expected value and variance of $Y_+$ are:
\begin{equation}
    \mathbb{E}[Y_+] = T(\nu)\,\mathbb{E}[Y] = \frac{\nu}{1 - e^{-\nu}}, 
    \label{eq:zt_mean}
\end{equation}
\begin{equation}
    \operatorname{Var}[Y_+] = V(\nu)\operatorname{Var}[Y] 
    = \frac{\nu}{(1-e^{-\nu})^2}\!\left(1 - \frac{1+\nu}{e^\nu}\right).
    \label{eq:zt_var}
\end{equation}
\end{proposition}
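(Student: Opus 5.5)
We compute the first two moments of $Y_+$ directly from its probability mass function $\mathbb{P}_{Y_+}(y)=T(\nu)\,\mathbb{P}_Y(y)$ for $y\geq 1$. The key observation is that the term $y=0$ contributes nothing to either $\mathbb{E}[Y]$ or $\mathbb{E}[Y^2]$. Hence truncating at zero only rescales these raw moments by the constant $T(\nu)$.

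\textbf{Mean.} We have
\[
\mathbb{E}[Y_+]=\sum_{y\geq1} y\,T(\nu)\mathbb{P}_Y(y)=T(\nu)\sum_{y\geq0} y\,\mathbb{P}_Y(y)=T(\nu)\,\nu ,
\]
which is \eqref{eq:zt_mean} because $\mathbb{E}[Y]=\nu$.

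\textbf{Second moment.} In the same way,
\[
\mathbb{E}[Y_+^2]=T(\nu)\,\mathbb{E}[Y^2]=T(\nu)(\nu+\nu^2),
\]
using the Poisson identities $\operatorname{Var}[Y]=\nu$ and $\mathbb{E}[Y^2]=\nu+\nu^2$.

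\textbf{Variance.} Subtracting the squared mean gives
\[
\operatorname{Var}[Y_+]=T(\nu)(\nu+\nu^2)-T(\nu)^2\nu^2 .
\]
We factor out $\nu\,T(\nu)^2=\nu/(1-e^{-\nu})^2$. This leaves the bracket
\[
(1+\nu)(1-e^{-\nu})-\nu = 1-(1+\nu)e^{-\nu}.
\]
Therefore
\[
\operatorname{Var}[Y_+]=\frac{\nu}{(1-e^{-\nu})^2}\bigl(1-(1+\nu)e^{-\nu}\bigr)=V(\nu)\,\nu=V(\nu)\operatorname{Var}[Y],
\]
which is \eqref{eq:zt_var}, since $1-(1+\nu)/e^{\nu}$ equals $1-(1+\nu)e^{-\nu}$.

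The same quantity can be written as $\mu^+(1-\mu^+e^{-\nu})$ with $\mu^+=\nu T(\nu)$, the form used in Section~\ref{sec:masked_wp}. To see this, start from $\mu^+\bigl(1+\nu-\mu^+\bigr)$, which is the variance written as $\mathbb{E}[Y_+^2]-(\mu^+)^2$. Then use the identity
\[
1+\nu-\mu^+ = 1-\mu^+e^{-\nu},
\]
which holds because $\mu^+(1-e^{-\nu})=\nu$.

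All series converge absolutely, so rearranging and dropping the $y=0$ term are justified.
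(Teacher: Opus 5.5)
Your proof is correct and follows the paper's argument. You drop the $y=0$ term so the raw moments scale by $T(\nu)$, then simplify $\operatorname{Var}[Y_+]=T(\nu)\nu(1+\nu)-T(\nu)^2\nu^2$; you factor out $\nu T(\nu)^2$ directly, whereas the paper factors out $\nu T(\nu)$ and invokes $V(\nu)=T(\nu)\bigl(1+\nu-\nu T(\nu)\bigr)$. Your extra check that the variance equals $\mu^+(1-\mu^+e^{-\nu})$ usefully justifies the form used in Section~\ref{sec:masked_wp}, which the paper only asserts.
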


\begin{proof}
Since $\mathbb{P}_{Y_+}(y) = T(\nu)\,\mathbb{P}_Y(y)$ for all $y \in \mathbb{N}_0$, the 
$m$-th order raw moments of $Y_+$ satisfy:
\begin{equation}
    \mathbb{E}[Y_+^m] = \sum_{y=1}^{\infty} y^m\,\mathbb{P}_{Y_+}(y) 
    = T(\nu)\sum_{y=1}^{\infty} y^m\,\mathbb{P}_Y(y) 
    = T(\nu)\sum_{y=0}^{\infty} y^m\,\mathbb{P}_Y(y) 
    = T(\nu)\,\mathbb{E}[Y^m],
\end{equation}
where the third equality holds because the $y=0$ term contributes zero. Setting $m=1$ 
and recalling that $\mathbb{E}[Y] = \nu$:
\begin{equation}
    \mathbb{E}[Y_+] = T(\nu)\,\nu = \frac{\nu}{1 - e^{-\nu}},
\end{equation}
which establishes~\eqref{eq:zt_mean}. Setting $m=2$ and using $\mathbb{E}[Y^2] = 
\nu(1+\nu)$ for $Y\sim\mathcal{P}(\nu)$:
\begin{equation}
    \mathbb{E}[Y_+^2] = T(\nu)\,\mathbb{E}[Y^2] = \frac{\nu(1+\nu)}{1 - e^{-\nu}}.
\end{equation}
The variance of $Y_+$ then follows from $\operatorname{Var}[Y_+] = \mathbb{E}[Y_+^2] - 
(\mathbb{E}[Y_+])^2$:
\begin{align}
    \operatorname{Var}[Y_+] 
    &= T(\nu)\,\mathbb{E}[Y^2] - \bigl(T(\nu)\,\mathbb{E}[Y]\bigr)^2 = T(\nu)\,\nu(1+\nu) - T(\nu)^2\,\nu^2 \notag= T(\nu)\,\nu\bigl(1 + \nu - \nu\,T(\nu)\bigr) \notag\\
    &= T(\nu)\bigl(1 + \nu - \nu\,T(\nu)\bigr)\nu = V(\nu)\operatorname{Var}[Y],
\end{align}
where the last equality uses $\operatorname{Var}[Y] = \nu$ and the identity 
$V(\nu) = T(\nu)(1 + \nu - \nu\,T(\nu))$, which follows directly from the definitions 
of $T$ and $V$. Substituting the explicit form of $T(\nu)$ yields~\eqref{eq:zt_var}.
\end{proof}

In the context of Section~\ref{sec:masked_wp}, we apply this result pixelwise: for each 
$(i,d) \in \mathcal{M}$, we set $\nu = \nu_{i,d}^\lambda$ and identify:
\begin{equation}
    \mu_{i,d}^+ = \mathbb{E}[Y_+] = \frac{\nu_{i,d}^\lambda}{1 - e^{-\nu_{i,d}^\lambda}},
    \qquad
    (\sigma_{i,d}^+)^2 = \operatorname{Var}[Y_+] 
    = \mu_{i,d}^+\!\left(1 - \mu_{i,d}^+\,e^{-\nu_{i,d}^\lambda}\right),
\end{equation}
which are precisely the quantities used to define the masked standardized residual 
$z_{i,d}^+$ in Section~\ref{sec:masked_wp}.

\bibliographystyle{plain}
\bibliography{cas-refs}



\end{document}